\newlist{textenum}{enumerate*}{1}
\setlist[textenum,1]{label=(\roman*)}
\newcommand{\fNFA}{\mathsf{NFA}}
\newcommand{\fOCA}{\mathsf{OCA}}
\newcommand{\fCFG}{\mathsf{CFG}}
\newcommand{\fRBC}[1][OPT]{\ifthenelse{\equal{#1}{OPT}}{\mathsf{RBC}}{\mathsf{RBC}_{#1}}}
\mathchardef\mhyphen="2D
\newcommand{\coNP}{\mathsf{coNP}}
\newcommand{\NL}{\mathsf{NL}}
\newcommand{\coNL}{\mathsf{coNL}}
\newcommand{\cL}{\mathsf{L}}
\newcommand{\NP}{\mathsf{NP}}
\newcommand{\coNEXP}[1][OPT]{\ifthenelse{\equal{#1}{OPT}}{\mathsf{coNEXP}}{\mathsf{co}\mhyphen{}#1\mhyphen\mathsf{NEXP}}}
\newcommand{\Poly}{\mathsf{P}}
\newcommand{\Ideal}{\mathsf{Ideal}}
\newcommand{\PHPitwo}{\Pi_2^\Poly}
\newcommand{\PHSigmatwo}{\Sigma_2^\Poly}
\newcommand{\coNTIME}[1]{\mathsf{coNTIME}(#1)}
\newcommand{\Z}{\mathbb{Z}}
\newcommand{\N}{\mathbb{N}}
\newcommand{\Q}{\mathbb{Q}}
\newcommand{\cH}{\mathcal{H}}
\newcommand{\Powerset}[2][OPT]{\ifthenelse{\equal{#1}{OPT}}{\mathbb{P}(#2)}{\mathbb{P}_{#1}(#2)}}
\newcommand{\LinInd}[1]{\mathbb{I}(#1)}
\newcommand{\eword}{\varepsilon}
\newcommand{\A}{\mathcal{A}}
\newcommand{\B}{\mathcal{B}}
\newcommand{\G}{\mathcal{G}}
\newcommand{\T}{\mathcal{T}}
\newcommand{\langof}[1]{L(#1)}
\newcommand{\C}{\mathcal{C}}
\newcommand{\D}{\mathcal{D}}
\newcommand{\Dclosure}[1]{#1\mathord{\downarrow}}
\newcommand{\IdealUpper}[1]{|#1|_{\mathsf{I}}}
\newcommand{\Elen}[1]{|#1|_{\mathsf{E}}}
\newcommand{\PumpLang}[2]{P(#1,#2)}
\newcommand{\Word}[1]{w_{#1}}
\newcommand{\subword}{\preceq}
\newcommand{\IncProb}[2]{#1\subseteq_{\mathord{\downarrow}}#2}
\newcommand{\EqProb}[2]{#1 =_{\mathord{\downarrow}}#2}
\DeclareMathOperator{\spanop}{span}
\newcommand{\spanof}[1]{\spanop(#1)}
\newcommand{\citewithauthor}[1]{\citeauthor{#1}~\cite{#1}}
\newcommand{\yield}[1]{\mathsf{yield}(#1)}
\newcommand{\grammarstep}[1][YYY]{\ifthenelse{\equal{#1}{YYY}}{\Rightarrow}{\Rightarrow_{#1}}}
\newcommand{\grammarsteps}[1][YYY]{\ifthenelse{\equal{#1}{YYY}}{\Rightarrow^*}{\Rightarrow^*_{#1}}}
\newcommand{\grammarstepsn}[2][YYY]{\ifthenelse{\equal{#1}{YYY}}{\Rightarrow^{#2}}{\Rightarrow^{#2}_{#1}}}
\newcommand{\grammarstepp}[1][YYY]{\ifthenelse{\equal{#1}{YYY}}{\Rightarrow'}{\Rightarrow'_{#1}}}
\newcommand{\grammarstepsp}[1][YYY]{\ifthenelse{\equal{#1}{YYY}}{\Rightarrow'^*}{\Rightarrow'^*_{#1}}}
\newcommand{\grammarsteppp}[1][YYY]{\ifthenelse{\equal{#1}{YYY}}{\Rightarrow''}{\Rightarrow''_{#1}}}
\newcommand{\grammarstepspp}[1][YYY]{\ifthenelse{\equal{#1}{YYY}}{\Rightarrow''^*}{\Rightarrow''^*_{#1}}}
\newcommand{\grammarsteplin}[1][YYY]{\ifthenelse{\equal{#1}{YYY}}{\Rightarrow_{\mathsf{lin}}}{\Rightarrow_{#1,\mathsf{lin}}}}
\newcommand{\grammarstepslin}[1][YYY]{\ifthenelse{\equal{#1}{YYY}}{\Rightarrow_{\mathsf{lin}^*}}{\Rightarrow^*_{#1,\mathsf{lin}}}}
\newcommand{\LowerT}[1]{\Delta(#1)}
\newcommand{\mM}{\mathcal{M}}
\newcommand{\mN}{\mathcal{N}}
\title{The complexity of downward closure comparisons}
\author{Georg Zetzsche\footnote{This work is supported by a fellowship within the Postdoc-Program of the German Academic Exchange Service (DAAD).}}
\affil{LSV, CNRS \& ENS Cachan, Universit\'{e} Paris-Saclay, France \\ \texttt{zetzsche@lsv.fr}}
\authorrunning{G. Zetzsche}
\theoremstyle{plain}
\newtheorem{thm}{Theorem}[section]
\newtheorem{prop}[thm]{Proposition}
\newtheorem{lem}[thm]{Lemma}
\newtheorem{cor}[thm]{Corollary}
\crefname{thm}{Theorem}{Theorems}
\Crefname{thm}{Theorem}{Theorems}
\crefname{prop}{Proposition}{Propositions}
\Crefname{prop}{Proposition}{Propositions}
\crefname{lem}{Lemma}{Lemmas}
\Crefname{lem}{Lemma}{Lemmas}
\crefname{cor}{Torollary}{Corollaries}
\Crefname{cor}{Corollary}{Corollaries}
\begin{document}
\maketitle
\begin{abstract}
The downward closure of a language is the set of all (not necessarily
contiguous) subwords of its members. It is well-known that the downward
closure of every language is regular. Moreover, recent results show
that downward closures are computable for quite powerful system models.

One advantage of abstracting a language by its downward closure is that then
equivalence and inclusion become decidable. In this work, we study the
complexity of these two problems. More precisely, we consider the following
decision problems: Given languages $K$ and $L$ from classes $\C$ and $\D$,
respectively, does the downward closure of $K$ include (equal) that of $L$?

These problems are investigated for finite automata, one-counter automata,
context-free grammars, and reversal-bounded counter automata. For each
combination, we prove a completeness result either for fixed or for arbitrary
alphabets.  Moreover, for Petri net languages, we show that both problems are
Ackermann-hard and for higher-order pushdown automata of order~$k$, we prove
hardness for complements of nondeterministic $k$-fold exponential time.
\end{abstract}
\section{Introduction}
The downward closure of a language is the set of (not necessarily contiguous)
subwords of its members. It is a well-known result
of~\citewithauthor{Haines1969} that the downward closure of \emph{every}
language is regular. Of course, it is not always possible to compute the
downward closure of a given language, but oftentimes it is. For example, it has
been shown to be computable for such powerful models as \emph{Petri net languages}
by~\citewithauthor{HabermehlMeyerWimmel2010} and \emph{higher-order pushdown automata}
by~\citewithauthor{HagueKochemsOng2016}. A sufficient condition for
computability can be found in~\cite{Zetzsche2015b}.

Moreover, not only are downward closures often computable, they are also a
meaningful abstraction of infinite-state systems. In a complex system, one can
abstract a component by the downward closure of the messages it sends to its
environment. This corresponds to the assumption that messages can be dropped on
the way. Furthermore, recent work
of~\citewithauthor{LaTorreMuschollWalukiewicz2015} shows that among other mild
conditions, computing downward closures is sufficient for verifying safety
conditions of parametrized asynchronous shared-memory systems.

The advantage of having an abstraction of an infinite-state systems as regular
languages is that the latter offer an abundance of methods for analysis.
An important example is deciding behavioral equivalence or inclusion. This is
notoriously hard to do and for nondeterministic infinite-state systems,
language equivalence and inclusion are usually undecidable. Using downward
closures, such behavioral comparisons can be made in an approximative manner.

Despite these facts, results about the complexity of deciding whether the
downward closure of one language includes or equals that of another mainly
considered regular languages. \citewithauthor{BachmeierLuttenbergerSchlund2015}
have shown that the equivalence problem for downward closures of two given NFAs
is $\coNP$-complete.  \citewithauthor{KarandikarNiewerthSchnoebelen2016}
strengthened $\coNP$-hardness to the case of DFAs over binary alphabets and
proved $\coNP$-completeness for the inclusion variant. They also obtained
$\NL$-completeness of inclusion in the case of NFAs over a unary alphabet.
Together with exponential-time downward closure
constructions~\cite{vanLeeuwen1978,Courcelle1991,Okhotin2010,GruberHolzerKutrib2009,BachmeierLuttenbergerSchlund2015},
these results imply that equivalence and inclusion are in $\coNEXP$ for
context-free grammars. \citewithauthor{RampersadShallitXu2012} proved that one
can decide in linear time whether the downward closure of a given NFA contains
all words. Subsequently, \citewithauthor{KarandikarNiewerthSchnoebelen2016}
showed that this problem is $\NL$-complete. Similar questions have been studied 
for upward
closures~\cite{BachmeierLuttenbergerSchlund2015,KarandikarNiewerthSchnoebelen2016}.

Previous work on downward closures of infinite-state systems has mainly focused
on mere
computability~\cite{Courcelle1991,vanLeeuwen1978,HabermehlMeyerWimmel2010,HagueKochemsOng2016,
Zetzsche2015b,Zetzsche2015a,Abdulla2004,AbdullaBoassonBouajjani2001} or on
descriptional
complexity~\cite{AtChHoKuSaZe2015,GruberHolzerKutrib2007,GruberHolzerKutrib2009,Okhotin2010,KarandikarNiewerthSchnoebelen2016}.
This work studies the complexity of the inclusion and the equivalence problem
of downward closures between some prominent types of system models---finite
automata, one-counter automata, reversal-bounded counter
automata~\cite{Ibarra1978}, and context-free grammars. More precisely, we are
interested in the following questions: For two system models $\mM$ and $\mN$
and languages $L$ and $K$ generated by some device in $\mM$ and $\mN$,
respectively, what is the complexity of \begin{textenum}\item deciding whether
$\Dclosure{K}\subseteq \Dclosure{L}$ (\emph{downward closure inclusion
problem}) or \item deciding whether $\Dclosure{K}=\Dclosure{L}$ (\emph{downward
closure equivalence problem})?\end{textenum}

\subparagraph{Contribution} We determine the complexity of the downward closure
inclusion problem and the downward closure equivalence problem among finite automata,
one-counter automata, reversal-bounded counter automata (either with a fixed
number of counters and reversals or without), and context-free grammars. 

For the inclusion problem, we prove completeness results in all cases except
for two.  The complexities range from $\coNP$ over $\PHPitwo$ to $\coNEXP$ (see
\cref{tableresults}). The two cases for which we provide no completeness
compare context-free grammars or general reversal-bounded counter automata on
the one side with reversal-bounded counter automata with a fixed number of
counters and reversals on the other side.  However, we prove that both of these
problems are $\coNP$-complete for each fixed input alphabet.
For the equivalence problem, the situation is similar. We prove completeness for
each of the cases except for the combination above. Again, fixing the
alphabet leads to $\coNP$-completeness.

The tools developed to achieve these results fall into three categories.
First, there are several generic results guaranteeing small witnesses to yield
upper bounds.   Second, we prove model-specific results about downward closures
that yield the upper bounds in each case.  Third, we have a general method to
prove lower bounds for downward closure comparisons.  In fact, it applies to
more models than the above: We prove that for Petri net languages, the two
comparison problems are Ackermann-hard. For higher-order pushdown automata of
order~$k$, we show $\coNEXP[k]$-hardness. 

\subparagraph{Related work} Another abstraction of formal languages is the
well-known Parikh image~\cite{Parikh1966}. The Parikh image of a language
$L\subseteq X^*$ contains for each word $w\in L$ a vector in $\N^{|X|}$ that
counts the number of occurrences of each letter. For some language classes, it
is known that their Parikh image is effectively semilinear, which implies
decidability of the inclusion and equivalence problem for Parikh images. The
investigation of these problems' complexity has been initiated
by~\citewithauthor{Huynh1985} in 1985, who showed that this problem is
$\PHPitwo$-hard and in $\coNEXP$ for regular and context-free languages.
\citeauthor{KopczynskiTo2010}~\cite{Kopczynski2015,KopczynskiTo2010} have then
shown that these problems are $\PHPitwo$-complete for fixed alphabets.  Only
very recently, \citewithauthor{HaaseHofman2016} have shown that the case of
general alphabets is $\coNEXP$-complete.

\section{Concepts and Results}\label{results}
\newcommand{\BLS}{\cite{BachmeierLuttenbergerSchlund2015,KarandikarNiewerthSchnoebelen2016}}
\newcommand{\BLSA}{\cite{BachmeierLuttenbergerSchlund2015,KarandikarNiewerthSchnoebelen2016,AtChHoKuSaZe2015}}
\begin{table}[t]
\begin{center}
\begin{tabular}{|c|llllll|}\hline
                        & $\Ideal$      & $\fNFA$       & $\fOCA$       & $\fRBC[k,r]$  	& $\fCFG$       & $\fRBC$       \\ \hline
$\Ideal$                & $\in\cL$      & $\NL$         & $\NL$         & $\NL$         	& $\Poly$       & $\NP$         \\
$\fNFA$                 & $\NL$         & $\coNP$~\BLS  & $\coNP$~\BLSA & $\coNP$       	& $\coNP$       & $\PHPitwo$    \\
$\fOCA$                 & $\NL$         & $\coNP$~\BLSA & $\coNP$~\BLSA & $\coNP$       	& $\coNP$       & $\PHPitwo$    \\
$\fRBC[k,r]$            & $\NL$         & $\coNP$       & $\coNP$       & $\coNP$       	& $\coNP$       & $\PHPitwo$    \\
$\fCFG$                 & $\Poly$       & $\coNP$       & $\coNP$       & $\coNP^\dagger$       & $\coNEXP$     & $\coNEXP$     \\
$\fRBC$                 & $\coNP$       & $\coNP$       & $\coNP$       & $\coNP^\dagger$       & $\coNEXP$     & $\coNEXP$     \\ \hline
\end{tabular}
\end{center}
\caption{Complexity of the inclusion problem. The entry in row $\mM$ and column
$\mN$ is the complexity of $\IncProb{\mM}{\mN}$. Except in the case
$\IncProb{\Ideal}{\Ideal}$, all entries indicate completeness.  A $\dagger$
means that the entry refers to the fixed alphabet case (for at least two letters).}
\label{tableresults}
\end{table}

If $X$ is an alphabet, $X^*$ ($X^{\le n}$) denotes the set of all words (of
length $\le n$) over $X$.  The empty word is denoted by $\eword\in X^*$.  For
words $u,v\in X^*$, we write $u\preceq v$ if $u=u_1\cdots u_n$ and
$v=v_0u_1v_1\cdots u_nv_n$ for some $u_1,\ldots,u_n,v_0,\ldots,v_n\in X^*$.  It
is well-known that $\preceq$ is a well-quasi-order on $X^*$ and that therefore
the \emph{downward closure} $\Dclosure{L}=\{u\in X^* \mid \exists v\in L\colon
u\preceq v\}$ is regular for every $L\subseteq X^*$~\cite{Haines1969}. An
\emph{ideal} is a set of the form $Y_0^*\{x_1,\eword\}Y_1^*\cdots
\{x_n,\eword\}Y_n^*$, where $Y_0,\ldots,Y_n$ are alphabets and $x_1,\ldots,x_n$
are letters.  We will make heavy use of the fact that every downward closed
language can be written as a finite union of ideals, which was first discovered
by~\citewithauthor{Jullien1969}. By $\Powerset{S}$, we denote the powerset of the set $S$.

A \emph{finite automaton} is a tuple $\A=(Q,X,\Delta,q_0,Q_f)$, where $Q$ is a
finite set of \emph{states}, $X$ is its input alphabet, $\Delta\subseteq
Q\times X^*\times Q$ is a finite set of \emph{edges}, $q_0\in Q$ is its
\emph{initial state}, and $Q_f\subseteq Q$ is the set of its \emph{final
states}. The language accepted by $\A$ is denoted $\langof{\A}$.  Sometimes, we
write $|\A|$ for the number of states of $\A$.

A \emph{context-free grammar} is a tuple
$\G=(N,T,P,S)$ where $N$ and $T$ are pairwise disjoint alphabets, whose members are called the
\emph{nonterminals} and \emph{terminals}, respectively. $S\in N$ is the
\emph{start symbol} and $P$ is the finite set of \emph{productions} of the form
$A\to w$ with $A\in N$ and $w\in T^*$.  The language generated by $\G$ is
defined as usual.

\subparagraph{One-counter Automata} A \emph{one-counter automaton (OCA)} is a
nondeterministic finite automaton that has access to one counter that assumes
natural numbers as values. The possible operations are \emph{increment},
\emph{decrement}, and \emph{test for zero}. We will not require a formal
definition, since in fact, all we need is the well-known fact that membership
and emptiness are $\NL$-complete and the recent result that given an OCA $\A$,
one can compute in polynomial time an NFA $\B$ with
$\langof{\B}=\Dclosure{\langof{\A}}$~\cite{AtChHoKuSaZe2015}.

\subparagraph{Reversal-bounded counter automata} Intuitively, an
\emph{$r$-reversal-bounded $k$-counter automaton}~\cite{Ibarra1978} (short
$(k,r)$-RBCA) is a nondeterministic finite automaton with $k$ counters that can
store natural numbers. For each counter, it has operations \emph{increment},
\emph{decrement}, and \emph{zero test}.  Moreover, a computation is only valid
if each counter \emph{reverses} at most $r$ times. Here, a computation
\emph{reverses} a counter $c$ if on $c$, it first executes a sequence of
increments and then a decrement command or vice versa. See \cite{Ibarra1978}
for details.

Instead of working directly with RBCA, we will work here with the model of
\emph{blind counter automata}~\cite{Greibach1978}. It is not as well-known as
RBCA, but simpler and directly amenable to linear algebraic methods.  A
\emph{blind $k$-counter automaton} is a tuple $\A=(Q,X,q_0,\Delta,Q_f)$, where $Q$,
$X$, $q_0$, and $Q_f$ are defined as in NFAs, but $\Delta$ is a finite subset of $Q\times
(X\cup \{\eword\}) \times \{-1,0,1\}^k\times Q$.  A \emph{walk} is a word
$\delta_1\cdots\delta_m\in\Delta^*$ where $\delta_i=(p_i, x_i, d_i, p'_i)$ for
$i\in[1,m]$ and $p'_j=p_{j+1}$ for $j\in[1,m-1]$. The \emph{effect} of the walk
is $d_1+\cdots+d_m$.  Its \emph{input} is $x_1\cdots x_m\in X^*$.  If the walk
has effect $0$ and $p_0=q_0$ and $p_m\in Q_f$, then the walk is
\emph{accepting}.  The \emph{language accepted by $\A$} is the set of all
inputs of accepting walks.

Using blind counter automata is justified because to each $(k,r)$-RBCA, one can
construct in logarithmic space a language-equivalent
$(kr,1)$-RBCA~\cite{BakerBook1974}, which is essentially a blind $kr$-counter
automaton. On the other hand, every blind $k$-counter automaton can be turned
in logarithmic space into a $(k+1,1)$-RBCA~\cite{JantzenKurganskyy2003}.
Hence, decision problems about $(k,r)$-RBCA for fixed $k$ and $r$ correspond to
problems about blind $k$-counter automata for fixed $k$.

In the following, by a \emph{model}, we mean a way of specifying a language.
In order to succinctly refer to the different decision problems, we use symbols
for the models above. By $\Ideal$, $\fNFA$, $\fOCA$, $\fRBC[k,r]$, $\fRBC$, $\fCFG$, we
mean ideals, finite automata, OCA, RBCA with a fixed number of counters and
reversals, general RBCA, and context-free grammars, respectively. Then, for
$\mM,\mN\in\{\Ideal,\fNFA,\fOCA, \fRBC[k,r],\fRBC,\fCFG\}$, we consider the following
problems. In the \emph{downward closure inclusion problem}
$\IncProb{\mM}{\mN}$, we are given a language $K$ in $\mM$ and a language $L$ in $\mN$
and are asked whether $\Dclosure{K}\subseteq\Dclosure{L}$.
For the \emph{downward closure equivalence problem} $\EqProb{\mM}{\mN}$, the input is the same,
but we are asked whether $\Dclosure{K}=\Dclosure{L}$.

\subparagraph{Results} The complexity results for the inclusion problem are
summarized in \cref{tableresults}. For the equivalence problem, we will see
that every hardness result for $\IncProb{\mM}{\mN}$ also holds for
$\EqProb{\mM}{\mN}$.  Since for non-ideal models, the appearing complexity
classes are pairwise comparable, this implies that the complexity for
$\EqProb{\mM}{\mN}$ is then the harder of the two classes for
$\IncProb{\mM}{\mN}$ and $\IncProb{\mN}{\mM}$. For example, the problem
$\EqProb{\fNFA}{\fRBC}$ is $\PHPitwo$-complete and for fixed alphabets,
$\EqProb{\fRBC[k,r]}{\fCFG}$ is $\coNP$-complete.

\section{Ideals and Witnesses}\label{witnesses}
Our algorithms for inclusion use three types of witnesses.  The first type is a
slight variation of a result of \cite{BachmeierLuttenbergerSchlund2015}.  The
latter authors were interested in equivalence problems, which caused their
bound to depend on both input languages. The proof is essentially the same.
\begin{prop}[Short witness]\label{shortwitness}
If $\A$ is an NFA and $\Dclosure{K}\not\subseteq\Dclosure{\langof{\A}}$, then
there exists a $w\in\Dclosure{K}\setminus\Dclosure{\langof{\A}}$ with $|w|\le
|\A|+1$.
\end{prop}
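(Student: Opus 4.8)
The plan is to reduce an arbitrary counterexample to a $\subword$-minimal one and then to bound the length of $\subword$-minimal words lying outside $\Dclosure{\langof{\A}}$ by a monotone subset construction on an automaton for $\Dclosure{\langof{\A}}$ that still has only $|\A|$ states.

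First I would note that $\Dclosure{\langof{\A}}$ is recognised by an NFA $\B$ with the same states as $\A$: into $\B$ put, for every edge $(p,u,q)$ of $\A$ and every $u'\subword u$, the edge $(p,u',q)$. Since a subword of a concatenation $u_1\cdots u_k$ splits into subwords of the factors, one checks $\langof{\B}=\Dclosure{\langof{\A}}$, and, more than that, for each state $q$ the set $\langof{\B_{q_0\to q}}$ of inputs labelling a run from $q_0$ to $q$ in $\B$ is itself downward closed. The upshot is that the reachable-state map is antitone: for a word $v$ put $S(v)=\{\,q\mid v\in\langof{\B_{q_0\to q}}\,\}$, the set of states of $\B$ reachable from $q_0$ on input $v$; then $v'\subword v$ implies $S(v')\supseteq S(v)$, and $v\in\Dclosure{\langof{\A}}$ iff $S(v)$ meets the final states.

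Now take any $v\in\Dclosure{K}\setminus\Dclosure{\langof{\A}}$. Because $\Dclosure{K}$ is downward closed, every subword of $v$ is still in $\Dclosure{K}$, so we may replace $v$ by a $\subword$-minimal word $w=a_1\cdots a_m$ that lies in $\Dclosure{K}$ but not in $\Dclosure{\langof{\A}}$; in particular every proper subword of $w$ is in $\Dclosure{\langof{\A}}$. By antitonicity the sequence $S(\eword)\supseteq S(a_1)\supseteq S(a_1a_2)\supseteq\cdots\supseteq S(a_1\cdots a_m)$ is non-increasing, and I claim no two consecutive members are equal: from $S(a_1\cdots a_i)=S(a_1\cdots a_{i+1})$ one gets $S(a_1\cdots a_i a_{i+2}\cdots a_m)=S(a_1\cdots a_m)$, so $a_1\cdots a_i a_{i+2}\cdots a_m$ is a proper subword of $w$ that is in $\Dclosure{K}$ but still not in $\Dclosure{\langof{\A}}$, contradicting minimality. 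Hence the $m+1$ sets in the chain are pairwise distinct; a strictly decreasing chain in $\Powerset{Q}$ has at most $|Q|+1$ members, so $m\le|Q|=|\A|$, and $w$ is a witness of length $\le|\A|\le|\A|+1$.

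The step that genuinely requires work is the implication ``a repeated set in the chain lets us delete the letter $a_{i+1}$''. If the edges of $\A$ carry multi-letter labels, the cut between $a_1\cdots a_i$ and $a_{i+2}\cdots a_m$ may land strictly inside an edge used by a run of $\B$, so a run cannot simply be split there and the identity $S(a_1\cdots a_i a_{i+2}\cdots a_m)=S(a_1\cdots a_m)$ is not immediate; this is precisely the situation that the closure of $\B$'s edges under subwords is designed to accommodate, by allowing a run to be rerouted around the straddling edge. I expect this bookkeeping to be the main obstacle, but a mild one: for NFAs whose edges are single letters or $\eword$ it disappears entirely and the argument even yields $|w|\le|\A|$. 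Everything else is just the pigeonhole on a decreasing chain of subsets of the state set, driven by the single observation that the downward-closure automaton retains an $\eword$-shadow of every transition.
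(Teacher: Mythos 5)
Your argument is essentially the paper's: an antitone powerset construction for $\Dclosure{\langof{\A}}$, a non-increasing chain of reachable-state sets along a length-minimal (equivalently $\subword$-minimal) counterexample, and a splice that deletes the letter between two equal members of the chain. For NFAs whose edges carry a single letter or $\eword$, your write-up is complete and correct, and the strictness of the chain indeed gives the marginally sharper bound $|w|\le|\A|$. The paper packages the same idea slightly differently -- it builds a deterministic automaton on $\Powerset{Q}$ whose transition on $x$ is ``states reachable via a path on which $x$ occurs'', so that compositionality of the state map is built in and the splice between repeated states $P_i=P_j$ is immediate -- but the combinatorial content is identical.

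The step you defer, however, is a genuine gap and not mild bookkeeping. When the cut falls inside an edge $(p,u,q')$ whose label absorbs both $a_i$ and $a_{i+2}$, the hypothesis $S(a_1\cdots a_i)=S(a_1\cdots a_{i+1})$ only supplies an alternative run \emph{from $q_0$ to the cut point}; it gives you no way to insert $a_{i+1}$ \emph{between two letters consumed by the same edge}, and the $\eword$-shadow edges do not help, since an edge of $\B$ cannot be split at an interior position of its label. In fact no repair is possible under the paper's liberal definition $\Delta\subseteq Q\times X^*\times Q$: the two-state automaton with the single edge $(q_0,a^n,q_f)$ and $q_f$ final has $\Dclosure{\langof{\A}}=\{\eword,a,\dots,a^n\}$, so the shortest witness against $K=\{a^{n+1}\}$ has length $n+1$, far exceeding $|\A|+1=3$. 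So the proposition (and the paper's own DFA, whose transition relation has exactly the same straddling problem and even starts from the wrong initial set) must be read for letter-labelled NFAs; the correct move for word-labelled edges is to normalize first, accepting that the bound becomes the number of states of the normalized automaton rather than $|\A|+1$. You correctly isolated the dangerous case, but the resolution is to restrict the statement, not to reroute runs.
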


The other types of witnesses strongly rely on ideals, which requires some
notation. An ideal is a product $I=Y_0^* \{x_1,\eword\} Y_1^* \cdots
\{x_n,\eword\} Y_n^*$ where the $Y_i$ are alphabets and the $x_i$ are letters.
Its \emph{length} $\IdealUpper{I}$ is the smallest $n$ such that $I$ can be
written in this form. Since every downward closed language can be written as a
finite union of ideals, we can extend this definition to languages:
$\IdealUpper{L}$ is the smallest $n$ such that $\Dclosure{L}$ is a union of
ideals of length $\le n$. 

Sometimes, it will be convenient to work with a different length measure of
ideals.  An \emph{ideal expression (of length $n$)} is a product $L_1\cdots
L_n$, where each $L_i$ is of the form $Y^*$ or $\{x,\eword\}$, where $Y$ is an
alphabet and $x$ is a letter. Note that $Y^*=Y^*\{x,\eword\}$ if $x\in Y$ and
$\{x,\eword\}=\emptyset^*\{x,\eword\}$.  Therefore, an ideal expression of
length $n$ defines an ideal of length $\le n$. In analogy to
$\IdealUpper{\cdot}$, for a language $L$, we define its \emph{expression
length} $\Elen{L}$ to be the smallest $n$ such that $\Dclosure{L}$ can be
written as a finite union of ideal expressions of length $\le n$.  The expression
length has the advantage of being subadditive: For languages $K,L$ we have
$\Elen{KL}\le \Elen{K}+\Elen{L}$. Moreover, we have $\IdealUpper{L}\le
\Elen{L}\le 2\IdealUpper{L}+1$.

The measure $\IdealUpper{\cdot}$ turns out to be instrumental for
the inclusion problem. Note that $\Dclosure{K}\not\subseteq\Dclosure{L}$ if and
only if there is an ideal $I\subseteq \Dclosure{K}$ of length
$\le\IdealUpper{K}$ with $I\not\subseteq\Dclosure{L}$.  We can therefore guess
ideals and check inclusion for them. From now on, we assume alphabets to come
linearly ordered. This means for every alphabet $Y$, there is a canonical word
$w_Y$ in which every letter from $Y$ occurs exactly once.
\begin{prop}[Ideal witness]\label{idealwitness}
Let $I=Y_0^* \{x_1,\eword\} Y_1^* \cdots \{x_n,\eword\} Y_n^*$. Then the following are equivalent:
\begin{textenum}
\item\label{iwit:ideal} $I\subseteq \Dclosure{L}$.
\item\label{iwit:every} $\Word{Y_0}^m x_1 \Word{Y_1}^m \cdots x_n \Word{Y_n}^m\in \Dclosure{L}$ for every $m\ge\IdealUpper{L}+1$.
\item\label{iwit:some} $\Word{Y_0}^m x_1 \Word{Y_1}^m \cdots x_n \Word{Y_n}^m\in \Dclosure{L}$ for some $m\ge \IdealUpper{L}+1$.
\end{textenum}
\end{prop}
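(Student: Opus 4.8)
The plan is to establish the cycle of implications \ref{iwit:ideal} $\Rightarrow$ \ref{iwit:every} $\Rightarrow$ \ref{iwit:some} $\Rightarrow$ \ref{iwit:ideal}. The middle implication is trivial (once we know such $m$ exists, which it does since $\IdealUpper{L}$ is finite). For \ref{iwit:ideal} $\Rightarrow$ \ref{iwit:every}, note that for any $m$, the word $\Word{Y_0}^m x_1 \Word{Y_1}^m \cdots x_n \Word{Y_n}^m$ lies in $I$: each block $\Word{Y_i}^m$ is a word over $Y_i$, hence in $Y_i^*$, and each $x_i$ matches the factor $\{x_i,\eword\}$. So if $I \subseteq \Dclosure{L}$, this word is in $\Dclosure{L}$, giving \ref{iwit:every} for \emph{every} $m$ (in particular for every $m \ge \IdealUpper{L}+1$).

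The real content is \ref{iwit:some} $\Rightarrow$ \ref{iwit:ideal}. Fix $m \ge \IdealUpper{L}+1$ with $w := \Word{Y_0}^m x_1 \Word{Y_1}^m \cdots x_n \Word{Y_n}^m \in \Dclosure{L}$. Write $\Dclosure{L}$ as a finite union of ideals $J_1, \dots, J_t$, each of length $\le \IdealUpper{L}$. Then $w \in J$ for some $J = Z_0^* \{y_1,\eword\} Z_1^* \cdots \{y_\ell,\eword\} Z_\ell^*$ with $\ell \le \IdealUpper{L} < m$. The goal is to show $I \subseteq J$, which suffices since $J \subseteq \Dclosure{L}$. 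The key combinatorial step: because $w \in J$, there is a factorization of $w$ witnessing membership, and I would argue that for each $i \in [0,n]$, the entire block $\Word{Y_i}^m$ must be "absorbed" into a single starred factor $Z_j^*$ of $J$ — because $\Word{Y_i}^m$ contains $m > \ell$ disjoint copies of $\Word{Y_i}$, and the factorization crosses at most $\ell$ of the "letter slots" $\{y_j,\eword\}$, so by pigeonhole one full copy of $\Word{Y_i}$ lands strictly inside some $Z_j^*$; since $\Word{Y_i}$ uses every letter of $Y_i$, this forces $Y_i \subseteq Z_j$. Moreover the relative order of these absorbing indices $j$ across $i = 0, 1, \dots, n$ is nondecreasing and compatible with where the letters $x_1, \dots, x_n$ get mapped. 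From $Y_i \subseteq Z_j$ one gets $Y_i^* \subseteq Z_j^*$, and an arbitrary element of $I$ can then be threaded through $J$ along the same scheme, yielding $I \subseteq J$.

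I expect the main obstacle to be making the pigeonhole/threading argument fully rigorous: one must carefully track a witnessing factorization of $w$ against the factors of $J$, handle the interleaving of the $x_i$'s with the blocks, and verify monotonicity of the assignment of blocks to the $Z_j^*$'s so that the embedding of a generic word from $I$ into $J$ respects order. A clean way to organize this is to induct on $n$ (the length of $I$): strip off the last factor $Y_n^*$, locate within the witnessing factorization of $w$ the first position where the final block $\Word{Y_n}^m$ begins to be consumed, split $J$ accordingly into a prefix part that handles $\Word{Y_0}^m x_1 \cdots x_n$ and a suffix part whose starred factors cover $Y_n$, and apply the induction hypothesis to the prefix. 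The base case $n = 0$ is exactly the statement that if $\Word{Y_0}^m \in J$ with $m > \IdealUpper{L} \ge \ell$, then $Y_0^* \subseteq J$, which follows from the pigeonhole observation above.
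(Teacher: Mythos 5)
Your plan is correct, and the easy directions plus the reduction of \labelcref{iwit:some}$\Rightarrow$\labelcref{iwit:ideal} to a single ideal $J$ of $\Dclosure{L}$ of length $\ell\le\IdealUpper{L}<m$ match the paper exactly. Where you diverge is in how the pigeonhole is executed: the paper takes an NFA for $J$ with at most $\IdealUpper{L}+1$ states, observes that in its run on the witness word some nonempty power $\Word{Y_i}^{k_i}$ of each block must label a cycle, and pumps these cycles; every word of $I$ is then a subword of some pumped word, and $I\subseteq J$ follows since $J$ is downward closed. You instead argue directly on a factorization of the witness against the ideal expression of $J$: at most $\ell$ of the $m$ disjoint copies of $\Word{Y_i}$ can meet one of the $\ell$ separator locations, so some copy sits inside a single $Z_j^*$, forcing $Y_i\subseteq Z_j$. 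Both arguments rest on the same counting ($m\ge\IdealUpper{L}+1$ beats the length of $J$), but the automaton run in the paper's version handles the order bookkeeping for free --- pumping a cycle manifestly preserves membership in $J$ --- whereas your version must separately establish the nondecreasing assignment of blocks and of the letters $x_1,\ldots,x_n$ to factors of $J$ and then thread a generic element of $I$ through it. You correctly identify this threading as the main thing to make rigorous, and your proposed induction on $n$ would discharge it; the trade-off is that your route avoids invoking the ideal-to-NFA translation (\cref{idealdfa}) at the cost of a longer combinatorial verification.
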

A word of the form $\Word{Y_0}^m x_1 \Word{Y_1}^m \cdots x_n \Word{Y_n}^m\in
\Dclosure{L}$ with $m\ge \IdealUpper{L}+1$ is therefore called an \emph{ideal
witness for $I$ and $L$}. The proof of \cref{idealwitness} is a simple pumping
argument based on the fact that an ideal of length $\le m$ admits an NFA with
$\le m+1$ states.  Ideal witnesses are useful when we have a small bound on
$\IdealUpper{K}$ and $\IdealUpper{L}$ but only a large bound on the NFA size of
$\Dclosure{L}$. Observe that putting a bound on $\IdealUpper{L}$ amounts to
proving a pumping lemma: We have $\IdealUpper{L}\le n$ if and only if
for every $w\in L$, there is an ideal $I$ with $\IdealUpper{I}\le n$ and $x\in
I\subseteq\Dclosure{L}$. 

However even if, say,  $\IdealUpper{K}$ is polynomial and
$\IdealUpper{L}$ is exponential, ideal witnesses can be stored succinctly in
polynomial space, by keeping a binary representation of the power $m$. For
instance, this will be used in the case $\IncProb{\fNFA}{\fRBC}$.

Sometimes, we have a small bound on $\IdealUpper{L}$, but $\IdealUpper{K}$ may
be large.  Then, ideal witnesses are too large to achieve an optimal algorithm.
In these situations, we can guarantee smaller witnesses if we fix
the alphabet.
\begin{prop}[Small alphabet witness]\label{alphabetwitness}
Let $K,L\subseteq X^*$.
If $\Dclosure{K}\not\subseteq\Dclosure{L}$, then there exists a $w\in
\Dclosure{K}\setminus \Dclosure{L}$ with $|w|\le |X|\cdot (\IdealUpper{L}+1)^{|X|}$.
\end{prop}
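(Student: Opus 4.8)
The plan is to use the fact that $\Dclosure{L}$ is a finite union of ideals, each of length $\le \IdealUpper{L}$, and hence is accepted by an NFA whose size is bounded in terms of $|X|$ and $\IdealUpper{L}$. Concretely, an ideal $I = Y_0^*\{x_1,\eword\}Y_1^*\cdots\{x_n,\eword\}Y_n^*$ of length $n \le \IdealUpper{L}$ is recognized by a DFA with at most $n+1$ states (read the $x_i$'s in order, looping on letters of the current $Y_i$). Taking a disjoint union over all ideals in the decomposition of $\Dclosure{L}$ and then determinizing, one obtains a DFA for $\Dclosure{L}$; the number of ideals in the decomposition over a fixed alphabet $X$ with length bound $m := \IdealUpper{L}$ is at most something like $(2^{|X|})^{m+1}$ (each ideal is a choice of $\le m+1$ alphabets $Y_i \subseteq X$ together with $\le m$ letters $x_i$), so the product DFA has at most $(m+1)^{(2^{|X|})^{m+1}}$ states, which is doubly exponential in $|X|$ — far too large. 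So the naive NFA-size route gives the wrong bound.

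Instead I would work directly with the ideal decomposition and argue via a counting/pigeonhole argument on words. Suppose $\Dclosure{K}\not\subseteq\Dclosure{L}$; then there is some ideal $J \subseteq \Dclosure{K}$ with $J\not\subseteq \Dclosure{L}$, and by the characterization right before the proposition it suffices to find a short word in $J \setminus \Dclosure{L}$. Write $\Dclosure{L} = I_1 \cup \cdots \cup I_t$ with each $I_j$ an ideal of length $\le m := \IdealUpper{L}$. Now consider a minimal-length word $w \in J \setminus \Dclosure{L}$ (which exists since $J \not\subseteq \Dclosure{L}$, and $J$ is itself downward closed, so minimality in $J\setminus\Dclosure L$ is well-defined). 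I claim $|w| \le |X|\cdot(m+1)^{|X|}$. The key observation is that membership of $w$ in an ideal $I_j = Z_0^*\{y_1,\eword\}Z_1^*\cdots\{y_s,\eword\}Z_s^*$ is witnessed by a run of the $(s+1)$-state DFA for $I_j$; for $w \notin I_j$ this run gets "stuck", and for a carefully chosen short prefix it is already stuck. Running all $t$ DFAs in parallel on $w$ gives a product automaton, but I want to avoid the size blowup, so the real content is a direct pumping argument: if $w$ is too long, some structure repeats and I can delete a factor of $w$ to get a shorter word $w'$ which is still in $J$ (as $J$ is an ideal, deleting factors stays inside — more precisely, $w' \preceq w$ and $J$ downward closed gives $w' \in J$) but also still outside every $I_j$, contradicting minimality.

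The precise pumping mechanism is the crux and the main obstacle. Here is the shape I expect it to take. Consider the word $w = a_1 a_2 \cdots a_\ell$ with $\ell > |X|\cdot(m+1)^{|X|}$. For each ideal $I_j$ in the decomposition, fix the canonical greedy parsing of $w$ against $I_j$: read $w$ left to right, staying in block $Z_0^*$ as long as letters lie in $Z_0$, then consuming $\{y_1,\eword\}$, etc.; since $w \notin I_j$ this parsing fails, and it fails at some position $p_j \le \ell$, after which the remaining suffix cannot be absorbed. For positions before $\min_j p_j$, every $I_j$-parse is still "alive" and in some block index $b_j \in \{0,1,\ldots,s_j\} \subseteq \{0,\ldots,m\}$. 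The vector of block indices $(b_1,\ldots,b_t)$ lives in a set of size $\le (m+1)^t$, and along the run it is nondecreasing componentwise, so it changes at most $t\cdot m$ times. Between two consecutive "change" positions the run of $w$ consists of letters that all stay within the current $Z$-alphabets of all the live parses. Within such a stretch, if the same letter $a \in X$ occurs twice (at positions $i < i'$ with no block-index change in between and $a$ staying in the live alphabets throughout), then deleting the factor $a_{i+1}\cdots a_{i'}$ yields $w' \prec w$ that is still parsed identically by every live $I_j$ up to that point and still fails each $I_j$ afterwards (the tail is unchanged and the reached block indices are unchanged); hence $w' \in J \setminus \Dclosure{L}$, contradicting minimality of $w$. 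Therefore each maximal no-change stretch has length $\le |X|$, and there are $\le t\cdot m + 1$ stretches before the first failure, plus a bounded tail — giving $\ell \le |X|\cdot(t\cdot m + O(1))$. The remaining issue is to bound $t$: a priori $t$ is huge, but one can always choose the ideal decomposition of $\Dclosure{L}$ to be the \emph{irredundant} one (no ideal contained in another), and over alphabet $X$ an irredundant union of ideals of length $\le m$ has at most $(m+1)^{|X|}$ members (this is a known bound on antichains of ideals, roughly: an ideal over $X$ of length $\le m$ is determined by, for each letter $a\in X$, how many times $a$ appears "forced" before being subsumed into a starred block — giving $\le m+1$ choices per letter). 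Plugging $t \le (m+1)^{|X|}$ into the length bound yields $|w| \le |X|\cdot(m+1)^{|X|}$ (after absorbing the $m$ factor and lower-order terms into the exponent, which is where one has to be slightly careful with constants). The subtle points I expect to need care are: (a) making "live parse" and "block index" precise enough that deleting a factor provably preserves both membership in $J$ and non-membership in each $I_j$; and (b) nailing the antichain bound $t \le (m+1)^{|X|}$ and checking that the arithmetic closes to exactly the stated bound rather than, say, $|X|\cdot m\cdot (m+1)^{|X|}$.
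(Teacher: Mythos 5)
Your overall strategy---take a minimal-length witness $w$, run the ordered DFAs of all ideals in a decomposition of $\Dclosure{L}$ in parallel, and delete a position that every one of these DFAs reads on a loop---is the right one and matches the paper's. The gap is in how you show that a long word must contain such a deletable position. Your counting argument bounds the number of positions at which some block index advances by roughly $t\cdot(\IdealUpper{L}+1)$, where $t$ is the number of ideals in the decomposition, and you then try to eliminate $t$ by claiming that an irredundant union of ideals of length $\le m$ over $X$ has at most $(m+1)^{|X|}$ members. That claim is false. Take $L=\{0,1\}^m$: then $\Dclosure{L}=\{0,1\}^{\le m}$ and $\IdealUpper{L}=m$, but the only ideals contained in a finite language are downward closures of single words, and all $2^m$ words of length $m$ are $\preceq$-maximal in $\Dclosure{L}$, so the irredundant decomposition is $\bigcup_{v\in\{0,1\}^m}\Dclosure{v}$ with $t=2^m$ components---already exponential in $\IdealUpper{L}$ for a binary alphabet. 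With $t$ unbounded (in terms of $|X|$ and $\IdealUpper{L}$), your length bound degenerates and does not yield $|X|\cdot(\IdealUpper{L}+1)^{|X|}$. (You also correctly flag that even granting your bound on $t$ the arithmetic leaves a stray factor of $m$, but the antichain bound itself is the fatal problem.)

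The paper closes exactly this gap with a statement whose bound is independent of the number of automata: \cref{ordereddfacycles} shows that any word of length $>|X|\cdot(n-1)^{|X|}$ has a \emph{single} position at which \emph{every} ordered $n$-state DFA cycles, and \cref{idealdfa} supplies such a DFA with $\IdealUpper{L}+2$ states for each ideal of the decomposition, no matter how many there are. The lemma is proved by induction on $|X|$ with the recursion $f_n(k)=(f_n(k-1)+1)(n-1)$: factor $w$ greedily into blocks each of which contains every letter of $X$; if some block minus its last letter is long, it uses a proper subalphabet and induction applies; otherwise there are at least $n$ blocks, so by orderedness a state repeats across a block containing all letters, hence acquires a loop on every letter and the automaton cycles on the whole final block. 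To repair your proof you need a device of this kind that quantifies over all ideal DFAs simultaneously, rather than enumerating the ideals of the decomposition and paying for their number.
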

The proof of \cref{alphabetwitness} is more involved than
\cref{idealwitness,shortwitness}. Note that a naive bound can be obtained by
intersecting exponentially (in $\IdealUpper{L}$) many automata for the ideals
of $\Dclosure{L}$ and complementing the result.  This would yield a doubly
exponential (in $\IdealUpper{L}$) bound, even considering the fact that ideals
have linear-sized DFAs. We can, however, use the latter fact in a different way.

A DFA is \emph{ordered} if its states can be partially ordered so that for
every transition $p\xrightarrow{x}q$, we have $p\le q$. In other words, the
automaton is acyclic except for loop transitions. The following
\lcnamecref{idealdfa} is easy to see: In order to check membership in an ideal,
one just has to keep a pointer into the expression that never moves left.
\begin{lem}\label{idealdfa}
Given an ideal representation of length $n$, one can construct in logarithmic
space an equivalent ordered DFA with $n+2$ states.
\end{lem}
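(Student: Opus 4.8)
The plan is to turn the ``pointer that never moves left'' intuition into an explicit automaton. Write the given ideal expression as $L_1\cdots L_n$, where each $L_i$ is either $Y_i^*$ or $\{x_i,\eword\}$, and take the DFA with states $p_0,p_1,\dots,p_n$ together with a non-accepting sink $p_\bot$, for a total of $n+2$ states. State $p_i$ means ``a prefix matching $L_1\cdots L_i$ has been consumed, and the rest of the input must match $L_{i+1}\cdots L_n$''. The initial state is $p_0$, every $p_i$ is accepting (since $\eword\in L_j$ for all $j$, the empty suffix is always fine), and $p_\bot$ is not. For $\delta(p_i,a)$, scan the factors from $L_{i+1}$ rightwards and let $j$ be the least index with $i<j\le n$ such that either $L_j=Y_j^*$ with $a\in Y_j$, or $L_j=\{a,\eword\}$. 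In the first case set $\delta(p_i,a)=p_{j-1}$ (the letter $a$ is consumed \emph{inside} the star $L_j$, while $L_{i+1},\dots,L_{j-1}$ are matched against $\eword$); in the second case set $\delta(p_i,a)=p_j$; if no such $j$ exists, set $\delta(p_i,a)=p_\bot$; and $\delta(p_\bot,a)=p_\bot$. Every transition keeps the index fixed or increases it, so the linear order $p_0\le\dots\le p_n\le p_\bot$ witnesses that the DFA is ordered.

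The construction is in logarithmic space: for each of the $n+1$ states and each input letter $a$, the value $\delta(p_i,a)$ is obtained by a single left-to-right pass over the expression, storing only a factor index and the letter, and the transition table is emitted one entry at a time, reusing this work space.

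What remains is to check that the DFA accepts exactly $I:=L_1\cdots L_n$, which I would do via the invariant that $\delta^*(p_0,w)\ne p_\bot$ iff $w\in I$, by induction on $|w|$. The direction ``$\delta^*(p_0,w)\ne p_\bot\Rightarrow w\in I$'' is immediate: a run ending in some $p_j$ attributes each read letter to one of the factors $L_1,\dots,L_n$ (a letter read inside a star going into that star, and a $\{x_i,\eword\}$-factor receiving at most its single letter), which together with $\eword$ for the remaining factors exhibits a decomposition of $w$ witnessing $w\in I$. For the converse, $I$ is an ideal and hence downward closed ($I=\Dclosure{I}$), so in particular prefix closed; a routine exchange argument then shows that the index reached by the greedy run after reading any prefix of $w$ is never larger than the position reached by any successful parse of that prefix against $L_1\cdots L_n$, so if $w$ has such a parse the greedy run cannot fall into $p_\bot$. (Equivalently, one can present this automaton as the subset construction applied to the obvious $(n+1)$-state $\eword$-NFA for $I$ and observe that its only reachable subsets are the ``suffix up-sets'' $\{q_i,\dots,q_n\}$ and $\emptyset$, so that correctness is inherited.) The one genuinely delicate point is the bookkeeping forced by the Kleene stars --- that reading a letter inside $L_j$ must land in $p_{j-1}$ and not $p_j$, and that the exchange argument respects this; everything else is straightforward.
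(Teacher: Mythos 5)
Your automaton is the same greedy ``rightward pointer'' construction as the paper's, and it is correct: since every factor contains $\eword$, the residual languages $L_{i+1}\cdots L_n$ are nested, so always jumping to the leftmost factor that can absorb the current letter preserves acceptability --- which is exactly the exchange argument you sketch, and which is confirmed by your alternative view as the subset construction on the obvious $\eword$-NFA, whose reachable subsets are precisely the suffix up-sets. The paper's proof is the same idea, phrased via the minimum of the set $J_{i,a}$ of indices $j$ such that $a$ lies in the block from position $i$ to $j$.

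The one point you should fix is the state count, because the constant matters downstream. In the paper, ``length $n$'' means the canonical form $Y_0^*\{x_1,\eword\}Y_1^*\cdots\{x_n,\eword\}Y_n^*$ (this is the convention under which the lemma combines with \cref{ordereddfacycles} to give the bound $|X|\cdot(\IdealUpper{L}+1)^{|X|}$ of \cref{alphabetwitness}), and that form has $2n+1$ factors. Your ``one state per factor plus a sink'' automaton therefore has $2n+3$ states, not $n+2$; the paper avoids this by keeping only one state per block $Y_i^*$. Your automaton does collapse to $n+2$ states once unreachable states are discarded: in the canonical form the singleton factors $\{x_i,\eword\}$ sit at even positions and the stars at odd positions, and your transition rule lands on $p_j$ with $j$ even in the second case and on $p_{j-1}$ with $j$ odd (hence $j-1$ even) in the first case, so starting from $p_0$ only the even-indexed states $p_0,p_2,\ldots,p_{2n}$ and the sink are ever reached. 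Adding that one observation --- or re-indexing your states by blocks from the start --- recovers the stated bound; everything else in your argument goes through.
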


An ordered DFA \emph{cycles} at a position of an input word if that
position is read using a loop.  The following \lcnamecref{ordereddfacycles} is
the key idea behind \cref{alphabetwitness}. Together with \cref{idealdfa},
it clearly implies \cref{alphabetwitness}. For unary alphabets, it is easy to
see.  We use induction on $|X|$ and show, roughly speaking, that without such a
position, no strict subalphabet can be used for too long. Then, all letters
have to appear often, meaning a state has to repeat after seeing the whole
alphabet. Hence, the automaton stays in this state until the end.
\begin{lem}\label{ordereddfacycles}
If $w\in X^*$ with $|w|>|X|\cdot (n-1)^{|X|}$, then $w$ has a position
at which every ordered $n$-state DFA cycles.
\end{lem}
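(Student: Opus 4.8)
\textit{Plan.}
I would prove the \lcnamecref{ordereddfacycles} by induction on $|X|$, with $n$ fixed, following the dichotomy indicated after the statement. Two harmless reductions first: reading ``$n$-state DFA'' as ``DFA with at most $n$ states'' costs nothing, since passing to the part reachable from a chosen state only shrinks the automaton and preserves orderedness; and the case $n=1$ is trivial, as the unique $1$-state ordered DFA self-loops on every letter and hence cycles everywhere. So assume $n\ge 2$. For the base case $|X|=1$, write $w=a^\ell$ with $\ell\ge n$; any ordered $n$-state DFA reading $w$ produces a monotone chain of states $r_0\le r_1\le\cdots\le r_\ell$, and among $r_0,\dots,r_{n-1}$ either two are equal --- hence two consecutive ones are, forcing the run constant from there on --- or they are pairwise distinct and thus exhaust the poset, so $r_{n-1}$ is maximal and the run is again constant from $r_{n-1}$ on. Either way the run is constant on positions $n,\dots,\ell$, so the last position $\ell$ is a loop, and it works for every DFA.

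\textit{Inductive step $|X|=k\ge 2$; Case~A.}
Put $M:=(k-1)(n-1)^{k-1}$. Suppose $w$ has a factor $v$ that does not use every letter of $X$ and has $|v|>M$; let $Y$ be the set of letters occurring in $v$, so $|Y|\le k-1$. Since $t\mapsto t(n-1)^t$ is nondecreasing in $t\ge 1$ (here $n\ge 2$), we get $|v|>M\ge |Y|(n-1)^{|Y|}$, so the inductive hypothesis, applied to $v$ over $Y$, yields a position $j$ in $v$ that is cycled at by every ordered $n$-state DFA over $Y$. Given any ordered $n$-state DFA $\A$ over $X$, write $w=\alpha v\beta$ and let $p$ be the state of $\A$ after reading $\alpha$; restricting $\A$ to the letters of $Y$ and to the states reachable from $p$ gives an ordered DFA over $Y$ with at most $n$ states, which therefore cycles at position $j$ of $v$, i.e.\ $\A$ cycles at position $|\alpha|+j$ of $w$. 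As $\A$ was arbitrary, that position works.

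\textit{Inductive step; Case~B.}
Otherwise every factor of $w$ that misses a letter has length $\le M$, and I claim the last position works. Suppose some ordered $n$-state DFA $\A$ ascends at the last position of $w$. Decompose the run of $\A$ on $w$ into its maximal constant stretches; as $\A$ is ordered, their states form a chain $v_0<v_1<\cdots<v_s$ with $s\le n-1$, and $w=y_0c_1y_1c_2\cdots c_sy_s$, where $y_j$ is the (possibly empty) word read while $\A$ sits at $v_j$ and $c_j$ is the single letter driving $\A$ from $v_{j-1}$ up to $v_j$. Every letter occurring in $y_j$ loops at $v_j$; so if some $y_j$ used all of $X$, then $v_j$ would be absorbing, forcing $j=s$. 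Since $\A$ ascends at the last position, $c_s$ is read there and $y_s=\eword$; hence each of $y_0,\dots,y_{s-1}$ misses a letter and so has length $\le M$. Therefore
\[
  |w|=\sum_{j=0}^{s-1}|y_j|+s\le s(M+1)\le (n-1)(M+1)=(k-1)(n-1)^k+(n-1)\le k(n-1)^k,
\]
the last inequality because $n-1\le(n-1)^k$. This contradicts $|w|>k(n-1)^k$, so $\A$ loops at the last position, which therefore works for every DFA.

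\textit{Main obstacle.}
The delicate point is Case~B: a crude pigeonhole over the $\le n$ constant stretches only gives $|w|\le nM$, which exceeds $k(n-1)^k$ as soon as $k>n$, so it does not close. What rescues the bound is the structural observation that in a run ascending at the last position every non-final stretch misses a letter; this simultaneously invokes the Case~B hypothesis and replaces the factor $n$ by $s\le n-1$ plus one extra letter per ascent, which is exactly what makes $k(n-1)^k$ the right bound (and, as \cref{ordereddfacycles} shows via \cref{alphabetwitness}, essentially sharp). Case~A and the base case I expect to be routine; the only thing to be a little careful about there is transferring the inductive hypothesis from DFAs over $Y$ to DFAs over $X$, which is handled by the ``restrict to reachable states'' trick used above.
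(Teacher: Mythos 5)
Your proof is correct, and it implements the same overall strategy as the paper's (induction on $|X|$, with the dichotomy ``some long factor over a strict subalphabet, else pigeonhole on the $\le n$ states''), but the second case is bookkept quite differently, so a comparison is worthwhile. The paper factorizes $w$ itself, independently of any automaton, into greedy blocks $p_1\cdots p_m$, each $p_i$ being the shortest prefix of the remainder containing every letter of $X$; if all blocks are short there must be at least $n$ of them, and the pigeonhole on the states at block boundaries produces a constant stretch spanning a full-alphabet block, hence an absorbing state, so every DFA cycles throughout $p_m$. You instead fix a DFA assumed to ascend at the last position and decompose its \emph{run} into maximal constant stretches $y_0c_1y_1\cdots c_sy_s$ with $s\le n-1$; the observation that every non-final stretch must miss a letter (otherwise its state is absorbing and the run could never leave it) lets the Case~B hypothesis bound each $|y_j|$ by $M$, giving $|w|\le s(M+1)\le k(n-1)^k$, a contradiction. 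The two counts are dual --- the paper shows a long word forces $\ge n$ full-alphabet blocks and hence a state repeat, you show a non-looping run cannot read a long word --- and they produce essentially the same recurrence (the paper's $f_n(k)=(f_n(k-1)+1)(n-1)$ versus your $(n-1)(M+1)$). Two of your refinements are worth noting: your Case~A fires on \emph{any} long letter-missing factor rather than only on the designated blocks, and you explicitly strengthen the statement to DFAs with at most $n$ states so that restricting a DFA over $X$ to the $Y$-reachable part is legitimate --- a point the paper's induction passes over silently. Both versions establish the stated bound.
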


\section{Insertion trees}
In \cref{counterautomata}, we will show upper bounds for the size of downward
closure NFAs and for ideal lengths for counter automata. These results employ
certain decompositions of NFA runs into trees, which we discuss here.

Let $\A=(Q,X,\Delta,q_0,Q_f)$ be a finite automaton.  A \emph{walk} is
a word $w=\delta_1\cdots\delta_m\in\Delta^*$ where $\delta_i=(p_i, x_i, p'_i)$
for $i\in[1,m]$ and $p'_j=p_{j+1}$ for $j\in[1,m-1]$. The walk is a
\emph{($p_1$-)cycle} if $p_1=p'_m$.  In this case, we define $\sigma(w):=p_1$.
A cycle is \emph{prime} if $p_i=p_1$ implies $i=1$.  A cycle is \emph{simple}
if $p_i=p_j$ implies $i=j$. A state $q$ \emph{occurs} on the cycle if $p_i=q$
for some $i$. If $i\ne 1$, then $q$ occurs \emph{properly}.

A common operation in automata theory is to take a run and delete cycles until
the run has length at most $|Q|$. The idea behind an insertion tree is to
record where we deleted which cycles. This naturally leads to a tree.

For our purposes, trees are finite, unranked and ordered.
An \emph{insertion tree}
is a tree $t=(V,E)$ together with a map $\gamma\colon V\to\Delta^*$ that assigns to each vertex
$v\in V$ a simple cycle $\gamma(v)$ such that
if $u$ is the parent of $v$, then $\sigma(\gamma(v))$ properly occurs in $\gamma(u)$.
Note that we allow multiple children for a state that occurs in $\gamma(u)$.

Since $t$ is ordered and in every simple cycle there is at most one proper
occurrence of each state, an insertion tree defines a unique (typically not
simple) cycle $\alpha(t)$. Formally, if $t$ is a single vertex $v$, then
$\alpha(t):=\gamma(v)$. If $t$ consists of a root $r$ and subtrees
$t_1,\ldots,t_s$, then $\alpha(t)$ is obtained by inserting each $\alpha(t_i)$
in $\gamma(r)$ at the (unique) occurrence of $\sigma(\alpha(t_i))$. The
\emph{height} of an insertion tree is the height of its tree. 
\begin{lem}\label{trees}
Every prime cycle of $\A$ admits an insertion tree of height at most $|Q|$.
\end{lem}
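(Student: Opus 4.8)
The plan is to prove the stronger statement that every prime $q$-cycle whose set of occurring states is $S$ admits an insertion tree of height at most $|S|$, by induction on $|S|$; since $S\subseteq Q$, this yields the claimed bound $|Q|$. The base case $|S|=1$ is trivial: a prime $q$-cycle using only one state must be a single self-loop, hence a simple cycle, realized by a single-vertex tree. For the inductive step, the root of the tree is built by a variant of loop erasure that \emph{keeps the base point~$q$}. Let $w=\delta_1\cdots\delta_m$ be a prime $q$-cycle with state sequence $q=p_1,\dots,p_{m+1}=q$. Process $\delta_1,\dots,\delta_m$ left to right while maintaining a stack of states, initialized to $[q]$: when reading $\delta_k$ (from $p_k$ to $p_{k+1}$), push $p_{k+1}$ if it is not on the stack; otherwise the segment of $w$ read since $p_{k+1}$ was last pushed is a cycle at $p_{k+1}$, which we \emph{remove} and record, after which we pop the stack back down to $p_{k+1}$. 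We stipulate that the procedure halts as soon as processing an edge brings the current state back to $q$; by primality this is exactly after $\delta_m$, so no cycle is ever removed at $q$ and all of $w$ is processed. Since we never push a state already on the stack, the stack holds pairwise distinct states throughout, and the edges not lying in any removed cycle, read in order, form a closed walk from $q$ to $q$ visiting pairwise distinct states, i.e.\ a simple cycle; this is $\gamma(r)$ for the root $r$, and $\sigma(\gamma(r))=q$.

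Next I would record the facts needed to recurse. Each removed cycle $c$ starts at a state $s$ with $s\ne q$ (removal never occurs at $q$), and $c$ avoids $q$ entirely: an occurrence of $q$ on $c$ would, by primality of $w$, have to be $p_{m+1}$, which would force $c$ to be cut at the final edge, contradicting that the procedure halts there without a removal. Hence the states on $c$ lie in $S\setminus\{q\}$, a set of size $|S|-1$. Moreover $c$ is itself a prime $s$-cycle: $s$ remains on the stack during the whole segment delimited by $c$ (later removals only pop down to states lying above $s$), so any further visit to $s$ inside that segment would itself have triggered a removal. Finally, $s$ is never popped off, so it still occurs on $\gamma(r)$, and at a position different from the first one (which carries $q$); that is, $s$ occurs \emph{properly} in $\gamma(r)$, as required for attaching a subtree.

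Applying the inductive hypothesis to each removed cycle $c$ yields an insertion tree $t_c$ whose root cycle is an $s$-cycle and whose height is at most $|S\setminus\{q\}|=|S|-1$. Hanging the trees $t_c$ below $r$ --- at the unique proper occurrence of each start state in $\gamma(r)$, and in the left-to-right order in which the corresponding cycles were removed --- produces an insertion tree $t$; validity holds by the ``properly occurs'' fact at the root and by induction below it, and the height of $t$ is $1$ plus the maximum height of the $t_c$, hence at most $|S|$ (if nothing is removed, $w$ is already simple and $t$ is a single vertex). The one remaining obligation is the identity $\alpha(t)=w$: splicing each $\alpha(t_c)=c$ back into $\gamma(r)$ at the occurrence of its start state, in the recorded order, exactly undoes the removal steps of the procedure. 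I expect this reconstruction bookkeeping to be the only genuinely delicate point --- in particular, handling several cycles removed at the same state $s$, which must be reinserted in the right order at the single occurrence of $s$ in $\gamma(r)$ --- and it can be organized as a routine induction on $m$; everything else follows directly from the definitions together with the primality of $w$.
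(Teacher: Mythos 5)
Your overall strategy (left-to-right loop erasure with a stack, induction on the number of occurring states, and the observation that removed cycles avoid the base point $q$) is sound and quite close in spirit to the paper's argument, but the proof as written contains a false step in the handling of \emph{nested} removals. You assert that for every removed cycle $c$ with base state $s$, ``$s$ is never popped off, so it still occurs on $\gamma(r)$'', and you then hang \emph{all} the trees $t_c$ directly below the root. This fails as soon as one removal is later swallowed by a removal at a state sitting lower on the stack. Concretely, take the prime $q$-cycle with state sequence $q,a,b,c,b,a,q$. Your procedure first removes the cycle $b\to c\to b$ (recording it with base state $b$) and then, on returning to $a$, removes the segment $a\to b\to c\to b\to a$ and pops $b$. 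The root's simple cycle is $q\to a\to q$, in which $b$ does not occur, so the recorded $b$-cycle cannot legally be attached below the root; moreover its edges are already contained in the recorded $a$-cycle, so building trees for both and attaching both would duplicate those edges and give $\alpha(t)\ne w$. (Relatedly, if a state $s$ is revisited three or more times, the segment ``read since $s$ was last pushed'' at the final return contains intermediate visits to $s$, so your claim that every removed cycle is prime also fails under that reading.)

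The repair is exactly the bookkeeping you deferred, but it concerns nesting rather than the order of siblings: only the \emph{maximal} removed segments --- those whose base state survives on the final stack --- may be hung below the root; removals subsumed by a later, larger removal must be discarded from the record and are instead handled by the inductive hypothesis applied to the maximal segment containing them (after factoring that segment into prime cycles, since it need not be prime). With that amendment your argument goes through and yields the stated bound. For comparison, the paper avoids this issue with a different decomposition: it excises a \emph{longest} cycle factor $y$ at some state $p$, so that by maximality no state occurs properly both before and after the excision point; this forces $p$ into the root of any insertion tree of the remainder and keeps the recursion clean.
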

The idea is to pick a cycle $c$ strictly contained in the prime cycle, but of
maximal length. Then, after removing $c$, no state occurs both before and after
the old position of $c$. This forces any insertion tree $t$ of the
remainder to place this position in the root. We then apply induction to the
subtrees of $t$ and to $c$. The resulting trees can then all be attached to the
root, increasing the height by at most one.

One application of \cref{trees} is to construct short ideals in a pumping lemma
for counter automata.  Part of this construction is independent from counters, so
we stay with NFAs for a moment.  Suppose we have an insertion tree $t=(V,E)$
with map $\gamma\colon V\to\Delta^*$ and a subset $F\subseteq V$, whose members
we call \emph{fixed vertices} or \emph{fixed cycles}. Those in $V\setminus F$
are called \emph{pumpable vertices/cycles}.  

We use fixed and pumpable vertices to guide a pumping process as follows. A
sequence $s=t_1\cdots t_m$ of insertion trees is called \emph{compatible} if
$\sigma(\alpha(t_1))=\cdots=\sigma(\alpha(t_m))$.  We assume that we have a
global set $F$ of vertices that designates the fixed vertices for all these
trees.  Suppose $v$ is a pumpable vertex. We obtain new compatible
sequences in two ways: 
\begin{itemize}
\item  Let  $v_1,\ldots,v_\ell$ be the children of $v$.  We choose
$i\in[0,\ell]$ and split up $v$ at $i$, meaning that we create a new vertex
$v'$ with $\gamma(v')=\gamma(v)$ to the right of $v$ and move
$v_{i+1},\ldots,v_\ell$ (and, of course, their subtrees) to $v'$.
\item If the whole subtree under $v$ is pumpable (we call such subtrees
\emph{pumpable}), then we can duplicate this subtree and attach its root
somewhere as a sibling of $v$. 
\end{itemize}
If $v$ is a root, these operations mean that we introduce a new tree in the
sequence.  If a compatible sequence $s'$ is obtained from $s$ by repeatedly
performing these operations, we say that $s'$ is obtained by \emph{pumping
$s$}. This allows us to define the following language:
\[ \PumpLang{t_1\cdots t_m}{F}=\{\iota(\alpha(t'_1)\cdots \alpha(t'_k)) \mid \text{$t'_1\cdots t'_k$ results from pumping $t_1\cdots t_m$} \}. \]
Here, for a walk $w$, $\iota(w)$ denotes the input word read by $w$.  The
following \lcnamecref{pumpideal} will yield the desired short ideals.
\begin{lem}\label{pumpideal}
Let $s=t_1\cdots t_m$ be a compatible sequence of insertion trees of height
$\le h$ and let $F$ be a set of fixed vertices. Then, the language
$\Dclosure{\PumpLang{s}{F}}$ is an ideal that satisfies
$\Elen{\Dclosure{\PumpLang{s}{F}}}\le h|F|(2|Q|+|F|)^2$.
\end{lem}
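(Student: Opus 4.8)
The plan is to argue in two stages: first that $\Dclosure{\PumpLang{s}{F}}$ is an ideal, and then to bound its expression length. For the first part, I would show that $\PumpLang{s}{F}$ is closed under the two pumping operations in a way that makes its downward closure directed: given any two elements, one can find a common pumping that dominates both (in the subword order). The key observation is that duplicating pumpable subtrees and splitting pumpable vertices are ``monotone'' operations — performing more of them only adds letters — and that the operations applied to $s$ and to any pumping $s'$ of $s$ commute up to $\preceq$. Since a downward closed set is an ideal exactly when it is directed (equivalently, when it cannot be written as a union of two strictly smaller downward closed sets), directedness of $\PumpLang{s}{F}$ under $\preceq$ gives that $\Dclosure{\PumpLang{s}{F}}$ is an ideal. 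Concretely, I would identify the ideal explicitly: each pumpable vertex $v$ contributes a factor $(\text{letters reachable by pumping below }v)^*$, wedged between the fixed material coming from $F$, and show by induction on the height that the resulting product of starred alphabets and singletons is exactly $\Dclosure{\PumpLang{s}{F}}$.

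For the length bound, I would proceed by induction on the height $h$ of the sequence, using subadditivity of $\Elen{\cdot}$ (stated in \cref{witnesses}) to glue together the contributions of the individual vertices. At height~$1$ the trees are single simple cycles; a simple cycle of $\A$ has length at most $|Q|$, and its downward closure as a (pumpable or fixed) building block has expression length at most $|Q|+1$, well within the claimed bound. For the inductive step, I would expand $\alpha(t_i)$ along the root cycle $\gamma(r_i)$: the root contributes at most $|Q|$ letters of fixed terminal content plus the insertion points, and at each insertion point we recursively splice in either a pumpable subtree (contributing a starred alphabet factor of length $1$) or a fixed subtree (to which the induction hypothesis of height $h-1$ applies). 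Counting: the total number of vertices that matter is controlled by $|F|$ together with the at most $|Q|$ insertion slots per cycle and the at most $h$ levels, and each contributes $O(|Q|+|F|)$ to the expression length, giving the product form $h|F|(2|Q|+|F|)^2$ after bookkeeping. The role of $|F|$ in the bound is that pumpable subtrees collapse to a single $Y^*$ factor, so only the fixed vertices and the structure around them can blow up the length.

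The main obstacle I expect is the bookkeeping in the inductive step: the definition of $\PumpLang{s}{F}$ allows a pumpable vertex to be split at \emph{any} of its $\ell+1$ positions and pumpable subtrees to be duplicated and attached \emph{anywhere} as siblings, so one must check that all these choices, combined across all levels, still produce a set whose downward closure is a \emph{single} ideal expression of the stated length — rather than a union of exponentially many. The resolution is that splitting and duplicating a pumpable vertex $v$ with alphabet-of-reachable-letters $Y_v$ only ever produces more copies of material whose downward closure is contained in $Y_v^*$, so all these variants are absorbed into one $Y_v^*$ factor; the ordering of the tree pins down the left-to-right arrangement of the fixed factors, so no disjunction arises. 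Making this precise — essentially, that ``pumpable'' is the same as ``contributes a Kleene star'' — is the crux, and once it is established the length count is a routine sum over at most $h|F|$ nontrivial vertices each of size $O((|Q|+|F|)^2)$.
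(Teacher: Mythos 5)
Your proposal follows essentially the same route as the paper: an explicit inductive construction (on tree height) of the ideal expression in which pumpable subtrees collapse to $Y^*$ factors and only the at most $h|F|$ ancestors of fixed vertices contribute to the length, glued across the compatible sequence by subadditivity of $\Elen{\cdot}$. One small caution on the crux you identify: when a \emph{pumpable} vertex has fixed descendants, the letters of its cycle are not absorbed into a single $Y_v^*$ factor but must be added to \emph{every} star factor separating its fixed sub-blocks, since splitting duplicates the cycle on both sides of each fixed child (the paper's proof uses $Z_i=Y_i\cup\{a_1,\ldots,a_\ell\}$ for exactly this); enlarging alphabets costs nothing in expression length, so your bound survives.
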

Roughly speaking, the pumping process is designed so that pumpable subtrees
only cause alphabets $Y$ in factors $Y^*$ of the ideal to grow and thus do not 
affect the ideal length. Hence, the only vertices that contribute to the length
are those that are ancestors of vertices in $F$. Since the trees have height
$\le h$, there are at most $h|F|$ such ancestors.

\section{Counter Automata}\label{counterautomata}
In this section, we construct downward closure NFAs for counter automata and
prove upper bounds for ideal lengths. Mere computability of downward closures
of blind counter automata can be deduced from computability for Petri net
languages~\cite{HabermehlMeyerWimmel2010}.  However, that necessarily results
in non-primitive recursive automata (see \cref{hardness}). As a special case of
stacked counter automata, blind counter automata were provided with a new
construction method in~\cite{Zetzsche2015a}. That algorithm, however, yields
automata of non-elementary size. Here, we prove an exponential bound.
\begin{thm}\label{dc:blind}
For each $n$-state blind $k$-counter automaton $\A$, there is an 
NFA $\B$ with $\langof{\B}=\Dclosure{\langof{\A}}$ and $|\B|\le
(3n)^{5nk+7k^3}$.  Moreover, $\B$ can be computed in exponential time.
\end{thm}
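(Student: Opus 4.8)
The plan is to analyze accepting walks of a blind $k$-counter automaton $\A$ and to show that the downward closure of $\langof{\A}$ is already captured by a bounded family of ideals, each given by a compatible sequence of small-height insertion trees together with a small set of fixed cycles. Given such a bound, \cref{pumpideal} converts each sequence into an ideal of controlled expression length, and collecting finitely many such ideals yields the NFA $\B$ via \cref{idealdfa}. So the whole argument splits into two parts: (a) a combinatorial/linear-algebraic decomposition of accepting walks, and (b) bookkeeping on the resulting parameters to match the stated size bound $(3n)^{5nk+7k^3}$.

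For part (a), fix an accepting walk $\pi$ of $\A$. Ignoring counters, we would simply delete cycles down to length $\le n$ and record them in an insertion tree as in \cref{trees}, giving height $\le n$. The counter constraint is that the total effect of $\pi$ must be $0\in\Z^k$. The standard move is to work in $\Z^k$: each simple cycle $c$ has an effect $e(c)\in\{-n,\dots,n\}^k$, and a zero-effect walk corresponds to a nonnegative integer combination $\sum_c m_c\, e(c) = 0$ of cycle effects along some spanning structure. By a Carathéodory-type argument (as used for semilinear Parikh images and blind counters), one can choose a \emph{small} ``core'' of cycles — on the order of $k$ or $k^2$ many, with polynomially bounded multiplicities $m_c$ — whose effects already sum to the required value, and declare these (finitely many copies, laid out along the insertion tree) the set $F$ of fixed vertices. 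All remaining cycles have effect that is irrelevant for reaching effect $0$ only if their own effect is $0$; cycles with nonzero effect that are not in the core must be removable, so the key sublemma is: every accepting walk decomposes as an insertion tree of height $\le n$ in which the fixed vertices form at most $O(k^3)$ cycles (counting multiplicity) and every non-fixed cycle has counter-effect $0$ — hence is freely pumpable without disturbing acceptance. This is exactly the shape required by \cref{pumpideal}: pumping the non-fixed (pumpable) subtrees is harmless for the counters, so every word obtained by pumping $\pi$ lies in $\langof{\A}$, and consequently $\Dclosure{\PumpLang{s}{F}} \subseteq \Dclosure{\langof{\A}}$; conversely, since every accepting walk arises this way, $\Dclosure{\langof{\A}}$ equals the union of these ideals.

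For part (b), we count how many sequences $s$ and sets $F$ we must enumerate and invoke \cref{pumpideal} for the size of each resulting ideal. With height $h\le n=|Q|$ and $|F|=O(k^3)$, \cref{pumpideal} gives each ideal an expression length $\le h|F|(2|Q|+|F|)^2 = O(nk^3\cdot n^2) = O(n^3k^3)$ (absorbing lower-order $k$ terms), hence by \cref{idealdfa} an ordered DFA of that many states; a union of $M$ such ideals gives an NFA with $O(M\cdot n^3k^3)$ states. The number $M$ of relevant $(s,F)$ pairs is bounded by the number of ways to choose $\le O(k^3)$ simple cycles (each a word over $\Delta$ of length $\le n$, so $\le |\Delta|^n \le (n|X|)^n$ many, but for counting downward-closure-relevant data only the state sequence and input-letter multiset matter, giving $\le n^{O(n)}$ per cycle after absorbing the alphabet into loop alphabets), times the number of shapes of a height-$\le n$ insertion tree over $n$ states with $O(k^3)$ marked vertices, which is again $n^{O(nk^3)}$. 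Multiplying, $M\le n^{O(nk + k^3)}$, and together with the per-ideal factor one gets $|\B| \le (3n)^{5nk+7k^3}$ after choosing the constants; the slack in the exponent (the $5$ and $7$) is exactly the room needed to absorb the $k$-dependent lower-order terms and the base change from $n$ to $3n$. Computability in exponential time is then immediate: there are at most exponentially many candidates, each of exponential size, each testable for the effect-$0$ and reachability conditions in polynomial space.

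The main obstacle I expect is part (a): getting the fixed-cycle count down to the claimed $O(k^3)$ while simultaneously keeping the tree height at $|Q|$. Bounding the number of ``essential'' cycles needed to realize a prescribed effect-$0$ combination is the Carathéodory step and is standard, but it must be made compatible with the insertion-tree structure of \cref{trees} — i.e. the chosen core cycles must nest correctly as descendants along branches of a single height-$\le n$ tree, and one must argue that inserting bounded multiplicities of them does not blow up the height. Reconciling ``delete cycles until the run is short'' (which shrinks things) with ``keep $\Theta(k^3)$ cycles with polynomial multiplicities'' (which re-inflates things, but only in a controlled, pumpable way) is the delicate part; everything after that is arithmetic on exponents.
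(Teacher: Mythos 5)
Your proposal takes a genuinely different route from the paper --- and unfortunately the route cannot reach the stated bound. You propose to realize $\Dclosure{\langof{\A}}$ as an explicit finite union of ideals (one per choice of insertion-tree skeleton and fixed-vertex set) and then take the union of the corresponding ordered DFAs from \cref{idealdfa}. That is essentially the paper's proof of \cref{idealboundrbc} (the ideal-length bound), not of \cref{dc:blind}. The paper instead builds the NFA directly: its states are tuples consisting of a state of $\A$, a stack of height $\le n$ over $Q\times[-n,n]^k$ recording the partially traversed cycles of an insertion tree, a ``precise'' counter confined to $[-B,B]^k$ with $B=n+n(3n)^{(k+1)^2}$, and two sets of cycle effects (``obligations'' and ``repeatables''), with acceptance requiring the precise counter to be zero and the pair of sets to be cancellable; Pottier's bound is used to show the precise counter's capacity suffices, and two reduction steps (\cref{span,caratheodory}) replace the power sets of effects by linearly independent subsets to bring the state count down to $(3n)^{5nk+7k^3}$. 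Nothing in your proposal plays the role of these reductions, and the resulting bound is independent of $|X|$ and $|\Delta|$ --- a feature your enumeration cannot reproduce.

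Concretely, three things go wrong in your counting. First, the fixed set $F$ is not of size $O(k^3)$: Pottier's theorem (\cref{pottier}) bounds the total multiplicity of the selected ``core'' cycles by $(1+(4n)^{k+1})^{k+1}\le(5n)^{(k+1)^2}$, so $|F|$ is exponential in $k^2$ and polynomial in $n$ only for fixed $k$; this is exactly why \cref{idealboundrbc} yields $(5n)^{7(k+1)^2}$ and not a polynomial in $k$. Second, enumerating the $(s,F)$ pairs requires choosing a concrete simple cycle (a word of length $\le n$ over $\Delta$, or at least its input word over $X$) for each of the up to $n|F|$ skeleton vertices, and choosing alphabets $Y_i\subseteq X$ for the pumpable positions; this introduces factors of order $|X|^{n^2\cdot\mathrm{poly}}$ and $2^{|X|\cdot\mathrm{poly}}$, neither of which fits under $(3n)^{5nk+7k^3}$. (Your own arithmetic already shows the strain: a tree-shape count of $n^{O(nk^3)}$ cannot ``multiply'' down to $n^{O(nk+k^3)}$.) Third, a smaller but real error: it is false that every non-fixed cycle must have counter effect $0$. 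The correct property, used in \cref{idealboundrbc}, is that the residual multiplicities $z=x-y$ satisfy $Az=0$, so the pumpable cycles are cancellable \emph{as a group}: any extra pumping $z'$ is completed to $Nz\ge z'$ and compensated by pumping the other pumpable cycles. Your argument does establish computability of the downward closure in exponential time and the ideal-length bound, but not the size bound of \cref{dc:blind}.
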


\subparagraph{Linear Diophantine equations} In order to show correctness of our
construction, we employ a result of \citewithauthor{Pottier1991}, which bounds
the norm of minimal non-negative solutions to a linear Diophantine equation.
Let $A\in\Z^{k\times m}$ be an integer matrix. We write $\|A\|_{1,\infty}$
for $\sup_{i\in[1,k]}(\sum_{j\in[1,m]} |a_{ij}|)$, where $a_{ij}$ is the
entry of $A$ at row $i$ and column $j$. A solution $x\in\N^m$ to the equation
$Ax=0$ is \emph{minimal} if there is no $y\in\N^m$ with $Ay=0$ and $y\le x$,
$y\ne x$. The set of all solutions clearly forms a submonoid of $\N^m$, which
is denoted $M$.  The set of minimal solutions is denoted $\cH(M)$ and called
the \emph{Hilbert basis} of $M$. Let $r$ be the rank of $A$. Pottier showed the
following.  
\begin{thm}[\citewithauthor{Pottier1991}]\label{pottier}
For each $x\in \cH(M)$, $\|x\|_1\le (1+\|A\|_{1,\infty})^r$.
\end{thm}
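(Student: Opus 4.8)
The statement is Pottier's bound on the Hilbert basis. I would aim first for a self-contained proof of a bound of the right \emph{shape}, and then indicate where the sharp constant comes from. The plan: reduce to a full-row-rank subsystem supported on the nonzero coordinates of the solution; exploit $\le$-minimality to force the ``partial images'' of the solution to be pairwise distinct lattice points; reorder the construction via Steinitz's lemma so that those points lie in a small box; and count lattice points. The exact constant $(1+\|A\|_{1,\infty})^r$ is then recovered by replacing the counting step with a Cramer-type determinant estimate, which I expect to be the delicate part.

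First the reduction. Fix a minimal solution $x\ne 0$, let $S=\{i : x_i>0\}$ be its support, let $A_S$ be the matrix of columns of $A$ indexed by $S$, and let $B$ be the submatrix of $A_S$ formed by a maximal linearly independent set of rows, so that $B$ has full row rank $r'\le r$. Since the discarded rows of $A_S$ are linear combinations of the kept ones, $\{z\in\Q^{S} : A_Sz=0\}=\{z\in\Q^{S} : Bz=0\}$; hence $Bx_S=0$ with $x_S$ componentwise positive, and $x_S$ is $\le$-minimal among the nonzero $z\in\N^{S}$ with $Bz=0$ (any such $z$, extended by zeros, is an $A$-solution $\le x$, so minimality of $x$ transfers). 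Every row of $B$ is a row of $A$, so has $\ell_1$-norm $\le\|A\|_{1,\infty}$. It therefore suffices to bound $\|x_S\|_1$ in terms of $\|A\|_{1,\infty}$ and $r'\le r$.

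Next the core argument. Build $x_S$ by adding $N=\|x_S\|_1$ unit vectors one at a time, in some order, with partial sums $0=y_0<y_1<\dots<y_N=x_S$ in $\N^{S}$. If $By_j=By_{j'}$ for some $(j,j')\ne(0,N)$ with $j<j'$, then $y_{j'}-y_j$ is a nonzero element of $\N^{S}$, satisfies $y_{j'}-y_j\le x_S$ and $y_{j'}-y_j\ne x_S$, and lies in $\ker B$, contradicting the minimality of $x_S$. Hence $By_0,\dots,By_{N-1}$ are $N$ pairwise distinct points of $\Z^{r'}$, and this holds for \emph{every} ordering of the unit steps. The steps are the columns $b_i$ of $B$, with $b_i$ used $x_i$ times; they sum to $Bx_S=0$ and each has $\ell_\infty$-norm $\le\|A\|_{1,\infty}$, so by Steinitz's lemma (in the Grinberg--Sevastyanov form, where the bound equals the dimension) we may pick an ordering for which all partial sums $By_j$ lie in $\{v\in\Z^{r'} : \|v\|_\infty\le r'\|A\|_{1,\infty}\}$. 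Counting lattice points in that box yields $\|x_S\|_1\le(2r'\|A\|_{1,\infty}+1)^{r'}\le(2r\|A\|_{1,\infty}+1)^{r}$---already of the claimed form, but with worse constants.

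To recover the exact bound $(1+\|A\|_{1,\infty})^{r}$ I would follow Pottier's sharper argument, replacing the box-counting step by Cramer's rule: one expresses $\|x_S\|_1$ through determinants of order $\le r'$ of submatrices of $B$, each bounded by $\|A\|_{1,\infty}^{r'}$ via the elementary estimate $|\det C|\le\prod_i\sum_j|c_{ij}|$, with $\le$-minimality of $x$ used to control which submatrices occur; the extra $+1$ should then emerge from a homogenization step (appending a unit coordinate so that an alternating sum of these minors collapses into a single determinant, estimated against a matrix whose rows have $\ell_1$-norm $\le1+\|A\|_{1,\infty}$). I expect this last bookkeeping---the cancellations and the homogenization that compress many separate determinant bounds into the single clean quantity $(1+\|A\|_{1,\infty})^{r}$ instead of a cruder $\mathrm{poly}(r)\cdot\|A\|_{1,\infty}^{r}$---to be the main obstacle; it is the technical heart of Pottier's proof, which one may of course also invoke directly~\cite{Pottier1991}. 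For the applications in this paper the weaker bound above already suffices.
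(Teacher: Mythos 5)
The paper never proves this statement---it is imported verbatim from Pottier's paper as a black box, and no proof appears in the appendix---so there is no internal argument to compare yours against. Judged on its own terms, your write-up splits into a rigorous half and a speculative half. The rigorous half is sound: restricting to the support, passing to a full-row-rank row submatrix $B$ of rank $r'\le r$, observing that $\le$-minimality forces the partial images $By_0,\dots,By_{N-1}$ to be pairwise distinct for \emph{every} ordering of the unit steps, and then invoking the Steinitz/Grinberg--Sevastyanov reordering so that these $N$ distinct points sit in a box of side $2r'\|A\|_{1,\infty}+1$, correctly yields $\|x\|_1\le(2r\|A\|_{1,\infty}+1)^{r}$. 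But that is not the theorem. The final paragraph, which is supposed to upgrade this to the sharp constant $(1+\|A\|_{1,\infty})^{r}$, is a conjecture about what Pottier's Cramer-rule argument probably looks like (``I expect'', ``should then emerge'') rather than a proof; as written, the stated bound is not established, and the only way to obtain it from your text is to do what the paper does and cite \cite{Pottier1991} directly.

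Your closing claim that the weaker bound suffices for this paper is right in spirit but not verbatim. \Cref{pottier} is used quantitatively: in the construction for \cref{dc:blind} the capacity is set to $B=n+n\cdot(3n)^{(k+1)^2}$ via the estimate $\|y\|_1\le(1+(m+1)n)^{k+1}\le(3n)^{(k+1)^2}$, and \cref{idealboundrbc} similarly extracts $\|y\|_1\le(5n)^{(k+1)^2}$. Substituting $(2(k+1)(m+1)n+1)^{k+1}$ for $(1+(m+1)n)^{k+1}$ inflates these by a factor of roughly $(2(k+1))^{k+1}$ in the base, so the displayed constants $(3n)^{5nk+7k^3}$ and $(5n)^{7(k+1)^2}$ would all need to be re-derived. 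Everything stays singly exponential, so all membership and completeness results survive, but the explicit bounds in those theorems would no longer be the ones printed.
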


By applying \cref{pottier} to the matrix $(A|-b)$, it is easy to
deduce that for each $x\in\N^m$ with $Ax=b$, there is a $y\in\N^m$ with $Ay=b$,
$y\le x$, and $\|y\|_1\le (1+\|(A|-b)\|_{1,\infty})^{r+1}$. 

\subparagraph{Automata for the downward closure}
Let $\A$ be a blind $k$-counter automaton with $n$ states.
The idea of the construction of $\B$ is to traverse insertion trees of
prime cycles of $\A$. Although insertion trees were introduced for finite automata,
they also apply to blind counter automata if we regard the counter updates as input symbols.
$\B$ keeps track of where it is in the tree using a
stack of bounded height. The stack alphabet will be $\Gamma=Q \times [-n,n]^k$.
We define $B=n+n\cdot (3n)^{(k+1)^2}$. The state set of our automaton $\B_1$ is the following:
\[ Q_1 = Q \times \Gamma^{\le n}\times [-B,B]^k\times \Powerset{[-n,n]^k}\times \Powerset{[-n,n]^k}. \]
Here, the number of states is clearly doubly exponential, but we shall make the
automaton smaller in two later steps. 
The idea behind $\B_1$ is that counter values in the interval $[-B,B]$ are
simulated precisely (in the factor $[-B,B]^k$). Roughly speaking, whenever we
encounter a cycle, we can decide whether to \begin{textenum}\item add its
effect to this precise counter or to \item remember the effect as ``must be
added at least once''.\end{textenum} We call the former \emph{precise
cycles}; the latter are dubbed \emph{obligation cycles} and are stored in
the first factor $\Powerset{[-n,n]^k}$.  In either case, the effect of a cycle
is kept as ``repeatable''  in the second factor $\Powerset{[-n,n]^k}$.  

In order to be able to guess for each cycle whether it should be a precise
cycle or an obligation cycle, we traverse an insertion tree of (the prime
cycles on) a walk of $\A$.  On the stack (the factor $\Gamma^{\le n}$), we keep
the cycles that we have started to traverse.  Suppose we are 
executing a cycle in a vertex $v$ and the path from the root to $v$ consists of
the vertices $v_1,\ldots,v_m$. Let $\gamma(v_i)$ be a $q_i$-cycle for
$i\in[1,m]$.  Then, the stack content is $(q_1, u_1)\cdots (q_m,u_m)$, where
$u_i$ is the effect of the part of $\gamma(v_i)$ that has already been
traversed.

In the end, we verify that \begin{textenum} \item \label{counter:acceptance:precise} the precise counter is zero
and \item \label{counter:acceptance:cancellable} one can add up obligation cycles (each of them at least once) and
repeatable cycles to zero.\end{textenum} The latter condition is captured in
the following notion.  Let $S,T\subseteq\Z^k$ be finite sets with
$S=\{u_1,\ldots,u_s\}$, $T=\{v_1,\ldots,v_t\}$. We call the pair $(S,T)$
\emph{cancellable} if there are $x_1,\ldots,x_s\in\N\setminus\{0\}$ and
$y_1,\ldots,y_t\in\N$ with $\sum_{i=1}^s x_iu_i+\sum_{i=1}^t y_iv_i=0$.  In
particular, $(\emptyset,T)$ is cancellable for any finite $T\subseteq\Z^k$.
Together, \labelcref{counter:acceptance:precise} and \labelcref{counter:acceptance:cancellable}
guarantee that the accepted word is in the downward closure: They imply
that we could have executed all of the obligation cycles and some others
(again) to fulfill our obligation. Hence, there is a run of $\A$ accepting a
superword.

The number of cycles we can use as precise cycles is limited by
the capacity $B$ of our precise counter. We shall apply \cref{pottier} to show
that there is always a choice of cycles to use as precise cycles so as to reach
zero in the end and not exceed the capacity.

The first type of transition in $\B_1$ is the following.  For each transition
$(p,a,d,q)\in \Delta$ and state $(p,\eword,v,S,T)\in Q_1$ such that $v+d\in
[-B,B]^k$, we have a transition
\begin{equation} (p,\eword,v,S,T)\xrightarrow{a} (q,\eword,v+d,S,T). \label{edges:b1:ord} \end{equation}
These allow us to simulate transitions in a walk of $\A$ that are not part of a cycle.
We can guess that a cycle is starting. If we are in state $p$, then we push
$(p,0)$ onto the stack:
\begin{equation} (p, w, v, S, T)       \xrightarrow{\eword}     (p, w(p, 0), v, S, T).\label{edges:b1:push} \end{equation}
While we are traversing a cycle, new counter effects are stored in the topmost
stack entry. For each transition $(p,a,d,q)\in\Delta$ and state $(p, w(r, u), v, S,
T)\in Q_1$ such that $u+d\in [-n,n]^k$, we have a transition
\begin{equation} (p, w(r, u), v, S, T) \xrightarrow{a}          (q, w(r, u+d), v, S, T). \label{edges:b1:cycle}\end{equation}
When we are at the end of a cycle, we have to decide whether it should be a
precise cycle or an obligation cycle. The following transition means it should be
precise: The counter effect $u$ of the cycle is added to the counter $v$, the
stack is popped, and $u$ is added to the set of repeatable effects $T$. For
each state $(p, w(p, u), v, S, T)\in Q_1$ such that $v+u\in [-B,B]^k$, we have
a transition
\begin{equation} (p, w(p, u), v, S, T) \xrightarrow{\eword}     (p, w, v+u, S, T\cup\{u\}). \label{edges:b1:precise}\end{equation}
In order to designate the cycle as an obligation cycle, we have the following
transition: The stack is popped and $u$ is added to both $S$ and $T$. For each
state $(p, w(p, u), v, S, T)\in Q_1$, we include the transition
\begin{equation} (p, w(p, u), v, S, T) \xrightarrow{\eword}     (p, w, v, S\cup\{u\}, T\cup\{u\}) \label{edges:b1:obligation}\end{equation}

The initial state is $(q_0,\eword,0,\emptyset,\emptyset)$ and the final states
are all those of the form $(q,\eword,0,S,T)$ where $q$ is final in $\A$ and
$(S,T)$ is cancellable. Employing \cref{trees} and \cref{pottier}, one can now
show that $\langof{\A}\subseteq \langof{\B_1}\subseteq\Dclosure{\langof{\A}}$.

\subparagraph{State space reduction I} We have thus shown that
$\Dclosure{\langof{\B_1}}=\Dclosure{\langof{\A}}$.  However, $\B_1$ has a
doubly exponential number of states. Therefore, we now reduce the number of
states in two steps. First, instead of remembering the set $S$ of obligation
effects, we only maintain a linearly independent set of vectors generating the
same vector space. For a set $R\subseteq \Q^k$, let $\spanof{R}$ denote the
$\Q$-vector space generated by $R$. Moreover, $\LinInd{R}$ denotes the set of
linearly independent subsets of $R$. Our new automaton $\B_2$ has states
\[ Q_2 = Q \times \Gamma^{\le n}\times [-B,B]^k\times \LinInd{[-n,n]^k}\times \Powerset{[-n,n]^k} \]
and a state in $\B_2$ is final if it is final in $\B_1$.  $\B_2$ has the same
transitions as $\B_1$, except that aside from those of type
\labelcref{edges:b1:obligation}, it has
\begin{equation} (p, w(p, u), v, S, T) \xrightarrow{\eword}     (p, w, v, S', T\cup\{u\}) \label{edges:b2:obligation}\end{equation}
for each linearly independent subset $S'\subseteq S\cup \{u\}$ such that
$\spanof{S'}=\spanof{S\cup\{u\}}$.  Of course, such an $S'$ exists for any $S$
and $u$. This means, by induction on the length, for any walk of $\B_1$ from
$(p, w, v, S, T)$ to $(q,w',v',S',T')$, we can find a walk with the same input
in $\B_2$ from $(p, w, v, S, T)$ to $(q,w',v',S'',T')$ with $S''\subseteq S'$
and $\spanof{S''}=\spanof{S'}$. Since $(S',T')$ is cancellable and $S'\subseteq
T'$, the pair $(S'',T')$ is cancellable as well. This means, our walk in $\B_2$
is accepting and hence $\langof{\B_1}\subseteq\langof{\B_2}$. It remains to verify
that $\langof{\B_2}\subseteq\langof{\B_1}$.

Observe that for any walk arriving in $(q,w,v,S,T)$ in $\B_2$, there is a
corresponding walk in $\B_1$ arriving in $(q,w,v,S',T)$ for some $S'\supseteq
S$ with $\spanof{S'}=\spanof{S}$. The next lemma tells us that if $(q,w,v,S,T)$
is a final state in $\B_2$, then $(q,w,v,S',T)$ is final in $\B_1$. This
implies that $\langof{\B_2}\subseteq\langof{\B_1}$ and hence
$\langof{\B_2}=\langof{\B_1}$.

\begin{lem}\label{span}
Let $T\subseteq \Z^k$ and $S_1\subseteq S_2\subseteq \Z^k$ such that $\spanof{S_1}=\spanof{S_2}$.
If $(S_1,T)$ is cancellable, then so is $(S_2,T)$.
\end{lem}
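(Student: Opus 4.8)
The plan is to show that enlarging the set $S_1$ to $S_2$ preserves cancellability, provided the span does not change. Recall that $(S_1, T)$ cancellable means there are strictly positive multipliers on the elements of $S_1$ and non-negative multipliers on the elements of $T$ that sum to zero; and we know $S_1 \subseteq T$ holds in the application, but the statement only assumes $S_1 \subseteq S_2$ and equal span, so I will not rely on $S_1 \subseteq T$. Fix $S_2 = \{u_1, \ldots, u_s\}$ and write the given cancelling combination for $(S_1, T)$ as $\sum_{u \in S_1} a_u u + \sum_{v \in T} b_v v = 0$ with all $a_u \in \N \setminus \{0\}$ and $b_v \in \N$. I need to produce strictly positive integer coefficients on \emph{all} of $S_2$ (not just $S_1$) together with non-negative integer coefficients on $T$ summing to zero.

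First I would handle the new elements $u \in S_2 \setminus S_1$. Since $\spanof{S_2} = \spanof{S_1}$, each such $u$ lies in $\spanof{S_1}$, hence in $\spanof{S_1 \cup T}$; so there is a rational — and after clearing denominators, an integer — relation expressing a positive multiple of $u$ as an integer combination of elements of $S_1 \cup T$. The obstacle is that this combination may have negative coefficients, and $\N$-cancellability does not tolerate subtracting vectors we are not allowed to subtract. The standard trick to get around this: use the existing cancelling combination as a ``reservoir.'' Concretely, take a single auxiliary relation $\sum_{i=1}^{s} c_i u_i + \sum_{v \in T} d_v v = 0$ over $\Z$ in which every new generator $u_i \in S_2 \setminus S_1$ appears with a strictly positive coefficient $c_i > 0$ (such a relation exists: for each new $u$ individually we have an integer relation involving $u$ with a positive coefficient, since $Nu - (\text{integer comb.\ of } S_1 \cup T) = 0$ and $S_1 \subseteq \spanof{S_1}$; summing these over all new $u$ keeps each new coefficient positive). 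Now add a large positive integer multiple $\lambda$ of the original $(S_1,T)$-relation to this auxiliary relation. In the sum, the coefficient of each $u \in S_1$ is $\lambda a_u + (\text{its coefficient in the auxiliary relation})$, which is positive once $\lambda$ is large; the coefficient of each $u \in S_2 \setminus S_1$ is exactly $c_i > 0$, independent of $\lambda$; and the coefficient of each $v \in T$ is $\lambda b_v + d_v$, which is non-negative once $\lambda$ is large. Thus for $\lambda$ large enough we obtain strictly positive coefficients on all of $S_2$ and non-negative coefficients on all of $T$, and the combination still sums to zero by linearity. Hence $(S_2, T)$ is cancellable.

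The main obstacle, as indicated, is controlling the signs: the equal-span hypothesis only gives us \emph{some} integer relation witnessing that new generators are redundant, with no control over signs of the coefficients on $S_1 \cup T$. The resolution is purely the $\lambda$-scaling argument above, which uses that the original relation has \emph{strictly} positive coefficients on $S_1$ (so it can absorb any fixed negative perturbation there) and non-negative coefficients on $T$ (so scaling it up keeps $T$-coefficients safely non-negative). A minor point to check: if $T$ is empty, elements of $S_1$ still cancel among themselves, and the same argument goes through verbatim with the $T$-sums vacuous; and if $S_2 = S_1$ there is nothing to prove. Once this is in place, as noted in the surrounding text, it yields $\langof{\B_2} \subseteq \langof{\B_1}$ and hence the two automata are language-equivalent.
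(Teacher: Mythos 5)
Your argument is essentially the paper's: the paper also takes an integer relation $\sum_{i=1}^{s+1} z_i u_i = 0$ with positive coefficient on the new element (obtained from the equal-span hypothesis by clearing denominators) and adds a large multiple $M$ of the given cancelling combination so that the strictly positive coefficients on $S_1$ absorb any negative $z_i$. The only difference is that the paper reduces to $|S_2\setminus S_1|=1$ and iterates, whereas you handle all new elements in one auxiliary relation; that is a harmless repackaging. There is, however, one sign slip in your write-up: you allow the auxiliary relation to involve $T$ with integer coefficients $d_v$ of arbitrary sign, and then claim that $\lambda b_v + d_v$ is non-negative for large $\lambda$. This fails when $b_v = 0$ and $d_v < 0$, since scaling the original relation does nothing for a $T$-element it does not use. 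The fix is already contained in your first observation: each $u\in S_2\setminus S_1$ lies in $\spanof{S_1}$ (not merely $\spanof{S_1\cup T}$), so the auxiliary relation can be taken over $S_2$ alone with all $d_v=0$ --- exactly as the paper does --- after which the $\lambda$-scaling argument goes through.
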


\subparagraph{State space reduction II} We apply a similar transformation to
the last factor of the state space. In $\B_3$, we have the state space
\[ Q_3 = Q \times \Gamma^{\le n}\times [-B,B]^k\times \LinInd{[-n,n]^k}\times \LinInd{[-n,n]^k}. \]
and a state is final in $\B_3$ if and only if it is final in $\B_2$. Analogous
to $\B_2$, we change the transitions so that instead of adding $u\in[-n,n]^k$
to $T$, we store an arbitrary $T'\in\LinInd{T\cup \{u\}}$. 

This time, it is clear that $\langof{\B_3}\subseteq\langof{\B_2}$: For every
walk in $\B_3$ arriving at $(q,w,v,S,T)$, there is a corresponding walk in
$\B_2$ arriving at $(q,w,v,S,T')$ such that $T\subseteq T'$.  Clearly, if
$(S,T)$ is cancellable, then $(S,T')$ must be cancellable as well. The
following \lcnamecref{caratheodory} implies $\langof{\B_2}\subseteq\langof{\B_3}$:
It says that for each walk in $\B_2$ arriving at $(q,w,v,S,T)$, there is 
a corresponding walk in $\B_3$ arriving at $(q,w,v,S,T')$ for some
linearly independent $T'\subseteq T$ such that $(S,T')$ is cancellable
and hence $(q,w,v,S,T')$ is final.
\begin{lem}\label{caratheodory}
Let $S,T\subseteq \Z^k$ such that $(S,T)$ is cancellable.  Then there is a
linearly independent subset $T'\subseteq T$ such that $(S,T')$ is cancellable.
\end{lem}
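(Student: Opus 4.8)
The plan is to reduce the claim to the conical form of Carath\'{e}odory's theorem. Write $S=\{u_1,\dots,u_s\}$ and $T=\{v_1,\dots,v_t\}$, and fix a cancellability witness for $(S,T)$: integers $x_1,\dots,x_s\ge 1$ and $y_1,\dots,y_t\ge 0$ with $\sum_{i=1}^s x_iu_i+\sum_{j=1}^t y_jv_j=0$. Put $w:=\sum_{j=1}^t y_jv_j=-\sum_{i=1}^s x_iu_i\in\Z^k$, so that $w$ lies in the rational cone generated by $T$. The coefficients $x_i$ on $S$ will be kept untouched throughout; only the combination witnessing $w$ as an element of the cone of $T$ will be modified.

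The main step will be the standard fact that a nonnegative rational combination of $v_1,\dots,v_t$ is already a nonnegative rational combination of some linearly independent subset of $\{v_1,\dots,v_t\}$. I would prove this by choosing a representation $w=\sum_{j\in J}q_jv_j$ with all $q_j>0$ and $|J|$ minimal; if the vectors $\{v_j\}_{j\in J}$ were linearly dependent, I would take a nontrivial relation $\sum_{j\in J}\lambda_jv_j=0$ with some $\lambda_j>0$ (negating the relation if necessary) and increase a parameter $\alpha\ge 0$ in $q_j-\alpha\lambda_j$ until the first such coefficient becomes $0$; this yields a strictly shorter representation with nonnegative coefficients summing to $w$, contradicting minimality. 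Hence $\{v_j\}_{j\in J}$ is linearly independent.

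Applying this to $w$ produces $J\subseteq[1,t]$ with $T':=\{v_j\mid j\in J\}$ linearly independent and rationals $q_j\ge 0$ with $\sum_{j\in J}q_jv_j=w$, whence $\sum_{i=1}^s x_iu_i+\sum_{j\in J}q_jv_j=-w+w=0$. To return to the integer coefficients required by the definition of cancellability, I would multiply this identity by a positive integer $N$ clearing the denominators of all $q_j$: then $(Nx_i)_i$ are positive integers and $(Nq_j)_{j\in J}$ are nonnegative integers, and their combination of $S\cup T'$ is still $0$. Thus $(S,T')$ is cancellable and $T'\subseteq T$ is linearly independent. (When $S=\emptyset$ one gets $w=0$, so $J=\emptyset$ and $T'=\emptyset$, which is consistent with $(\emptyset,\emptyset)$ being cancellable.)

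I do not anticipate a genuine obstacle here: the only nontrivial ingredient is the conical Carath\'{e}odory lemma, which is classical, and the sole subtlety is that cancellability is defined with integer rather than merely rational coefficients, which the final clearing-of-denominators step accommodates.
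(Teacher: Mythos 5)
Your proof is correct and is essentially the paper's argument: the paper runs the same conical-Carath\'{e}odory exchange step (its choice of $j$ maximizing $z_j/y_j$ is exactly your choice of $\alpha=\min_j q_j/\lambda_j$ over the indices with $\lambda_j>0$), framed instead as taking a minimal cancellable subset $T'\subseteq T$ and showing it must be linearly independent. The only cosmetic difference is that you pass through rational coefficients and clear denominators at the end, whereas the paper keeps all coefficients integral throughout.
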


We have thus shown that $\Dclosure{\langof{\B_3}}=\Dclosure{\langof{\A}}$. An
estimation of the size of $Q_3$ now completes the proof of \cref{dc:blind}. 
We apply \cref{dc:blind} to derive an
algorithm for $\IncProb{\Ideal}{\fRBC}$. 
\begin{cor}\label{idealblind}
The problem $\IncProb{\Ideal}{\fRBC}$ is in $\NP$.
\end{cor}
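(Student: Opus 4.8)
The plan is to combine \cref{dc:blind} with the ideal witnesses of \cref{idealwitness}, reducing $\IncProb{\Ideal}{\fRBC}$ to a single, succinctly represented membership test that can be certified in $\NP$. First I would pass from the given RBCA for $L$ to a language-equivalent blind $k$-counter automaton $\A$ with $n$ states, via the logarithmic-space translations recalled in \cref{results}; here $n,k$ are polynomial in the input. By \cref{dc:blind} there is an NFA $\B$ with $\langof{\B}=\Dclosure{L}$ and $|\B|\le(3n)^{5nk+7k^3}$. I would not build $\B$: since the downward closure of any NFA is a finite union of ideals of length at most its number of states, one gets $\IdealUpper{L}=\IdealUpper{\langof{\B}}\le|\B|$, so $\IdealUpper{L}$ is bounded by a number $M$ with polynomially many bits that is computable from $n,k$.

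Next, set $m:=M+1\ge\IdealUpper{L}+1$ and write the input ideal as $K=Y_0^*\{x_1,\eword\}Y_1^*\cdots\{x_\ell,\eword\}Y_\ell^*$, where $\ell=\IdealUpper{K}$ is polynomial in the input. Since $K$ is downward closed, $\Dclosure{K}=K$, and by \cref{idealwitness} (with this choice of $m$), $\Dclosure{K}\subseteq\Dclosure{L}$ if and only if the ideal witness $w:=\Word{Y_0}^{m}x_1\Word{Y_1}^{m}\cdots x_\ell\Word{Y_\ell}^{m}$ lies in $\Dclosure{L}=\Dclosure{\langof{\A}}$. Although $w$ has exponential length, it is fully described by $K$ together with the binary encoding of $m$. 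It therefore remains to decide, in $\NP$, whether $\A$ has an accepting walk whose input word is a superword of $w$.

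This last step is the crux. A superword of $w$ forces an accepting walk of $\A$ to factor as $\rho_0 e_1\rho_1\cdots e_\ell\rho_\ell$, where $e_i$ reads $x_i$, each $\rho_i$ reads a superword of $\Word{Y_i}^{m}$, consecutive pieces are composable, and the total counter effect is $0$. I would guess a polynomially bounded \emph{folded} description of such a walk: the states visited at the block boundaries, and, for each block $i$, a walk skeleton of $\A$ of length at most $n$ obtained by contracting cycles along an insertion tree of height at most $n$ (\cref{trees}), decorated with the simple cycles to be iterated and with their iteration counts written in binary. One then verifies (i) that iterating the designated cycles inside each $\rho_i$ indeed yields $\Word{Y_i}^{m}$ as a subword, and (ii) that the induced total counter effect is $0$. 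Both amount to feasibility of a system of linear Diophantine (in)equalities over the binary iteration counts of size polynomial in $n,k$; by \cref{pottier} the minimal admissible iteration counts already have polynomially many bits, so the certificate has polynomial size and can be verified in polynomial time. Hence the procedure runs in $\NP$.

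The main obstacle is precisely this last step: proving that whenever $w\in\Dclosure{\langof{\A}}$ there is a \emph{polynomial}-size folded witness, and that its validity is checkable in polynomial time. This rests on the same ingredients as \cref{dc:blind}: the insertion-tree decomposition (\cref{trees}) bounds the skeleton lengths and the number of relevant cycles (the linear-independence reductions behind \cref{caratheodory} keeping the ``fixed'' cycles few), while Pottier's bound (\cref{pottier}) controls the iteration counts. The remaining bookkeeping---composability of the blocks and the exact shape of the Diophantine system---is routine, essentially a reuse of the correctness analysis of the automaton $\B$.
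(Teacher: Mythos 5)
Your reduction coincides with the paper's up to the final membership test: pass to a blind counter automaton, use \cref{dc:blind} to get an exponential bound $m$ on $\IdealUpper{\langof{\A}}$, and apply \cref{idealwitness} to reduce the question to whether the single witness word $w=\Word{Y_0}^{m}x_1\Word{Y_1}^{m}\cdots x_\ell\Word{Y_\ell}^{m}$, with $m$ stored in binary, lies in $\Dclosure{\langof{\A}}$. The paper then packages $w$ as a polynomial-size straight-line program and invokes, as a black box, the result of Hague and Lin that emptiness of pushdown automata with reversal-bounded counters is in $\NP$: a pushdown decompresses the SLP while the blind counters simulate $\A$ (with skip-moves for the downward closure), so the compressed membership test is in $\NP$ without any further combinatorics.

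Your replacement for this last step has a genuine gap. You assert that verifying condition (i) --- that the input of block $\rho_i$, once its cycles are iterated according to the guessed binary exponents, contains $\Word{Y_i}^{m}$ as a subword --- ``amounts to feasibility of a system of linear Diophantine (in)equalities over the binary iteration counts.'' Subword containment is not a counting condition: for $Y=\{a,b\}$ the word $aabb$ has the right letter multiplicities but does not contain $\Word{Y}^{2}=abab$ as a subword. What is actually needed is an analysis of how a greedy embedding of $\Word{Y_i}^{m}$ advances through a skeleton interleaved with iterated cycles: the number of complete copies of $\Word{Y_i}$ absorbed by $c^{k}$ is linear in $k$ only after one tracks the ``phase'' of the embedding modulo $|\Word{Y_i}|$, and the phases at the segment boundaries are additional data that must be guessed and checked for consistency before any linear system over the exponents makes sense. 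This can plausibly be pushed through (and \cref{pottier} would then keep the minimal exponents polynomially bounded), but it is exactly the crux you yourself single out, not ``routine bookkeeping,'' and the Diophantine system as you describe it does not certify the subword condition. The paper's detour through straight-line programs and the Hague--Lin emptiness result exists precisely to avoid having to prove this.
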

Since \cref{dc:blind} provides an exponential bound on
$\IdealUpper{\langof{\A}}$, we can use an ideal witness $w=\Word{Y_0}^m x_1
\Word{Y_1}^m \cdots x_\ell \Word{Y_\ell}^m$ (\cref{idealwitness}) for which we have to check
membership in $\langof{\A}$. Since $\ell$ is polynomial and $m$ exponential, we
can compute a compressed representation of $w$ in form of a \emph{straight-line
program}, a context-free grammar that generates one word~\cite{Lohrey2012}.  It
follows easily from work of \citewithauthor{HagueLin2011} that membership of
such compressed words in languages of blind (or reversal-bounded) counter
automata is decidable in $\NP$.

\subparagraph{Fixed number of counters} Unfortunately, the size bound for the
NFAs provided by \cref{dc:blind} has the number of states in the exponent,
meaning that if we fix the number $k$ of counters, we still have an
exponential bound. In fact, we leave open whether one can construct
polynomial-sized NFAs for fixed $k$.  However, in many cases it suffices to have
a polynomial bound on the length of ideals.

\begin{thm}\label{idealboundrbc}
If $\A$ is an $n$-state blind $k$-counter automaton, then
$\IdealUpper{\langof{\A}}\le (5n)^{7(k+1)^2}$.
\end{thm}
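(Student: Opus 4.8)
The plan is to follow essentially the same architecture as the proof of \cref{dc:blind}, but to extract only an ideal-length bound instead of an NFA-size bound, thereby replacing the doubly exponential stack/state blow-up by a count of the vertices in an insertion tree that actually contribute to ideal length. Concretely, I would take an accepting walk $w$ of $\A$ and first decompose it into prime cycles hung onto a backbone of length $\le n$, applying \cref{trees} to obtain an insertion tree $t$ of height $\le |Q|=n$ for each prime cycle. This yields a compatible sequence $s=t_1\cdots t_m$ of such trees (with $m\le n$) whose combined input word dominates $w$ and, conversely, lies below $\langof{\A}$ after appropriate pumping. The key point is to choose the set $F$ of \emph{fixed} vertices correctly: using \cref{pottier}, I would argue that from among all the simple cycles recorded in the trees one can pick a subset of size polynomial in $k$ (each appearing with a bounded multiplicity) so that their combined effect cancels to zero, exactly as in the ``obligation/repeatable cycles'' analysis for $\B_1$. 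These chosen cycles, together with the (at most $n$) backbone positions, become the fixed vertices; everything else is pumpable.

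The second step is to feed this into \cref{pumpideal}. By construction, $\Dclosure{\PumpLang{s}{F}}$ is a single ideal contained in $\Dclosure{\langof{\A}}$ and containing (a subword of) $w$, so $\IdealUpper{\langof{\A}}$ is bounded by $\max_{s,F}\IdealUpper{\Dclosure{\PumpLang{s}{F}}}\le\max \Elen{\Dclosure{\PumpLang{s}{F}}}\le h\,|F|\,(2|Q|+|F|)^2$. Plugging in $h\le n$, $|Q|=n$, and the bound on $|F|$ coming from \cref{pottier} then gives the claimed $(5n)^{7(k+1)^2}$ after a routine estimate; I would not grind through the constants here, but the exponent $7(k+1)^2$ should match the $(3n)^{(k+1)^2}$-type terms already appearing in the definition of $B$ in the $\B_1$ construction, since the Hilbert-basis bound of \cref{pottier} applied to a $k\times m$ matrix with entries in $[-n,n]$ and rank $r\le k$ gives each minimal solution $\ell_1$-norm at most $(1+n\cdot\text{(width)})^{r}$, and the number of distinct simple-cycle effects is at most $(2n+1)^k$.

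The main obstacle I expect is the bookkeeping that links the two abstractions: \cref{pumpideal} is stated purely for NFAs, treating counter updates as input symbols, whereas I need the resulting ideal to be contained in the \emph{downward closure of the counter-automaton language}, not merely in the projection that forgets counters. This is exactly the role of the fixedness choice: one must show that \emph{every} word in $\PumpLang{s}{F}$ is the input of some accepting walk of $\A$, which requires that pumping a pumpable subtree never disturbs the zero-effect condition. I would handle this by insisting that pumpable cycles either have zero effect or can be cancelled by repeating already-present cancelling cycles — i.e.\ the pumpable effects all lie in $\spanof{S}$ where $S$ is the set of fixed (obligation) effects — mirroring the cancellability invariant of \cref{span} and \cref{caratheodory}. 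Once that invariant is in place, the reachability argument from the $\B_1$ correctness proof ($\langof{\A}\subseteq\langof{\B_1}\subseteq\Dclosure{\langof{\A}}$) transfers essentially verbatim, and the length bound follows from \cref{pumpideal} as above.
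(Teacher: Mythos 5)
Your proposal follows essentially the same route as the paper's proof: decompose an accepting walk into a short backbone plus prime cycles, build insertion trees via \cref{trees}, use \cref{pottier} to select a small set $F$ of fixed cycle-vertices (so that the residual, pumpable multiplicities form a nonnegative zero-effect solution that licenses ``pumping more'' to restore counter balance), and then invoke \cref{pumpideal} for the length bound. One caution on your phrasing: the invariant you need is \emph{not} that pumpable effects lie in $\spanof{S}$ (a $\Q$-span condition is too weak for nonnegative cancellation) but, as in the paper, that each pumpable effect occurs with positive multiplicity in a nonnegative solution $z=x-y$ of $Az=0$ --- your first formulation (``can be cancelled by repeating already-present cycles'') and your appeal to the $\B_1$ correctness argument capture this correctly.
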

Recall that an upper bound on $\IdealUpper{L}$ is essentially a pumping lemma
(see \cref{witnesses}). Here, the idea is to take a walk of $\A$ and delete cycles
until the remaining walk $u$ is at most $n$ steps. For the deleted cycles, we
take an insertion tree of height at most $n$ (\cref{trees}). Then, using
\cref{pottier}, we pick a subset $F$ (whose size is polynomial when fixing $k$)
of cycles that can balance out the effect of $u$. We then employ
\cref{pumpideal} to the insertion trees to construct an ideal whose length
is polynomial in $|F|$.

\section{Context-Free Grammars}\label{cfg}
We turn to context-free grammars. First,
we mention that given a context-free grammar $\G$, one can construct in
exponential time an (exponential-sized) NFA accepting
$\Dclosure{\langof{\A}}$~\cite{vanLeeuwen1978,Courcelle1991,Okhotin2010,GruberHolzerKutrib2009,BachmeierLuttenbergerSchlund2015}.
Second, we provide an algorithm for the problem $\IncProb{\Ideal}{\fCFG}$.
\begin{thm}\label{idealcfg}
The problem $\IncProb{\Ideal}{\fCFG}$ is in $\Poly$.
\end{thm}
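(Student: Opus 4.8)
The plan is as follows. Since $I=\Dclosure{I}$, the task is to decide $I\subseteq\Dclosure{\langof{\G}}$. Fix an ideal expression $E=L_1\cdots L_p$ for $I$ with $p=\Elen{I}\le|I|$; this endows $I$ with a polynomial-size \emph{linear} structure (the positions $1,\dots,p$). First I would preprocess $\G$ in polynomial time: discard all non-productive and non-reachable nonterminals and, since $I\subseteq X^*$ for the alphabet $X$ of symbols occurring in $E$, project $\G$ onto $X$ (erase every terminal not in $X$); both operations preserve the grammar size and the answer. Write $\G_A$ for $\G$ with start symbol $A$, and for $1\le i\le p$, $i\le j+1$, let $I_{[i,j]}$ denote the ideal of the subexpression $L_i\cdots L_j$ (with $I_{[i,j]}=\{\eword\}$ when $i>j$). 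Because $I$ is an ideal it is join-irreducible, so $I\subseteq\Dclosure{\langof{\G}}$ iff $I$ is contained in one of the ideals of the ideal decomposition of $\Dclosure{\langof{\G}}$.

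Next I would compute, for every nonterminal $A$, the alphabets $\mathsf{PL}(A)$ and $\mathsf{PR}(A)$ of letters pumpable to the left, resp.\ right, of a recursive occurrence of $A$ (formally $a\in\mathsf{PL}(A)$ iff $A\grammarsteps\alpha A\beta$ with $a$ occurring in some terminal word derivable from $\alpha$), by a standard polynomial fixpoint over the productions. Composing two pumps shows that a single derivation $A\grammarsteps\alpha A\beta$ realises all of $\mathsf{PL}(A)$ on the left and all of $\mathsf{PR}(A)$ on the right simultaneously, whence $\mathsf{PL}(A)^*\,u\,\mathsf{PR}(A)^*\subseteq\Dclosure{\langof{\G_A}}$ for every $u\in\langof{\G_A}$. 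I would then run a CYK-style least-fixpoint computation over the $O(|N|\cdot p^2)$ Boolean variables $\mathsf{R}(A,[i,j])$, intended to express ``$I_{[i,j]}\subseteq\Dclosure{\langof{\G_A}}$''. The clauses generating $\mathsf{R}(A,[i,j])$ are: (i) it holds if $i>j$; (ii) if $L_i=Y^*$ with $Y\subseteq\mathsf{PL}(A)$ then $\mathsf{R}(A,[i+1,j])$ implies it, and symmetrically if $L_j=Y^*$ with $Y\subseteq\mathsf{PR}(A)$ from $\mathsf{R}(A,[i,j-1])$; (iii) for each production $A\to\zeta_1\cdots\zeta_r$ and each way of cutting $[i,j]$ into consecutive blocks $B_1,\dots,B_r$ (since $I_{[i,j]}$ is join-irreducible, such a factorisation can be chosen with every block boundary lying strictly between two positions of $E$), it holds provided every block $B_\ell$ assigned to a terminal $\zeta_\ell=a$ satisfies $I_{B_\ell}\subseteq\{a,\eword\}$ and every block assigned to a nonterminal $\zeta_\ell=C$ satisfies $\mathsf{R}(C,B_\ell)$. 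The least fixpoint is reached in polynomially many rounds of polynomial cost, and the algorithm outputs yes iff $\mathsf{R}(S,[1,p])$ holds.

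Soundness of the clauses is direct: (ii) follows from the displayed inclusion (pump out $\Word{Y}^{\,t}$ on the relevant side for $t$ large enough), and (iii) from the fact that a word derivable from $A$ factors along $\zeta_1,\dots,\zeta_r$ and the subword relation is compatible with concatenation. The main obstacle is \emph{completeness}: $I_{[i,j]}\subseteq\Dclosure{\langof{\G_A}}$ must imply $\mathsf{R}(A,[i,j])$. For this I would combine two ingredients. First, a refinement of the classical description of downward closures of context-free languages (van Leeuwen, Courcelle): every ideal in the decomposition of $\Dclosure{\langof{\G_A}}$ arises from a derivation of $\G_A$ in which finitely many occurrences of nonterminals are ``pumped'', each pumped occurrence of $C$ contributing exactly factors inside $\mathsf{PL}(C)^*$ and $\mathsf{PR}(C)^*$, the remaining part of the derivation being an ordinary finite parse. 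Second, a \emph{localisation} argument: since $I_{[i,j]}$ is a linear object, an embedding of its generic member $\Word{Y_{i-1}}^m x_i\cdots x_j\Word{Y_j}^m$ into a word derivable from $A$ is monotone with respect to the left-to-right order of the derivation tree, so every subtree interacts with exactly one \emph{contiguous} sub-range of $L_i\cdots L_j$; this is what makes the polynomially many range-indexed variables $\mathsf{R}(A,[i,j])$ suffice instead of exponentially many sub-patterns. An induction on such a pumped derivation then matches it step by step against clauses (i)--(iii), peeling boundary $Y^*$-factors with (ii) and descending through productions with (iii).

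Finally, I should mention an alternative route closer to the treatment of $\IncProb{\Ideal}{\fRBC}$ in \cref{idealblind}: the exponential-size NFA for $\Dclosure{\langof{\G}}$ gives $\IdealUpper{\langof{\G}}\le 2^{\mathrm{poly}}$ (an $N$-state NFA for a downward closed language has ideals of length $<N$), so for $m=\IdealUpper{\langof{\G}}+1$ the ideal witness $\Word{Y_0}^m x_1\cdots x_n\Word{Y_n}^m$ of \cref{idealwitness} admits a straight-line program of polynomial size, and it remains to decide membership of this compressed word in $\Dclosure{\langof{\G}}$. The obstacle there is to show that membership of a straight-line-program-compressed word in the downward closure of a context-free language lies in $\Poly$, via a compressed-matching argument over the grammar and the program.
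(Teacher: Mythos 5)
Your route is genuinely different from the paper's. The paper does not attack the ideal directly: it invokes the polynomial reduction of $\IncProb{\Ideal}{\fCFG}$ to the \emph{simultaneous unboundedness problem} from \cite{Zetzsche2015b} (so one may assume $\langof{\G}\subseteq a_1^*\cdots a_n^*$ and must decide $\Dclosure{\langof{\G}}=a_1^*\cdots a_n^*$), and then solves the SUP by a grammar transformation: compute the sets $L_i,R_i$ of nonterminals that pump $a_i$ on the left/right, add productions $A\to\infsymb{i}A$ and $A\to A\infsymb{i}$, delete $A\to a_i$, and test membership of $\infsymb{1}\cdots\infsymb{n}$. Your interval-indexed dynamic program over $(A,[i,j])$ would, if completed, replace both the external reduction and the $\G^\omega$ construction by one self-contained argument; your pumpable-alphabet sets $\mathsf{PL}(A),\mathsf{PR}(A)$ are exactly the paper's $L_i,R_i$ generalized from single letters to alphabets, and your clause (ii) plays the role of the $\infsymb{i}$-productions. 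The soundness direction of your clauses is correct as you argue it.

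The gap is that the completeness direction --- the entire non-trivial half --- is asserted rather than proved, and it is where all the work lies. Concretely: (a) your ``first ingredient,'' the claim that every ideal of $\Dclosure{\langof{\G_A}}$ arises from a finite parse with some occurrences of nonterminals pumped, each pumped $C$ contributing factors inside $\mathsf{PL}(C)^*$ and $\mathsf{PR}(C)^*$, is essentially the content of the exponential downward-closure constructions and needs a proof (in particular the direction that \emph{every} ideal has this shape); (b) the ``localisation'' step needs a cutting lemma for ideals --- if $L_i\cdots L_j\subseteq J_1\cdots J_r$ for ideals $J_\ell$, then the interval $[i,j]$ can be cut into consecutive blocks with $I_{B_\ell}\subseteq J_\ell$ --- which is true (via \cref{idealwitness} applied to a generic witness word and the fact that a factor $Y^*$ can only be absorbed by a single alphabet factor $Z^*$ with $Y\subseteq Z$), but is not stated or proved; (c) clause (ii) only peels factors of the form $Y^*$, whereas a boundary factor $\{x,\eword\}$ may also be absorbed by $\mathsf{PL}(A)^*$ --- this is rescuable by unrolling one pump iteration into the finite parse and using clause (iii), but that argument has to be made; and (d) as stated, clause (iii) enumerates exponentially many block decompositions for productions with long right-hand sides, so you must normalize the grammar (or do an inner DP) to keep the fixpoint polynomial. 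None of these is fatal, but until (a) and (b) are established the algorithm's correctness is open; the paper sidesteps all of this by outsourcing the combinatorics to the reduction of \cite{Zetzsche2015b} and proving only the much more focused \cref{cfg:correctness}. Your fallback via compressed membership is also not a shortcut: compressed membership for context-free languages is not known to be (and is believed not to be) in $\Poly$.
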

\newcommand{\infsymb}[1]{a_{#1}^\omega}
In \cite{Zetzsche2015b}, this problem has been reduced to the
\emph{simultaneous unboundedness problem (SUP)} for context-free languages. The latter asks, given a
language $L\subseteq a_1^*\cdots a_n^*$, whether we have $\Dclosure{L}=a_1^*\cdots
a_n^*$. Moreover, this reduction is clearly polynomial.  Hence, we assume that
$\langof{\G}\subseteq a_1^*\cdots a_n^*$ and that the grammar $\G=(N,T,P,S)$ is productive and in Chomsky
normal form, meaning that productions are of the form $A\to BC$,
$A\to a_i$, or $A\to\eword$ for $A,B,C\in N$. First, we add productions
$A\to\eword$ for all $A\in N$, so that the resulting grammar $\G'$ satisfies
$\langof{\G'}=\Dclosure{\langof{\G}}$.  For each $A\in N$, we can in
polynomial time construct a CFG for $\{w\in (N\cup T)^* \mid
A\grammarsteps[\G']  w\}$, so we can compute the sets $L_{i}=\{A\in N \mid
A\grammarsteps[\G'] a_iA \}$ and $R_i=\{A\in N \mid A\grammarsteps[\G'] Aa_i\}$
using membership queries. We can thus compute
the grammar $\G^\omega$, which results from $\G'$ by \begin{textenum}\item
removing all productions $A\to a_i$, \item adding $A\to \infsymb{i} A$ for each
$A\in L_i$ and \item adding $A\to A\infsymb{i}$ for each $A\in R_i$.\end{textenum}
Clearly, an occurrence of $\infsymb{i}$ certifies the ability
to generate an unbounded number of $a_i$'s.  Thus, if
$\infsymb{1}\cdots\infsymb{n}\in\langof{\G^\omega}$, then $a_1^*\cdots
a_n^*\subseteq \langof{\G'}=\Dclosure{\langof{\G}}$. It is not hard to see that
the converse is true as well.
We have thus reduced the SUP to the membership problem.

\section{Algorithms}\label{algorithms}

\newcommand{\newalg}{}
\newcommand{\algtitle}[1]{\subparagraph{Algorithms for #1.}}
\algtitle{$\IncProb{\mM}{\Ideal}$} Suppose $\mM=\Ideal$ and we want to decide
whether $I\subseteq J$ for ideals $I,J\subseteq X^*$.  In logspace, we
construct an ideal witness $w$ for $I$ and $J$ (\cref{idealwitness}) and a DFA
$\A$ for $X^*\setminus J$ (\cref{idealdfa}) and check whether $w\in
\langof{\A}$. In all other cases, to decide $\Dclosure{L}\subseteq I$, we
construct a DFA $\A$ for $X^*\setminus I$ and check whether
$\Dclosure{L}\cap\langof{\A}=\emptyset$. 
\newalg
\algtitle{$\IncProb{\mM}{\fNFA}$} Suppose $\mM=\Ideal$ and we
want to decide whether $I\subseteq \Dclosure{\langof{\A}}$ for an NFA $\A$. Since
$\IdealUpper{\langof{\A}}\le |\A|$, we can construct in logspace an ideal witness
$w$ for $I$ and $\Dclosure{\langof{\A}}$ and verify $w\in\Dclosure{\langof{\A}}$. In all
other cases, we use a short witness for $\coNP$-membership.
\newalg
\algtitle{$\IncProb{\mM}{\fOCA}$} Suppose $\mM=\Ideal$  and we
want to decide whether $I\subseteq \Dclosure{\langof{\A}}$ for an OCA $\A$.  We have a polynomial bound on
$\IdealUpper{\langof{\A}}$ (see \cref{results}). Hence, we construct in logspace an ideal
witness $w$ for $I$ and $\Dclosure{\langof{\A}}$. We can also construct in logspace an OCA
$\A'$ with $\langof{\A'}=\Dclosure{\langof{\A}}$. Membership for OCA is in $\NL=\coNL$, so we can verify $w\in
I$ and $w\notin \langof{\A'}=\Dclosure{\langof{\A}}$. In all other cases, we
convert the OCA to an NFA (see \cref{results}).
\newalg
\algtitle{$\IncProb{\mM}{\fRBC[k,r]}$}
Let $\A$ be drawn from $\fRBC[k,r]$.
First, suppose $\mM=\Ideal$  and we want to decide whether $I\subseteq
\langof{\A}$.  By \cref{idealboundrbc}, we have a polynomial bound on
$\IdealUpper{\langof{\A}}$ and can construct in logspace an ideal
witness $w$ for $I$ and $\langof{\A}$. We can also construct in logspace an RBCA
$\A'$ with
$\langof{\A'}=\Dclosure{\langof{\A}}$. Since membership for $\fRBC[k,r]$ is in
$\NL$~\cite{GurariIbarra1981}, we can check whether $w\in\langof{\A'}$.
Now let $\mM\in\{\fNFA,\fOCA,\fRBC[k,r]\}$ and we are given $L$ in
$\mM$ and an automaton $\A$ from $\fRBC[k,r]$. For $\fNFA$,
$\fOCA$, and $\fRBC[k,r]$, we have a polynomial bound on $\IdealUpper{L}$ (see
\cref{results,idealboundrbc}). Thus, we guess an ideal $I$ of
polynomial length and then verify that $I\subseteq\Dclosure{L}$ but
$I\not\subseteq\Dclosure{\langof{\A}}$. Since $\IncProb{\Ideal}{\mM}$ and
$\IncProb{\Ideal}{\fRBC[k,r]}$ are in $\NL$, the verification is done in $\NL$.
Hence, non-inclusion is in $\NP$.
For $\mM\in\{\fCFG,\fRBC\}$, we assume a fixed alphabet. 
Let $L$ be in $\mM$. 
Then \cref{alphabetwitness,idealboundrbc} provide us with a witness of polynomial
length. Since (non-)membership in $\Dclosure{L}$ and in
$\Dclosure{\langof{\A}}$ can be decided in $\NP$, non-inclusion is in $\NP$.
\newalg
\algtitle{$\IncProb{\mM}{\fCFG}$}
The case $\IncProb{\Ideal}{\fCFG}$ is shown in \cref{idealcfg}.  Suppose
$\mM\in\{\fNFA,\fOCA,\fRBC[k,r]\}$ and we are given $L$ in $\mM$ and a CFG
$\G$. We have a polynomial bound on $\IdealUpper{L}$ (see
\cref{results,idealboundrbc}), so that we can guess a polynomial-length
ideal $I$. Since $\IncProb{\Ideal}{\mM}$ is in $\NL$ in every case and
$\IncProb{\Ideal}{\fCFG}$ is in $\Poly$, we can verify in polynomial time that
$I\subseteq \Dclosure{L}$ and $I\not\subseteq\Dclosure{\langof{\G}}$. Thus, non-inclusion
is in $\NP$. In the case $\mM\in\{\fRBC,\fCFG\}$, we construct exponential-sized downward closure NFAs
and check inclusion for them (and the latter problem is in $\coNP$). This yields a $\coNEXP$ algorithm.
\newalg
\algtitle{$\IncProb{\mM}{\fRBC}$}
Let $\A$ be from $\fRBC$.  The ideal case is treated in \cref{idealblind}. When given $L$ in
$\mM\in\{\fNFA,\fOCA,\fRBC[k,r]\}$, we guess a polynomial length ideal $I$
and verify that $I\subseteq\Dclosure{L}$ in $\NL$. Since
$\IncProb{\Ideal}{\fRBC}$ is in $\NP$, we can also check in $\coNP$ that
$I\not\subseteq\Dclosure{\langof{\A}}$. Hence, non-inclusion is in $\PHSigmatwo$. 
For $\mM\in\{\fCFG,\fRBC\}$, we proceed as for $\IncProb{\mM}{\fCFG}$.

\section{Hardness}\label{hardness}

In this \lcnamecref{hardness}, we prove hardness results. Most of them are
deduced from a generic hardness theorem that, under mild assumptions,
derives hardness from the ability to generate finite sets with long words.  
We will work with bounds that exhibit the following useful property.  A
monotone function $f\colon\N\to\N$ will be called \emph{amplifying} if $f(n)\ge n$ for
$n\ge 0$ and there is a polynomial $p$ such that $f(p(n))\ge f(n)^2$ for
large enough $n\in\N$.
We say that a model \emph{has property $\LowerT{f}$} (or short: \emph{is
$\LowerT{f}$})  if for each given $n\in\N$, one can construct in polynomial time a
description of a finite language whose longest word has length $f(n)$.  For
the sake of simplicity, we will abuse notation slightly and write
$\LowerT{f(n)}$ instead of $\LowerT{f}$. For a function $t\colon\N\to\N$, we
use $\coNTIME{t}$ to denote the complements of languages accepted by
nondeterministic Turing machines that are time bounded by $O(t(n^c))$ for some
constant $c$. 

We also need two mild language theoretic properties.  A \emph{transducer} is a
finite automaton where every edge reads input and produces output.  
For a transducer $\T$ and a language $L$, the language $\T L$ consists of all
words output by the transducer while reading a word from $L$.
We call a model $\mM$ a \emph{full trio model} if given a transducer $\T$ and a
language $L$ described with $\mM$, one can compute in polynomial time a
description of $\T L$. A \emph{substitution} is a map $\sigma\colon
X\to\Powerset{Y^*}$ that replaces each letter by a language. For languages $L$,
we define $\sigma(L)$ in the obvious way. We call $\sigma$ \emph{simple} if
$X\subseteq Y$ and there is some $x\in X$ such that for all $x'\in
X\setminus\{x\}$, we have $\sigma(x')=\{x'\}$ and $x$ occurs in each word from
$L$ at most once. We say that $\mM$ has \emph{closure under simple
substitutions} if given a description of $L$ and of $\sigma(x)$ in $\mM$, we can compute in
polynomial time a description of $\sigma(L)$.

\begin{thm}\label{generichardness}
Let $t\colon \N\to\N$ be amplifying and let $\mM$ and $\mN$ be full trio models
that are $\LowerT{t}$ and have closure under simple substitutions. Then both
$\IncProb{\mM}{\mN}$ and $\EqProb{\mM}{\mN}$ are hard for $\coNTIME{t}$.
Moreover, this hardness already holds for binary alphabets.
\end{thm}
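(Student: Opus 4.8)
The plan is to reduce from $\coNTIME{t}$. Fix a language $L_0\in\coNTIME{t}$ and a nondeterministic Turing machine $N$ of time bound $O(t(|w|^c))$ whose accepted language is the complement of $L_0$, so that $w\in L_0$ iff $N$ has no accepting run on $w$. By padding we may assume that on an input of length $m$, every run of $N$ uses a tape of one uniform length and takes at most $R(m)=t(q(m))$ steps for a fixed polynomial $q$; encoding a run as a computation history $c_0\#c_1\#\cdots\#c_K\#$ over a fixed alphabet $\Gamma$ (with all $c_i$ of that uniform length) then yields words of length at most $\ell_0(m)=\Theta(R(m)^2)$. Since $t$ is amplifying, I can pick a polynomial $q'$ with $t(q'(m))\ge R(m)^2\ge\ell_0(m)$; set $\ell:=t(q'(m))$. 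The engine of the reduction is the remark that on a \emph{flat} language---one all of whose words have the same length $\ell$---the subword order coincides with equality, so that flat $A,B\subseteq\Sigma^\ell$ satisfy $\Dclosure A\subseteq\Dclosure B$ iff $A\subseteq B$.

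Over a fixed alphabet $\Sigma\supseteq\Gamma$ with two fresh letters $\bot,\$$, I would build a language $K$ in $\mM$ with $\Dclosure K=\Sigma^{\le\ell}$ and a language $L$ in $\mN$ consisting of all words over $\Sigma$ of length $\le\ell$ that are \emph{not} of the form $h\,\bot\,\$^{\,\ell-1-|h|}$ with $h$ a computation history of an accepting run of $N$ on $w$. Since $\ell>\ell_0(m)$, such ``good'' words exist precisely when $N$ accepts $w$. Every word of $\Dclosure K$ of length $<\ell$ already lies in $L$, and a word of length exactly $\ell$ can be a subword of a word of $L$ (all of length $\le\ell$) only by equaling it; hence $\Dclosure K\subseteq\Dclosure L$ iff $\Sigma^\ell\subseteq L$ iff no good word exists iff $w\in L_0$. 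As $\Dclosure L\subseteq\Sigma^{\le\ell}=\Dclosure K$ holds unconditionally, $\Dclosure K=\Dclosure L$ is equivalent to $\Dclosure K\subseteq\Dclosure L$, so the same pair $(K,L)$ reduces $L_0$ to both $\IncProb{\mM}{\mN}$ and $\EqProb{\mM}{\mN}$.

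It remains to produce $K$ and $L$ in polynomial time. Applying property $\LowerT t$ at argument $q'(m)$ gives a finite language whose longest word has length $\ell$; a rational transduction that reads such a word and, per input symbol, outputs either an arbitrary letter of $\Sigma$ or nothing turns it into $\Sigma^{\le\ell}$, which is polynomial as $\mM$ is a full trio. The language $L$ I would present as one rational transduction of the $\LowerT t$-language (several transducers merged into one) together with simple substitutions: it is the union of all words of length $<\ell$; of all length-$\ell$ words not of the shape $\Gamma^*\bot\$^*$; and, for each local way a prefix $\Gamma^a$ can fail to be an accepting-run history---separators wrongly spaced, initial configuration wrong, a step violating $N$'s transition relation, or no accepting symbol present---the length-$\ell$ words $\Gamma^a\bot\$^{\ell-1-a}$ witnessing it. The first two families are again transduction images of the $\LowerT t$-word; the error families must compare positions $\Theta(R(m))=\Theta(t(q(m)))$ apart, which I would realize by substituting, via a \emph{simple} substitution, a counting block obtained from property $\LowerT t$ at argument $q(m)$ into a constant-size scaffold whose remaining windows are pinned down by a fixed regular language detecting the violation. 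Finally, since $\mM$ and $\mN$ are full trios the whole construction can be carried out over $\{0,1\}$---encode all finite data in binary, keeping windows of constant width and merely scaling lengths by a constant (absorbed into the padding)---giving the hardness already for binary alphabets.

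The step I expect to be the main obstacle is the design of the history encoding: making every consistency requirement local \emph{up to a single inserted counting gadget}, ensuring those gadgets---and hence the well-formed words of $L$---come out with exactly the intended length rather than merely a bounded one (property $\LowerT t$ only supplies a word of length $t(n)$, not a flat language), and checking that the listed error types genuinely exhaust all non-accepting histories. The amplifying hypothesis is exactly what lets the quadratic target length $\ell\approx R(m)^2$ be attained by a single polynomial reparametrization of $t$, which is why the theorem is stated for amplifying $t$.
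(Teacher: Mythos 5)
Your proposal is correct and follows essentially the same route as the paper: reduce from the complementary NTM, encode computation histories so that only the words of maximal length $\ell$ matter for the inclusion $\Dclosure{K}\subseteq\Dclosure{L}$, use the amplifying hypothesis to absorb the quadratic blowup of history length, build the length-$\Theta(t)$ "pointer" gadgets from $\LowerT{t}$ via transductions and simple substitutions, and push everything to a binary alphabet by a block encoding. The obstacle you flag about $\LowerT{t}$ not yielding a flat language is resolved exactly by your own observation that all shorter words are harmless (the paper simply takes downward closures of the gadgets and restricts attention to the length-$m$ slice), so no genuine gap remains.
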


Since NFAs are $\LowerT{n}$, \cref{generichardness} yields $\coNP$-hardness for
inclusion and equivalence.  In~\cite{BachmeierLuttenbergerSchlund2015},
hardness of equivalence was shown directly.  RBCA and CFG clearly exhibit
closure under simple substitutions and can generate exponentially long words.
This yields:

\begin{cor}\label{expmodels}
For $\mM,\mN\in\{\fCFG,\fRBC\}$, $\IncProb{\mM}{\mN}$ and $\EqProb{\mM}{\mN}$ are $\coNEXP$-hard.
\end{cor}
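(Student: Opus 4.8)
The plan is to instantiate the generic hardness theorem (\cref{generichardness}) with the exponential bound $t(n)=2^n$. First I would record that $\coNTIME{2^n}$ is precisely $\coNEXP$: unfolding the definition, $\coNTIME{2^n}$ is the class of complements of languages accepted by nondeterministic Turing machines running in time $O(2^{n^c})$ for some constant $c$, which is the usual characterization of $\coNEXP$. Next I would verify that $t(n)=2^n$ is amplifying: it is monotone, $2^n\ge n$ for all $n\ge 0$, and with the polynomial $p(n)=2n$ we get $t(p(n))=2^{2n}=(2^n)^2=t(n)^2$, so the required inequality holds (with equality) for every $n$. Consequently, by \cref{generichardness}, it suffices to check that each of $\fCFG$ and $\fRBC$ is a full trio model, has closure under simple substitutions, and has property $\LowerT{2^n}$; the theorem then applies to every pair $\mM,\mN\in\{\fCFG,\fRBC\}$ and gives $\coNEXP$-hardness of $\IncProb{\mM}{\mN}$ and $\EqProb{\mM}{\mN}$, in fact already over binary alphabets.

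For $\fCFG$, the full trio property with polynomial-time closure under rational transductions is classical (the product/triple construction of a grammar with a transducer has polynomial size), and closure under simple substitutions is immediate: replace the distinguished terminal $x$ everywhere by the start symbol of a grammar for $\sigma(x)$ and take the union of the two grammars. For $\fRBC$ I would use the blind-counter presentation from \cref{results}: the product of a blind $k$-counter automaton with a finite transducer is again a blind $k$-counter automaton of polynomial size, so $\fRBC$ is a full trio model. For a simple substitution with distinguished letter $x$, I would splice an automaton $\A_x$ for $\sigma(x)$ into every transition reading $x$, redirecting such a transition through a fresh copy of $\A_x$ whose counters occupy a coordinate block disjoint from those of the host automaton. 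Because $x$ occurs at most once in any word of $L$, every accepting run enters at most one such copy, so the counters of all other copies stay at $0$ and the reversal bound of each counter is preserved; the construction is clearly polynomial. This "at most once" clause is exactly what makes the disjoint-counter splicing sound, and it is the one point I would treat with care.

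Finally, for $\LowerT{2^n}$: for $\fCFG$ the grammar with productions $A_i\to A_{i-1}A_{i-1}$ for $i\in[1,n]$ and $A_0\to a$, with start symbol $A_n$, has size $O(n)$ and generates the single word $a^{2^n}$. For $\fRBC$ I would construct in polynomial time an $O(n)$-counter automaton that reads a block $a^N$ into a counter $c_n$, then for $i=n,n-1,\dots,2$ repeatedly decrements $c_i$ twice while incrementing $c_{i-1}$ once and, upon exiting this loop, zero-tests $c_i$ (which forces the current value to have been even and halves it into $c_{i-1}$), and finally checks $c_1=2$ by two decrements and a zero test; each counter reverses exactly once, and the machine accepts precisely $\{a^{2^n}\}$. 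Hence both models are $\LowerT{2^n}$ and the hypotheses of \cref{generichardness} are met. The genuinely non-routine ingredient is \cref{generichardness} itself, which we are allowed to assume; for the corollary the only things needing a moment of care are the identification $\coNTIME{2^n}=\coNEXP$, the amplifying condition, and the reversal/size bookkeeping in the $\fRBC$ substitution-closure argument—everything else is a direct instantiation.
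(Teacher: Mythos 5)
Your proposal is correct and follows essentially the same route as the paper: instantiate \cref{generichardness} with the amplifying bound $t(n)=2^n$ (so that $\coNTIME{t}=\coNEXP$) and exhibit, for each of $\fCFG$ and $\fRBC$, a polynomial-size device generating $\{a^{2^n}\}$. The only cosmetic differences are that you spell out the full-trio and simple-substitution closure arguments that the paper treats as immediate, and your $\fRBC$ witness halves a guessed input down to $2$ with zero tests rather than doubling up to $2^n$ with blind counters; both variants are fine.
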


From \cref{generichardness}, we can also deduce hardness for other models.  It was shown by
\citewithauthor{HabermehlMeyerWimmel2010} that downward closures or Petri net
languages are computable, which implies decidability of our problems. We use \cref{generichardness} to prove an
Ackermann lower bound. Let $A_n\colon\N\to\N$ be defined as $A_0(x)=x+1$,
$A_{n+1}(0)=A_n(1)$, and $A_{n+1}(x+1)=A_n(A_{n+1}(x))$.  Then, the function
$A\colon\N\to\N$ with $A(n)=A_n(n)$ is the \emph{Ackermann function}.  Of
course, for large enough $n$, we have $A_n(x)\ge x^2$. For such $n$, we have
$A(n+1)=A_{n}(A_{n+1}(n))\ge A_{n+1}(n)^2\ge A(n)^2$, so $A$ is amplifying.  A
result of \citewithauthor{MayrMeyer1981} (see also \cite{PrieseWimmel2003}) states that given $n\in\N$,
one can construct in polynomial time a Petri net that, from its initial
marking, can produce up to $A(n)$ tokens in an output place. Hence,
Petri nets are $\LowerT{A}$ and they clearly satisfy the language-theoretic conditions.
\begin{cor}
For Petri net languages, inclusion and equivalence of downward closures is Ackermann-hard.
\end{cor}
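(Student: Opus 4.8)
The plan is to obtain the result as a direct instantiation of \cref{generichardness} with $\mM=\mN$ the model of Petri net languages and $t=A$ the Ackermann function. Two of the three required ingredients are already in place: $A$ is amplifying (since $A(n)\ge n$ and $A(n+1)\ge A(n)^2$ for large $n$, as observed above), and, via the Mayr--Meyer construction, Petri nets are $\LowerT{A}$. For the latter I would make explicit the passage from ``producing $A(n)$ tokens'' to ``a finite language with a long word'': from the net producing up to $A(n)$ tokens in a distinguished output place, add in polynomial time a transition that consumes one token from that place while reading the letter $a$, and declare every marking with the output place emptied to be accepting. The resulting language is $\{a^i \mid 0\le i\le A(n)\}$, which is finite and whose longest word has length $A(n)$.

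The remaining work is to verify the two language-theoretic closure properties for Petri net languages. For the full-trio property, given a transducer $\T$ and a Petri net $N$ with $\langof{N}=L$, the synchronous product of $\T$ with $N$ — keeping $\T$'s state in the finite control, emitting $\T$'s output, and accepting only when both components accept — is a polynomial-size Petri net for $\T L$. For closure under simple substitutions, assume $\sigma$ fixes every letter except $x$, and $x$ occurs at most once in every word of $L$; given Petri nets for $L$ and for $\sigma(x)$, I would build a net that simulates $N$ and, at the (at most one) transition reading $x$, instead transfers control to a disjoint copy of the net for $\sigma(x)$, returning to the simulation of $N$ once that copy accepts. A single control bit records whether the detour has already been taken, so the construction is polynomial.

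With all hypotheses of \cref{generichardness} checked, it follows that $\IncProb{\mM}{\mN}$ and $\EqProb{\mM}{\mN}$ are hard for $\coNTIME{A}$; as $A$ is the Ackermann function, this is precisely Ackermann-hardness of inclusion and equivalence of downward closures for Petri net languages, and already over a binary alphabet. The only mildly delicate point is the simple-substitution construction: one must ensure the detour into $\sigma(x)$'s net cannot be entered twice and that its places do not interfere with those of $N$, which the disjoint copy of places together with the control bit guarantees. Everything else is routine.
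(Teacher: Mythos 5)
Your proposal matches the paper's argument exactly: the paper also obtains the corollary by instantiating \cref{generichardness} with $t=A$, verifying that $A$ is amplifying via $A(n+1)=A_n(A_{n+1}(n))\ge A(n)^2$, invoking the Mayr--Meyer construction for $\LowerT{A}$, and noting that Petri net languages satisfy the full-trio and simple-substitution conditions (which the paper dismisses as ``clearly'' holding, whereas you spell out the standard product and detour constructions). Your added details are correct and consistent with what the paper leaves implicit.
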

 
Building
on the sufficient condition of~\cite{Zetzsche2015b},
\citewithauthor{HagueKochemsOng2016} have shown that downward closures are
computable for higher-order pushdown automata. However, the method of
\cite{Zetzsche2015b} does not yield any information about the complexity of
this computation. For $k\in\N$, we denote by $\exp_k$ the function with
$\exp_0(n)=n$ and $\exp_{k+1}(n)=2^{\exp_k(n)}$.  It is easy to see that
order-$k$ pushdown automata are $\Delta(\exp_k)$ (for instance, one can adapt
Example~2.5 of~\cite{DammGoerdt1982}).  By $\coNEXP[k]$, we denote
the complements of languages accepted by nondeterministic Turing machines in
time $O(\exp_k(n^c))$ for some constant $c$.
\begin{cor}
For higher-order pushdown automata of order $k$, inclusion and equivalence of
downward closures is hard for $\coNEXP[k]$.
\end{cor}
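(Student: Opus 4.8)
The plan is to obtain this corollary as a direct instantiation of the generic hardness result \cref{generichardness}, with $\mM=\mN$ the model of order-$k$ pushdown automata and the amplifying function $t=\exp_k$. Concretely, it suffices to verify four things: (i)~$\exp_k$ is amplifying; (ii)~order-$k$ pushdown automata are $\LowerT{\exp_k}$; (iii)~they form a full trio model; and (iv)~they have closure under simple substitutions. Granting these, \cref{generichardness} yields that $\IncProb{\mM}{\mN}$ and $\EqProb{\mM}{\mN}$ are hard for $\coNTIME{\exp_k}$, and since $\coNTIME{\exp_k}=\coNEXP[k]$ by the very definition of $\coNEXP[k]$ recalled above, the corollary follows (even for binary alphabets, by the last sentence of \cref{generichardness}).

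For (i), I would argue that $\exp_k$ is monotone with $\exp_k(n)\ge n$, and that with $p(n)=n^2$ (one can take $p(n)=2n$ once $k\ge 1$) one has $\exp_k(p(n))\ge \exp_k(n)^2$ for large $n$, by a one-line induction on $k$: for $k\le 1$ this is $2^{n^2}\ge 2^{2n}=(2^n)^2$, and for $k\ge 2$ it follows from monotonicity together with $\exp_{k-1}(2n)\ge 2\,\exp_{k-1}(n)$ (which in turn reduces to $\exp_{k-2}(2n)\ge \exp_{k-2}(n)+1$), giving $\exp_k(p(n))=2^{\exp_{k-1}(p(n))}\ge 2^{2\exp_{k-1}(n)}=\exp_k(n)^2$. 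For (ii), I would simply invoke the observation already made in the text that order-$k$ pushdown automata are $\LowerT{\exp_k}$, obtained by adapting Example~2.5 of \cite{DammGoerdt1982}: an order-$k$ stack serves as a $k$-fold nested counter enumerating a finite language whose longest word has length $\exp_k(n)$, and this device is constructed from $n$ in polynomial time. For (iii) and (iv), I would note that the class of order-$k$ pushdown languages is a (principal) full AFL and that the relevant closures are effective in polynomial time: a transducer is folded into the finite control by a product construction that only enlarges the control, and for a simple substitution $\sigma$ with marked letter $x$ one splices the automaton for $\sigma(x)$ into the automaton for $L$ at the unique position where $x$ is read, the two computations never interleaving because $x$ occurs at most once. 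This is entirely analogous to the remark that Petri net languages ``clearly satisfy the language-theoretic conditions''.

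I do not expect a genuine obstacle here, since \cref{generichardness} carries the whole argument and the rest is bookkeeping. The two points that deserve a line of care are making the amplifying estimate for $\exp_k$ precise (the short induction above) and confirming polynomial-time effectiveness of the transducer-product and substitution-splicing constructions for order-$k$ pushdown automata—both routine, the former because it merely grows the finite control and the latter because the marked letter is read at most once, so no stack interference can occur.
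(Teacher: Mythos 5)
Your proposal matches the paper's argument exactly: the corollary is obtained as a direct instantiation of \cref{generichardness} with $t=\exp_k$, using the observation that order-$k$ pushdown automata are $\LowerT{\exp_k}$ (via Example~2.5 of \cite{DammGoerdt1982}) and satisfy the two language-theoretic closure conditions, with $\coNTIME{\exp_k}=\coNEXP[k]$ holding by definition. Your extra care in checking that $\exp_k$ is amplifying and that the transducer and substitution constructions are polynomial-time is just a more explicit version of what the paper leaves implicit.
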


Our last hardness result could also be shown using the method of
\cref{generichardness}.  However, it is simpler to reduce a variant of the
subset sum problem~\cite{BeKaLaPlRy1997}.  
\begin{prop}\label{phpitwohardness}
$\IncProb{\fNFA}{\fRBC}$ and $\EqProb{\fNFA}{\fRBC}$ are $\PHPitwo$-hard, even for binary alphabets.
\end{prop}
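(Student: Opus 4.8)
The plan is to reduce from a $\PHPitwo$-hard variant of the subset sum problem (see~\cite{BeKaLaPlRy1997}); concretely I will use the problem of deciding, given binary-encoded positive integers $a_1,\dots,a_m,b_1,\dots,b_n$ and a target $t$, whether for every $S\subseteq[m]$ there exists $T\subseteq[n]$ with $\sum_{i\in S}a_i+\sum_{j\in T}b_j=t$. The idea is that the outer universal quantifier is carried by downward-closure inclusion, while the inner existential quantifier is carried by the nondeterminism of a counter automaton. Over the binary alphabet $\{0,1\}$, I take the first language to be $\langof{\A}=\{0,1\}^m$, generated by a constant-size NFA $\A$, and the second to be $\langof{\B}=\{x\in\{0,1\}^m \mid \exists\,y\in\{0,1\}^n\colon \sum_i a_ix_i+\sum_j b_jy_j=t\}$, recognised by a reversal-bounded counter automaton $\B$ constructed below. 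Then $\Dclosure{\langof{\A}}=\{0,1\}^{\le m}$, whose maximal words are precisely the words of length $m$; and since all words of $\langof{\B}$ have length exactly $m$, a length-$m$ word lies in $\Dclosure{\langof{\B}}$ iff it lies in $\langof{\B}$. Hence $\Dclosure{\langof{\A}}\subseteq\Dclosure{\langof{\B}}$ iff $\{0,1\}^m\subseteq\langof{\B}$, i.e.\ iff the instance is positive. Since moreover $\langof{\B}\subseteq\{0,1\}^m=\langof{\A}$, the reverse inclusion $\Dclosure{\langof{\B}}\subseteq\Dclosure{\langof{\A}}$ always holds, so the same construction also yields $\Dclosure{\langof{\A}}=\Dclosure{\langof{\B}}$ iff the instance is positive, which settles $\EqProb{\fNFA}{\fRBC}$ at the same time. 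The reduction is polynomial and uses two letters.

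The substance is to build $\B$ as a polynomial-size reversal-bounded counter automaton accepting exactly the positive choices $x$. While reading the $m$ input bits, $\B$ keeps a counter holding $\sum_{\ell:\,x_\ell=1}a_\ell$ (adding $a_\ell$ upon reading $x_\ell=1$); afterwards it nondeterministically adds a subset of the $b_j$ by $\eword$-moves (this realises the choice of $y$); finally it checks that the total equals $t$. The one subtlety is that the coefficients are exponentially large, so adding or building a binary constant $c$ is done by a gadget of size $O(\log c)$ that scans the bits of $c$ from the top, repeatedly doubling an auxiliary counter (shift it into a spare counter with two increments per step, then back) and adding $1$ for a set bit. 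This produces $\eword$-runs of exponential length, which is harmless, and costs only $O(1)$ extra counters and polynomially many reversals; since the class $\fRBC$ fixes neither the number of counters nor the number of reversals, $\B$ is a legitimate $\fRBC$ instance (and it converts into a blind counter automaton, as used throughout the paper, with only a polynomial change of parameters).

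The step I expect to require the most care is the interaction of downward closure with the encoding. Because $\Dclosure{\langof{\B}}$ is automatically downward closed, one must ensure that a bit string encoding a negative choice $S$ cannot enter $\Dclosure{\langof{\B}}$ as a subword of a string encoding a positive choice; encoding every $S$ by a bit string of the one fixed length $m$ is exactly what makes all these words pairwise $\preceq$-incomparable and rules this out. Correctness then reduces to checking that $\B$ accepts precisely the positive $x$ (a routine inspection of the gadgets) together with the elementary bookkeeping above about the maximal words of $\Dclosure{\langof{\A}}$ and $\Dclosure{\langof{\B}}$. Finally, that polynomially many reversals are genuinely used (rather than a bounded number) matches the fact that $\IncProb{\fNFA}{\fRBC[k,r]}$ is only $\coNP$-complete.
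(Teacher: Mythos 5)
Your proposal is correct and follows essentially the same route as the paper: reduce from the generalized subset sum problem, let the NFA accept the full cube $\{0,1\}^m$ and the RBCA accept the positive choices $x$, and exploit the fact that all encodings have the same length $m$ so that downward-closure inclusion collapses to ordinary inclusion (and the reverse inclusion holds trivially, settling equivalence too). The only difference is an implementation detail in handling the binary-encoded coefficients: the paper uses $O(k)$ counters each making one reversal (a doubling cascade), whereas you use $O(1)$ counters with polynomially many reversals via a Horner-style gadget; both yield legitimate polynomial-size $\fRBC$ instances.
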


We have thus shown hardness for all inclusion problems that do not involve
ideals. The remaining cases inherit hardness from the emptiness problem (for
$\IncProb{\mM}{\Ideal}$) or the non-emptiness problem
($\IncProb{\Ideal}{\mM}$).

\printbibliography

\appendix

\section{Ideals and Witnesses}

\begin{proof}[Proof of \cref{shortwitness}]
Let $\A=(Q,X,\Delta,q_0,Q_f)$. Consider the DFA
$\B=(\Powerset{Q},X,\Delta',Q,Q'_f)$, where from a state $P\subseteq Q$ on input
$x\in X$, we enter the state $P'$, consisting of all $q'\in Q$ that are
reachable from a state in $P$ via a path on which $x$ occurs. Moreover, $Q'_f$
is the set of all $P\subseteq Q$ with $P\cap Q_f=\emptyset$. Then clearly $\B$
accepts $X^*\setminus \Dclosure{\langof{\A}}$.

Choose $w\in\Dclosure{K}\setminus\Dclosure{\langof{\A}}$ of minimal length and
write $w=w_1\cdots w_m$ for letters $w_1,\ldots,w_m$. Suppose $m>|\A|+1$ and
consider the run of $w$ in $\B$. For each $i\in[0,m]$, let $P_i\subseteq Q$ be
the state entered after reading $w_1\cdots w_i$.  Then we have $P_0\supseteq
P_1\supseteq\cdots$ and since $m>|Q|+1$, there are $i<j$ with
$P_i=P_j$. Yet this means that also $w'=w_1\cdots w_iw_{j+1}\cdots w_m$ is a
member of $\langof{\B}=X^*\setminus\Dclosure{\langof{\A}}$. Moreover, we have $w'\subword
w$ and thus $w'\in\Dclosure{K}$. This contradicts our choice of $w$.
\end{proof}

\begin{proof}[Proof of \cref{idealwitness}]
The implications ``\labelcref{iwit:ideal}$\Rightarrow$\labelcref{iwit:every}''
and ``\labelcref{iwit:every}$\Rightarrow$\labelcref{iwit:some}'' are trivial,
so assume \labelcref{iwit:some}. Write $\Dclosure{L}=\bigcup_{i=1}^\ell I_i$ as a
union of ideals of length $\le\IdealUpper{L}$.  Then we have $\Word{Y_0}^m x_1
\Word{Y_1}^m \cdots x_n \Word{Y_n}^m\in I_i$ for some $i$. Since $I_i$ has
length at most $\IdealUpper{L}$, there is an NFA $\A$ with at most
$\IdealUpper{L}+1$ states for $I_i$. However, we have $m\ge\IdealUpper{L}+1$, so in
the computation of the NFA for $\Word{Y_0}^m x_1 \Word{Y_1}^m \cdots x_n
\Word{Y_n}^m$, for each $i\in[0,n]$, some power $\Word{Y_i}^{k_i}$, $k_i>0$,
has to lie on a cycle of $\A$. We can therefore pump each of these cycles,
which implies $I\subseteq I_i\subseteq \Dclosure{L}$.
\end{proof}

\begin{proof}[Proof of \cref{idealdfa}]
Let $I=Y_0^*\{x_1,\varepsilon\}Y_1^*\cdots\{x_n,\varepsilon\}Y_n^*$ with.   For $i\in [0,n]$ and $a\in X$, let 
\[ J_{i,a}=\{j\in [i,n] \mid a\in Y_i^*\{x_{i+1},\varepsilon\}Y_{i+1}^*\cdots \{x_{j},\varepsilon\}Y_{j}^* \} \]
Our DFA has states $Q=\{0,\ldots,n+1\}$ and for $i\in Q$, we have $i\xrightarrow{a} j$ if and only if
\[ j=\begin{cases} \min J_{i,a} & \text{if $J_{i,a}\ne\emptyset$} \\ n+1 & \text{if  $J_{i,a}=\emptyset$}\end{cases} \]
Moreover, $0$ is the initial state and the states $0,\ldots,n$ are final.
Clearly, the automaton is ordered, has $n+2$ states, and can be constructed in
logarithmic space.  In order to show the correctness, we define the ideal
$I_k=Y_0^*\{x_1,\varepsilon\}Y_1^*\cdots\{x_k,\varepsilon\}Y_k^*$ for each
$k\in[0,n]$. Observe that $I_0\subseteq I_1\subseteq \cdots \subseteq I_n=I$.
By induction on the length of $w$, it is easy to see that if
$0\xrightarrow{w}j$, then
\begin{itemize}
\item if $j\in[0,n]$, then $j$ is the smallest number with $w\in I_j$.
\item if $j=n+1$, then $w\notin I$.
\end{itemize}
In particular, the automaton accepts $I$.
\end{proof}

\begin{proof}[Proof of \cref{ordereddfacycles}]
To make our induction work, we define $f_n\colon \N\to\N$ by $f_n(1)=n-1$
and $f_n(k)=(f_n(k-1)+1)\cdot (n-1)$. We claim that if $w>f_n(|X|)$ for $w\in X^*$,
then $w$ has a position at which every ordered $n$-state DFA cycles.

We proceed by induction on $|X|$. If $X=\{a\}$, then it suffices to consider
$w=a^{n}$.  Consider an ordered $n$-state DFA $\A$ and let $q_0,q_1,\ldots,q_n$
be the states occupied while reading $w$. Then there are $i<j$ with $q_i=q_j$
and since $\A$ is ordered, we have $q_i=q_{i+1}$. This means, $q_i$ has an
$a$-labeled loop and therefore $q_i=q_{i+1}=q_{i+2}=\cdots =q_n$. In
particular, $\A$ cycles at the last position of $w$.

Now suppose $k=|X|>1$ and $|w|>f_n(k)=(f_n(k-1)+1)(n-1)$. For every word $v\in X^*$,
let $\alpha(v)\in X^*$ be the shortest prefix of $v$ in which every letter from
$X$ occurs.  If $v$ does not contain every letter from $X$, then we define
$\alpha(v)=v$.  We factorize $w$ as $p_1\cdots p_m$ by applying $\alpha$ to
$w$, then applying $\alpha$ to the rest of the word, and so on. Formally, we
set $r_0=w$, $p_i=\alpha(r_{i-1})$, and define $r_i$ so that $r_{i-1}=p_ir_i$.
For some smallest $m\ge 1$, we have $p_m=r_m$. Then clearly $w=p_1\cdots p_m$
and every $p_i$ is non-empty.

For each $i\in[1,m]$, let $p'_i$ be obtained from $p_i$ by removing its last
position. By the choice of $p_i$, the word $p'_i$ contains at most $k-1$
distinct letters. Hence, if $|p'_i|>f_n(k-1)$ for some $i\in[1,m]$, then $p'_i$
contains a position at which every ordered $n$-state DFA cycles. In particular,
$w$ contains such a position (because every computation on $w$ contains some
computation on $p'_i$).  Therefore, we may assume that $|p_i|=|p'_i|+1\le
f_n(k-1)+1$ for every $i\in[1,m]$.

If we had $m\le n-1$, this would imply $|w|=|p_1\cdots p_m|\le
(f_n(k-1)+1)(n-1)$, which is not the case.  Hence, we have $m\ge n$.
Now consider an ordered $n$-state DFA $\A$ with its computation
\[q_0\xrightarrow{p_1} q_1\xrightarrow{p_2} \cdots \xrightarrow{p_m} q_m. \]
Since $m+1>n$, there are $i<j$ with $q_i=q_j$ and since $\A$ is ordered,
we have $q_i=q_{i+1}=\cdots=q_j$. We distinguish two cases.
\begin{itemize}
\item If $j=m$, then our computation cycles at every position in $p_m$.
\item If $j<m$, then $p_j$ contains every letter from $X$ at least once. This
means $q_i$ has an $a$-loop for every $a\in X$. Therefore,
$q_i=q_{i+1}=\cdots=q_m$. In particular, our computation cycles on every
position in $p_m$.
\end{itemize}
Thus, we have shown that any ordered $n$-state DFA cycles on every position in $p_m$,
which proves our claim.

From the definition of $f_n$, it follows easily by induction that $f_n(k)=\sum_{i=1}^k (n-1)^i$
and hence $f_n(k)\le k\cdot (n-1)^k$.
\end{proof}

\section{Insertion trees}

\begin{proof}[Proof of \cref{trees}]
Let $w\in\Delta^*$ be a prime $q$-cycle and let $P_w\subseteq Q$ be the set of
states occurring properly in $w$. We show by induction on $|P_w|$ that every
prime cycle $w$ admits an insertion tree of height at most $|P_w|$.

If no state from $P_w$ repeats in $w$, then $w$ is simple and the statement is
trivial. For each $p\in P_w$ that does repeat in $w$, let $\lambda(p)$ be the
length of the longest $p$-cycle that is a factor of $w$. Among all states from
$P_w$ that repeat in $w$, we choose $p$ such that $\lambda(p)$ is maximal. Then
$w=xyz$ where $y$ is a $p$-cycle of length $\lambda(p)$. Observe that by the
maximality of $p$, there is no state that occurs properly both in $x$ and in
$z$.

We write $y=y_1\cdots y_r$ such that each $y_i$ is a prime $p$-cycle. Then
since $p$ does not occur properly in $y_i$, each $y_i$ admits an insertion tree
$t_i$ of height $|P_w|-1$.

Consider any insertion tree $t$ of $xz$. Observe that since there is no
state that occurs properly both in $x$ and in $z$, the only cycle in $t$ where
$p$ can occur is $t$'s root. Therefore, if $s_1,\ldots,s_k$ are the subtrees of
$t$ immediately below the root, then no $\alpha(s_i)$ contains $p$. We can
therefore factorize each $\alpha(s_i)$ into prime cycles that each have an insertion tree 
of height at most $|P_w|-1$.  Thus, by replacing in $t$ each
$s_i$ by this sequence of trees, we obtain an insertion tree $t'$ of $xz$ of
height at most $|P_w|$.

Since $p$ occurs in the root of $t'$ and this is the only occurrence of $p$ in
$t'$, we can attach the trees $t_i$ directly below the root of $t'$ to obtain a
insertion tree $t''$ of $w$. Moreover, since each $t_i$ has height at
most $|P_w|-1$, $t''$ has height at most $|P_w|$.
\end{proof}

\begin{proof}[Proof of \cref{pumpideal}]
If $F=\emptyset$, then we can duplicate every tree in the sequence, leading to
$\Dclosure{\PumpLang{s}{F}}=Y^*$, where $Y$ is the set of letters occurring
anywhere on a tree in $s$. Hence, $\Dclosure{\PumpLang{s}{F}}$ is an ideal of
expression length one.  Thus, we assume $F\ne\emptyset$.

As the first step, we consider the case where $s$ consists of one tree $t$.
Let $A$ be the set of vertices in $t$ that are ancestors of vertices in $F$. We
show by induction on $h$ that $\Dclosure{\PumpLang{t}{F}}$ is an ideal
and $\Elen{\PumpLang{t}{F}}\le |A|\cdot (2|Q|+|F|)$.

Let $r$ be the root of $t$ and $\gamma(r)=e_1\cdots e_\ell$, where
$e_1,\ldots,e_\ell\in\Delta$.  Let $C$ be the set of children of $r$ that are
in $A$. Moreover, let $e_i=(q_{i-1}, a_i, q_i)$ for $i\in[1,\ell-1]$. Recall
that every child of $r$ is assigned a $q_i$-cycle for some $i\in[1,\ell-1]$.
For each $i\in [1,\ell-1]$, consider the subtrees `inserted after $e_i$': In
other words, those subtrees directly below $r$ whose root node is assigned a
$q_i$-cycle by $\gamma$. Some of them contain a fixed vertex; let
$s_{i,1},\ldots,s_{i,k_i}$ be those subtrees. The other subtrees inserted after
$e_i$ are pumpable; let $Y_i$ be the set of input letters occurring in them.
Let $F_{i,j}\subseteq F$ be the set of fixed nodes in $s_{i,j}$. Moreover, let
$A_{i,j}$ be the set of vertices in $s_{i,j}$ that are ancestors of fixed
vertices (in $s_{i,j}$). Note that since $F\ne\emptyset$, we have $r\in A$ and thus
\[ |A|=1+\sum_{i=1}^{\ell-1} \sum_{j=1}^{k_i} |A_{i,j}|. \]
By induction, $\Dclosure{\PumpLang{s_{i,j}}{F_{i,j}}}$ is an ideal and we have
\begin{equation} \Elen{\Dclosure{\PumpLang{s_{i,j}}{F_{i,j}}}} \le |A_{i,j}|\cdot (2|Q|+|F_{i,j}|)\le |A_{i,j}|\cdot (2|Q|+|F|).\label{ideallength:induction}\end{equation}
\begin{itemize}
\item Suppose $r\in F$. Then we have
$\Dclosure{\PumpLang{t}{F}}=\{a_1,\eword\}I_1\{a_2,\eword\}\cdots I_{\ell-1}\{a_\ell,\eword\}$,
where
\[ I_i = Y_i^*(\Dclosure{\PumpLang{s_{i,1}}{F_{i,1}}})Y_i^*\cdots (\Dclosure{\PumpLang{s_{i,k_i}}{F_{i,k_i}}})Y_i^* \]
for $i\in[1,\ell-1]$. Hence, $\Dclosure{\PumpLang{t}{F}}$ is an ideal. Let us estimate the expression length.
Note that \eqref{ideallength:induction} yields
\[ \Elen{I_i} \le  k_i+1+\sum_{j=1}^{k_i} \Elen{\Dclosure{\PumpLang{s_{i,j}}{F_{i,j}}}}
\le k_i + 1 + (2|Q|+|F|)\cdot \sum_{j=1}^{k_i} |A_{i,j}| \]
and therefore
\begin{align*}
\Elen{\Dclosure{\PumpLang{t}{F}}} & \le \ell+\sum_{i=1}^{\ell-1} |I_i| \le \ell + \sum_{i=1}^{\ell-1} (k_i+1) + (2|Q|+|F|)\sum_{i=1}^{\ell-1}\sum_{j=1}^{k_i}|A_{i,j}| \\
&\le 2\ell + \underbrace{\sum_{i=1}^{\ell-1} k_i}_{\le |F|} ~~+~~ (2|Q|+|F|)\underbrace{\sum_{i=1}^{\ell-1}\sum_{j=1}^{k_i}|A_{i,j}|}_{=|A|-1} \\
&\le 2|Q|+|F| ~~+~~(2|Q|+|F|)\cdot (|A|-1) \\
&\le |A|\cdot (2|Q|+|F|).
\end{align*}
\item Suppose $r\notin F$. Then we have $\Dclosure{\PumpLang{t}{F}}=I_1\cdots I_{\ell-1}$, where
\[ I_i=Z_i^*(\Dclosure{\PumpLang{s_{i,1}}{F_{i,1}}})Z_i^*\cdots (\Dclosure{\PumpLang{s_{i,k_i}}{F_{i,k_i}}})Z_i^*, \]
for $i\in[1,\ell-1]$ with $Z_i=Y_i\cup \{a_1,\ldots,a_\ell\}$. Hence, $\Dclosure{\PumpLang{t}{F}}$ is an ideal. Let us estimate the expression length.
As before, \eqref{ideallength:induction} yields
\[ \Elen{I_i}\le k_i + 1 + \sum_{i=1}^{k_i} \Elen{\Dclosure{\PumpLang{s_{i,j}}{F_{i,j}}}} \le k_i + 1 + (2|Q|+|F|)\cdot \sum_{j=1}^{k_i} |A_{i,j}| \]
and therefore
\begin{align*}
\Elen{\Dclosure{\PumpLang{t}{F}}} &\le \sum_{i=1}^{\ell-1} (k_i+1) ~~+~~ (2|Q|+|F|)\sum_{i=1}^{\ell-1}\sum_{j=1}^{k_i}|A_{i,j}| \\
&\le \ell+\sum_{i=1}^{\ell-1} k_i ~~+~~ (2|Q|+|F|)\sum_{i=1}^{\ell-1}\sum_{j=1}^{k_i}|A_{i,j}| \\
&\le 2|Q|+|F|~~+~~(2|Q|+|F|)\cdot (|A|-1) \\
&\le |A|\cdot (2|Q|+|F|).
\end{align*}
\end{itemize}

This concludes our first step. Note that since $t$ has height $\le h$, every
vertex in $F$ has at most $h$ ancestors, so that $|A|\le h\cdot |F|$.  This
means, our first step implies that in the case of a single tree $t$, we have
$\Elen{\Dclosure{\PumpLang{t}{F}}}\le h|F|\cdot(2|Q|+|F|)$.

Let us now consider $\Dclosure{\PumpLang{s}{F}}$ where $s=t_1\cdots t_m$ is a
compatible sequence. Of the trees $t_1,\ldots,t_m$, let $t'_1,\ldots,t'_\ell$
be those which contain a fixed vertex. The other trees in the sequence
$t_1,\ldots,t_m$ are pumpable and we define $Y$ to be the set of letters
occurring in those pumpable trees. Note that $\ell\le |F|$.

According to our first step, we have
$\Elen{\Dclosure{\PumpLang{t'_i}{F}}}\le h|F|\cdot(2|Q|+|F|)$
for each $i\in[1,\ell]$. Moreover, we have
\[ \Dclosure{\PumpLang{s}{F}}=Y^*\Dclosure{\PumpLang{t'_1}{F}}Y^*\cdots \Dclosure{\PumpLang{t'_\ell}{F}}Y^*, \]
which means $\Dclosure{\PumpLang{s}{F}}$ is an ideal and we may estimate
\begin{align*}
\Elen{\Dclosure{\PumpLang{s}{F}}} &\le (\ell+1)+\sum_{i=1}^\ell \Elen{\Dclosure{\PumpLang{t'_i}{F}}} \le \ell+1+\ell\cdot h|F|\cdot (2|Q|+|F|) \\
&\le h\cdot |F|\cdot (2|Q|+|F|)^2,
\end{align*}
which proves the \lcnamecref{pumpideal}.
\end{proof}

\section{Counter automata}

We prove the statements of \cref{counterautomata} in the order they are made.
We begin with \cref{dc:blind}.
\begin{proof}[Proof of \cref{dc:blind}]
We have seen that $\Dclosure{\langof{\B_3}}=\Dclosure{\langof{\A}}$. To
estimate the size of $\B_3$, notice that 
\begin{align*}
|[-B,B]^k|&=(2B+1)^k\le (2(n+n\cdot(3n)^{(k+1)^2})+1)^k \\
&=(2n\cdot (3n)^{(k+1)^2}+2n+1)^k \le (3n)^{((k+1)^2+1)k} \le (3n)^{5k^3}.
\end{align*}
Furthermore, our stack alphabet satisfies
$|\Gamma|=n\cdot (2n+1)^k$, so that 
\[ |\Gamma^{\le n}|\le |\Gamma|^{n+1}=n^{n+1}\cdot (2n+1)^{(n+1)k}\le (3n)^{4nk}. \]
Finally, we can estimate $|\LinInd{[-n,n]^k}|\le \binom{(2n+1)^k}{k}\le
(2n+1)^{k^2}\le (3n)^{k^2}$.  This means in total $|Q_3|\le n\cdot
(3n)^{4nk+5k^3+2k^2}\le (3n)^{5nk+7k^3}$. This completes the proof of
\cref{dc:blind}.
\end{proof}

Next, we show that $\langof{\B_1}$ and $\langof{\A}$ have the same downward closure.
\begin{prop}\label{counter:b1}
$\langof{\A}\subseteq \langof{\B_1}\subseteq\Dclosure{\langof{\A}}$.
\end{prop}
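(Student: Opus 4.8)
The plan is to prove the two inclusions separately: $\langof{\A}\subseteq\langof{\B_1}$ via the decomposition of runs into insertion trees (\cref{trees}) together with \cref{pottier}, and $\langof{\B_1}\subseteq\Dclosure{\langof{\A}}$ using only that $\A$ is blind together with the cancellability clause in $\B_1$'s acceptance condition. For the first inclusion, fix an accepting walk $\pi$ of $\A$ reading $w$. I would first invoke the standard cycle-removal argument---which is exactly what insertion trees formalize, cf.\ \cref{trees}---to write $\pi$ as a base walk $u$ with $|u|<n$ together with, inserted at the positions of $u$, a family of insertion trees of height $\le n$ whose simple cycles, reinserted into $u$, reconstitute $\pi$. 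Since every simple cycle of $\A$ has length $\le n$, all partial sums of its effects lie in $[-n,n]^k$, so a traversal of these trees can always take the transitions \labelcref{edges:b1:cycle} and the stack never exceeds height $n$. The only remaining freedom is which of the finitely many cycles occurring in the trees to declare precise.

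This is where \cref{pottier} enters. Write $\delta_1,\dots,\delta_M\in[-n,n]^k$ for the distinct cycle effects (so $M\le(2n+1)^k$) with multiplicities $\mu_i$, and let $e_u$ be the effect of $u$. Declaring every cycle precise would leave the precise counter at the total effect of $\pi$, namely $0$, so $\sum_i\mu_i\delta_i=-e_u$; applying \cref{pottier} in the form stated after it, to the equation $\sum_i z_i\delta_i=-e_u$ (which has the solution $z=\mu$), gives a solution $z\in\N^M$ with $z\le\mu$ and $\sum_i z_i\le(3n)^{(k+1)^2}$, the bound being a short estimate from $\|\delta_i\|_\infty\le n$ and $M\le(2n+1)^k$. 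I would then declare, for each $i$, exactly $z_i$ of the $\mu_i$ cycles of effect $\delta_i$ precise and the rest obligation. Then $\B_1$ simulates $u$ with transitions \labelcref{edges:b1:ord} and traverses the trees, closing each cycle by \labelcref{edges:b1:precise} or \labelcref{edges:b1:obligation} as declared. At every moment the precise counter equals the effect of the prefix of $u$ read so far plus the sum of the effects of the precise cycles popped so far; since there are at most $\sum_i z_i\le(3n)^{(k+1)^2}$ precise cycles, each of sup-norm $\le n$, and $|u|<n$, its sup-norm stays $\le n+n\cdot(3n)^{(k+1)^2}=B$, so \labelcref{edges:b1:ord} and \labelcref{edges:b1:precise} are always available, and at the end it equals $e_u+\sum_i z_i\delta_i=0$, giving \labelcref{counter:acceptance:precise}. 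For \labelcref{counter:acceptance:cancellable}, the obligation cycle effects sum with multiplicity to $\sum_i(\mu_i-z_i)\delta_i=0$ and $S$ is exactly the set of those $\delta_i$ actually occurring among obligation cycles, each with multiplicity $\ge1$, so $(S,\emptyset)$---a fortiori $(S,T)$---is cancellable. Since the $\A$-walk simulated is $\pi$ itself, $\B_1$ reads $w$ and ends in a final state, so $w\in\langof{\B_1}$.

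For $\langof{\B_1}\subseteq\Dclosure{\langof{\A}}$, fix an accepting walk $\rho$ of $\B_1$ reading $w$ and let $\pi'$ be the walk of $\A$ formed by the transitions of types \labelcref{edges:b1:ord} and \labelcref{edges:b1:cycle}: each matched push/pop pair at a common stack level bounds a subwalk of $\A$ from a state back to itself, so $\pi'$ is a genuine $q_0$-to-final walk and reads $w$. Each such subwalk corresponds to a tree vertex $v$ carrying a simple cycle $\gamma(v)$, and the value recorded for $v$---added to the precise counter by \labelcref{edges:b1:precise}, or to $S$ and $T$ by \labelcref{edges:b1:obligation}---is exactly the effect of $\gamma(v)$, since descending into children never modifies $v$'s own stack cell. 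Hence the effect of $\pi'$ is $e_u$ plus the sum of the recorded effects of all cycles; by \labelcref{counter:acceptance:precise}, $e_u$ plus the recorded effects of the precise cycles is $0$, so the effect of $\pi'$ equals the sum $E$ of the recorded effects of the obligation cycles. Now I would use \labelcref{counter:acceptance:cancellable}: with $S=\{u_1,\dots,u_s\}$, $T=\{v_1,\dots,v_t\}$, pick $x_i\ge1$, $y_j\ge0$ with $\sum_i x_iu_i+\sum_j y_jv_j=0$; letting $m_i$ be the number of obligation cycles of effect $u_i$ (so $E=\sum_i m_iu_i$), fix $c$ with $cx_i\ge m_i$ for all $i$. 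Reinserting into $\pi'$, at occurrences of the relevant base states (which all occur in $\pi'$), $cx_i-m_i\ge0$ further copies of a simple cycle of effect $u_i$ for each $i$ and $cy_j\ge0$ copies of one of effect $v_j$ for each $j$---and using that $\A$ is blind, so no transition becomes illegal---yields a walk of $\A$ from $q_0$ to a final state whose effect is $E+\bigl(c\sum_i x_iu_i+c\sum_j y_jv_j\bigr)-\sum_i m_iu_i=0$. It is therefore accepting and reads a superword of $w$, so $w\in\Dclosure{\langof{\A}}$.

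The routine ingredient is the decomposition of an arbitrary accepting walk into a short base walk plus insertion trees of height $\le n$, which is precisely the purpose of \cref{trees}. I expect the main obstacle to be the first inclusion: one must classify the cycles of $\pi$ as precise or obligation so that the bounded exact counter both reaches $0$ and never overflows $[-B,B]^k$---this is exactly what \cref{pottier} delivers, and the value of $B$ is chosen to absorb the resulting bound $(3n)^{(k+1)^2}$ on the number of precise cycles.
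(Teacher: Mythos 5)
Your proof is correct and takes essentially the same route as the paper's: the first inclusion via the insertion-tree decomposition of \cref{trees} together with the corollary of \cref{pottier} to select which cycles are precise so that the exact counter reaches $0$ without leaving $[-B,B]^k$, and the second by projecting the $\B_1$-walk to an $\A$-walk and, using blindness and cancellability, pumping in additional copies of cycles to cancel the obligation effects. The only cosmetic difference is that the paper organizes the reverse direction as an induction on the maximal stack height, but the underlying argument is identical.
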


We prove \cref{counter:b1} in the following two lemmas.
\begin{lem}
$\langof{\A}\subseteq\langof{\B_1}$.
\end{lem}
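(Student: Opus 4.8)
The plan is to start from an arbitrary accepting walk $\rho$ of $\A$ on an input word $w$ and to exhibit an accepting walk of $\B_1$ reading the same $w$; since $\B_1$ is nondeterministic, producing one such walk suffices. First I would decompose $\rho$ in the standard way: by iterated removal of cycles, write $\rho=w_0e_1w_1\cdots e_\ell w_\ell$, where $u=e_1\cdots e_\ell$ is an acyclic walk from $q_0$ to a final state of $\A$ (so $\ell<n$) and, writing $q_0,q_1,\dots,q_\ell$ for the states visited by $u$, each $w_i$ is a $q_i$-cycle. Factoring each $w_i$ into prime cycles and invoking \cref{trees}, every such prime cycle $c$ equals $\alpha(t_c)$ for an insertion tree $t_c$ of height at most $n$. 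The walk of $\B_1$ then simulates $u$ by transitions of type \eqref{edges:b1:ord} and simulates each prime cycle by a depth-first traversal of its insertion tree: at every vertex it pushes a frame with \eqref{edges:b1:push}, runs the associated simple cycle with \eqref{edges:b1:cycle} while recursing into the children, and finally pops the frame with \eqref{edges:b1:precise} or \eqref{edges:b1:obligation}. Since the trees have height at most $n$ the traversal fits into the stack component $\Gamma^{\le n}$; since every simple cycle has length at most $n$ the frame contents stay in $[-n,n]^k$; and since the cycles are reinserted at their original positions and in their original order, the input read is exactly $w$.

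The heart of the argument is to decide, for each simple cycle occurring in one of the insertion trees, whether $\B_1$ pops it as a \emph{precise} or as an \emph{obligation} cycle. Let $g_1,\dots,g_M\in[-n,n]^k$ be the effects of all these simple cycles and let $d_u$ be the effect of $u$; as $\rho$ has effect $0$ we get $d_u+g_1+\cdots+g_M=0$, and every entry of $d_u$ has absolute value at most $\ell<n$. Let $h_1,\dots,h_p$ be the distinct values among the $g_m$, so $p\le(2n+1)^k$, with multiplicities $\mu_i\ge 1$, so that $\mu_1h_1+\cdots+\mu_ph_p=-d_u$. Here I would apply \cref{pottier} in the inhomogeneous form stated right after it — any $x\in\N^m$ with $Ax=b$ dominates some $y\le x$ with $Ay=b$ and $\|y\|_1\le(1+\|(A|-b)\|_{1,\infty})^{r+1}$, where $r$ is the rank of $A$ — to $A=(h_1|\cdots|h_p)$, $b=-d_u$, and the solution $x=(\mu_1,\dots,\mu_p)$. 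This yields $y=(y_1,\dots,y_p)$ with $0\le y_i\le\mu_i$, with $y_1h_1+\cdots+y_ph_p=-d_u$, and, since $r\le k$ and $\|(A|d_u)\|_{1,\infty}\le n(p+1)\le n((2n+1)^k+1)$, a routine estimate gives $\|y\|_1\le(3n)^{(k+1)^2}$. Now for each $i$ I declare exactly $y_i$ of the simple cycles with effect $h_i$ to be precise cycles, and all remaining simple cycles to be obligation cycles.

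It remains to verify that the resulting walk of $\B_1$ is accepting. The precise counter changes only through base steps and pops of precise cycles, so at every moment it equals a prefix sum of $u$ (all entries of absolute value $<n$) plus a sum of at most $\|y\|_1$ of the $h_i$'s (all entries of absolute value $\le n\|y\|_1\le n(3n)^{(k+1)^2}=B-n$); hence it always lies in $[-B,B]^k$, so the side conditions of \eqref{edges:b1:ord} and \eqref{edges:b1:precise} are met, and in the end it equals $d_u+(y_1h_1+\cdots+y_ph_p)=0$. Moreover, in the end the set $T$ of repeatable effects is $\{h_1,\dots,h_p\}$ and the set $S$ of obligation effects is $\{h_i:y_i<\mu_i\}$; putting the coefficient $\mu_i-y_i>0$ on each $h_i\in S$ and the coefficient $0$ on every element of $T$ gives $\sum_{i}(\mu_i-y_i)h_i=-d_u-(-d_u)=0$, so $(S,T)$ is cancellable. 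Therefore the constructed walk ends in the final state $(q_\ell,\eword,0,S,T)$ of $\B_1$ and reads $w$, whence $w\in\langof{\B_1}$.

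I expect the main obstacle to be the decomposition bookkeeping rather than any single estimate: one must ensure that the cycles removed from $\rho$ can be organized so that at every instant only a single insertion tree (of height at most $n$) sits on the stack, and that reassembling $u$ together with all the insertion trees reproduces both the input $w$ and the identity $d_u+g_1+\cdots+g_M=0$ exactly. Once this is in place, what remains is the Pottier estimate above and a mechanical check of the transition rules.
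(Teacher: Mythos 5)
Your proposal is correct and follows essentially the same route as the paper's proof: decompose the accepting walk into a short skeleton plus prime cycles, use \cref{trees} to get insertion trees of height at most $n$, simulate via the stack, and apply \cref{pottier} (in its inhomogeneous form) to select which cycles become precise so that the precise counter stays in $[-B,B]^k$ and ends at zero while the leftover multiplicities witness cancellability of $(S,T)$. The only differences are cosmetic bookkeeping in the decomposition and in how the final coefficients for cancellability are written down.
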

\begin{proof}
Let $w\in \Delta^*$ be an accepting walk of $\A$. We can write $w=u_0 v_1 u_1 \cdots
v_\ell u_\ell$ such that $|u_0\cdots u_\ell|\le n$ and every $v_i$ is a prime
cycle. For each $v_i$, \cref{trees} yields an insertion tree $t_i$ of height
at most $n$.  In $\B_1$, we simulate $u_0\cdots u_\ell$ by transitions of type
\labelcref{edges:b1:ord}.  When we arrive at a prime cycle $v_i$, we traverse
the tree $t_i$: When at the current state a subtree is attached in $t_i$, we
use a transition of type \labelcref{edges:b1:push}. When we arrive at the state
where our current cycle has started, we use either \labelcref{edges:b1:precise}
or \labelcref{edges:b1:obligation} to use the cycle as a precise cycle or as an
obligation cycle, respectively. During a cycle, we use transitions
\labelcref{edges:b1:cycle}.

It remains to be shown that there exists a choice of cycles as `precise' or
`obligation' to obtain an accepting run of $\B_1$, i.e. the capacity in the
factor $[-B,B]^k$ is not exceeded and the sets $S$ and $T$ in the factors
$\Powerset{[-n,n]^k}$ form a cancellable $(S,T)$. To this end, we apply
\cref{pottier}. Let $e_1,\ldots,e_m\in\Z^k$ be the different effects (in any
order) of the (simple) cycles in all the insertion trees $t_i$,
$i\in[1,\ell]$.  Being effects of simple cycles, they are even contained in
$[-n,n]^k$.  For each $i\in[1,m]$, let $x_i$ be the number of times $e_i$
occurs as an effect of a cycle.  Let $e$ be the effect of the walk $u_0\cdots
u_\ell$. Then we have $e\in[-n,n]^k$.  Since the walk $w=u_0 v_1 u_1 \cdots
v_\ell u_\ell$ is accepting in $\A$, we have $e+\sum_{i=1}^m x_ie_i=0$. 

Consider the matrix $A\in\Z^{k\times m}$ with columns $e_1,\ldots,e_m$.  Then the vector
$x=(x_1,\ldots,x_m)$ satisfies $Ax=-e$. Since the $e_i$ are pairwise distinct and
members of $[-n,n]^k$, we have $m\le (2n+1)^k$. This yields
$\|(A|e)\|_{1,\infty}\le (m+1)n$. Moreover, $A$ has rank at
most $k$.  By \cref{pottier}, there exists a $y\in\N^m$ with $Ay=-e$, $y\le x$,
and 
\[
\|y\|_1\le (1+(m+1)n)^{k+1}=(mn+n+1)^{k+1}\le ((2n+1)^{k+1})^{k+1}\le (3n)^{(k+1)^2}.
\]
We can therefore choose for each $i\in[1,m]$, $y_i$ of the $x_i$ cycles with
effect $e_i$ and use them as precise cycles. Then, in the end, we arrive at a
state $(q,\eword,v,S,T)$ with $v=0$. Since we used at most $\|y\|_1$ precise
cycles and at most $n$ transitions in the walk $u_1\cdots u_\ell$, the counter
values encountered during the computation are bounded in absolute value by
$n+n\cdot (3n)^{(k+1)^2}=B$.

Observe that we have $T=\{e_1,\ldots,e_m\}$. Consider $z\in\N^m$ with $z=x-y$.
By our choice of precise cycles, $S=\{e_i \mid z_i>0\}$. Therefore, since
$Az=0$, the pair $(S,T)$ is cancellable. Hence, we have reached a final state of
$\B_1$ and read the same word as $w$. 
\end{proof}

\begin{lem}
$\langof{\B_1}\subseteq \Dclosure{\langof{\A}}$.
\end{lem}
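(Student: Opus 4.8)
The plan is to turn an accepting run of $\B_1$ on a word $w$ into an accepting walk of $\A$ whose input is a superword of $w$. Fix such a run $\rho$. Only transitions of types \labelcref{edges:b1:ord} and \labelcref{edges:b1:cycle} carry a transition of $\A$; reading these off in order yields a walk $\pi$ of $\A$ that starts in $q_0$ (the first component of the initial state of $\B_1$) and ends in a state $q$ that is final in $\A$, because an accepting state of $\B_1$ has first component final in $\A$ and empty stack. Since the push and pop transitions \labelcref{edges:b1:push}, \labelcref{edges:b1:precise}, \labelcref{edges:b1:obligation} read $\eword$, the input of $\pi$ is exactly $w$.

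Next I would analyse the cycle structure recorded by the stack. Consider a maximal block of $\rho$ between a push transition \labelcref{edges:b1:push} performed at a state $p$ and the matching pop transition \labelcref{edges:b1:precise} or \labelcref{edges:b1:obligation}: at the pop, the top stack entry is $(p,u)$ and the first component is again $p$. Between these, whenever this entry is topmost the run performs transitions of type \labelcref{edges:b1:cycle}, and pushing/popping deeper entries does not change the first component; hence the $\A$-transitions performed while $(p,\cdot)$ is topmost form a genuine $p$-cycle of $\A$ of effect $u$, and $p$ is visited by $\pi$. Each transition of type \labelcref{edges:b1:cycle} in $\rho$ belongs to exactly one such cycle, which is popped either as a \emph{precise} cycle (via \labelcref{edges:b1:precise}) or as an \emph{obligation} cycle (via \labelcref{edges:b1:obligation}).

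Now the effect bookkeeping. The precise counter ends at $0$ by the acceptance condition, and it is changed only by \labelcref{edges:b1:ord} (by the transition effect) and by \labelcref{edges:b1:precise} (by the recorded effect $u$ of the popped cycle); hence the sum of the base-transition effects equals minus the sum of the recorded effects of the precise cycles. Therefore the effect of $\pi$ equals the sum of the recorded effects of the obligation cycles, that is $\sum_{i=1}^s c_i u_i$, where $S=\{u_1,\dots,u_s\}$ is the set reached in the fourth component and $c_i\ge 1$ is the number of obligation cycles of recorded effect $u_i$. Let $T=\{v_1,\dots,v_t\}$ be the set reached in the last component; for each $v_j$ there is a cycle visited by $\pi$ of effect $v_j$. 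By the acceptance condition $(S,T)$ is cancellable, so there are $x_1,\dots,x_s\in\N\setminus\{0\}$ and $y_1,\dots,y_t\in\N$ with $\sum_i x_i u_i+\sum_j y_j v_j=0$. Pick $N\in\N$ with $N x_i\ge c_i$ for every $i$ (possible since $x_i\ge 1$), and build $\pi'$ from $\pi$ by inserting, for each $i$, $N x_i-c_i$ further copies of a $u_i$-cycle at an occurrence of its base state, and, for each $j$, $N y_j$ further copies of a $v_j$-cycle. Then $\pi'$ is again a walk of $\A$ from $q_0$ to $q$, its input is a superword of $w$, and its effect is $N\bigl(\sum_i x_i u_i+\sum_j y_j v_j\bigr)=0$; thus $\pi'$ is accepting and $w\preceq\iota(\pi')\in\langof{\A}$, i.e.\ $w\in\Dclosure{\langof{\A}}$.

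I expect the main obstacle to be the second step: making the correspondence between the stack discipline of $\B_1$ and the cycle structure of $\pi$ fully precise, in particular justifying that the $\A$-transitions carried directly by one stack entry form an honest cycle of $\A$ of effect equal to the recorded vector, even in the presence of arbitrarily deep nesting of further cycles inside it. Once this is clean, the counter accounting and the use of the cancellability condition are routine.
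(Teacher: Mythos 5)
Your proof is correct and follows essentially the same route as the paper's: unfold the $\B_1$-run into a walk of $\A$, recognize the vectors recorded on the stack as effects of genuine cycles of $\A$ based at states visited by that walk, and use cancellability of $(S,T)$ to insert enough extra copies of obligation and repeatable cycles to bring the total effect to zero. The only difference is presentational — the paper packages the stack-discipline argument you flag as the delicate step into an induction on the maximal stack height, proving that for every $y\ge x$ one gets a walk of effect $v+Ay$, whereas you extract the walk directly and then insert cycles.
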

\begin{proof}
Consider a walk $w$ in $\B_1$ from $(p,\eword,0,S,T)$ to $(q,\eword,v,S',T')$.
Let $S'\setminus S=\{e_1,\ldots,e_m\}$ and $(S'\setminus S)\cup (T'\setminus
T)=\{e_1,\ldots,e_{m+\ell}\}$.  Let $A\in \Z^{(m+\ell)\times k}$ be the matrix
with columns $e_1,\ldots,e_{m+\ell}$.  Moreover, for each $i\in[1,m]$, let
$x_i\in\N\setminus\{0\}$ be the number of times a cycle with effect $e_i$ was
used as an obligation cycle. Let $x\in\N^{m+\ell}$ be the vector
$x=(x_1,\ldots,x_{m+\ell})$ where $x_{m+i}=0$ for $i\in[1,\ell]$.

It is easy to show by induction on the maximal stack height in
$w$ that for every $y\in\N^{m+\ell}$ with $y\ge x$, there exists a walk $w'$ in
$\A$ from $(p,0)$ to $(q,v+Ay)$ such that $w'$ reads a superword of the input
of $w$: We execute all the obligation cycles as normal cycles in $\A$, which
means adding the effect $Ax$. Then, for each effect $e_i$, we execute some
cycle with effect $e_i$ an additional $y_i-x_i$ times. In total, we add $v+Ay$
to the counter in $\A$. 

Now suppose $w$ is an accepting walk. Then $S=T=\emptyset$, the pair $(S',T')$
is cancellable, and $v=0$. Since $(S',T')$ is cancellable, there is a
$z\in\N^{m+\ell}$ with $z_i\ge 1$ for $i\in[1,m]$ such that $Az=0$. Since
$x_{m+i}=0$ for $i\in[1,\ell]$, we can find a number $M\in\N\setminus\{0\}$
such that $Mz\ge x$. We set $y=Mz$ and since then $y\ge x$, we may apply our
observation above to this $y$.  This yields a walk $w'$ in $\A$ from $(p,0)$ to
$(q,v+Ay)=(q,0+MAz)=(q,0)$ such that $w'$ reads a superword of the word read by
$w$. This means, $w'$ is accepting, so that the word read by $w$ is contained
in $\Dclosure{\langof{\A}}$.
\end{proof}

\begin{proof}[Proof of \cref{span}]
We may clearly assume that  $|S_2\setminus S_1|=1$.
Hence, let $(S_1,T)$ be cancellable, $S_1=\{u_1,\ldots,u_s\}$, and $S_2=\{u_1,\ldots,u_{s+1}\}$.
Since $\spanof{S_1}=\spanof{S_2}$, there are $z_1,\ldots,z_{s+1}\in\Q$ with $z_{s+1}\ne 0$ and
$\sum_{i=1}^{s+1} z_iu_i=0$. By multiplying with a common denominator and,
if necessary, switching the sign of the $z_1,\ldots,z_{s+1}$, we may assume that
$z_1,\ldots,z_s\in\Z$ and $z_{s+1}\in\N\setminus\{0\}$.

Let $T=\{v_1,\ldots,v_t\}$. Since $(S_1,T)$ is cancellable, there are are
$x_1,\ldots,x_s\in\N\setminus\{0\}$ and $y_1,\ldots,y_t\in\N$ with
\[ x_1u_1+\cdots +x_su_s ~~+~~y_1v_1+\cdots+y_tv_t=0. \]
Since $x_i\ge 1$ for $i\in[1,s]$, we can find $M\in\N\setminus\{0\}$ with
$M\cdot x_i > -z_i$ for every $i\in[1,s]$. Then, since $\sum_{i=1}^{s+1} z_iu_i=0$, we have
\begin{align*}
0 &= M\left(\sum_{i=1}^s x_i u_i+\sum_{i=1}^t y_iv_i\right) + \sum_{i=1}^{s+1} z_i u_i = \sum_{i=1}^s (Mx_i+z_i)u_i + z_{s+1}u_{s+1} + \sum_{i=1}^t (My_i)v_i
\end{align*}
Since $M x_i+z_i\in\N\setminus\{0\}$ for $i\in[1,s]$ and
$z_{s+1}\in\N\setminus\{0\}$, this proves that $(S_2,T)$ is cancellable.
\end{proof}

\begin{proof}[Proof of \cref{caratheodory}]
Let $S=\{u_1,\ldots,u_s\}$ and choose $T'\subseteq T$ minimal with the property
that $(S,T')$ is cancellable.  Let $T'=\{v_1,\ldots,v_t\}$. Then there are
$x_1,\ldots,x_s\in\N\setminus\{0\}$ and $y_1,\ldots,y_t\in\N$ with
$\sum_{i=1}^s x_iu_i + \sum_{i=1}^t y_iv_i=0$. By minimality of $T'$, we have
$y_i>0$ for every $i\in[1,t]$.  Suppose $T'$ is linearly dependent. Then there
are $z_1,\ldots,z_t\in\Z$, not all zero, such that $\sum_{i=1}^t z_iv_i=0$. We
may assume that at least one $z_i$ is positive, because otherwise they are all
at most zero and we can negate them.

Choose $j\in[1,t]$ such that $z_j/y_j$ is maximal, meaning $z_j/y_j\ge z_i/y_i$
for every $i\in[1,t]$. Note that then $z_j>0$ because otherwise, $z_i\le 0$ for
every $i\in[1,t]$. Then we have $z_jy_i\ge y_jz_i$ for every $i\in[1,t]$ and
hence
\begin{align*}
0 = z_j\left(\sum_{i=1}^s x_iu_i+\sum_{i=1}^t y_iv_i\right)-y_j\left(\sum_{i=1}^t z_iv_i\right)= \sum_{i=1}^s (z_jx_i)u_i + \sum_{i=1}^t \underbrace{(z_jy_i-y_jz_i)}_{\ge 0}v_i.
\end{align*}
Since $z_j x_i>0$ for $i\in[1,s]$ and we have the coefficient $z_jy_j-y_jz_j=0$
in front of $v_j$, the last equation tells us that $(S,T'\setminus\{v_j\})$ is
cancellable. This contradicts the choice of $T'$. Therefore $T'$ is linearly
independent.
\end{proof}

\begin{proof}[Proof of \cref{idealblind}]
Suppose we are given an ideal $I=Y_0^* \{x_1,\eword\} Y_1^* \cdots
\{x_\ell,\eword\} Y_\ell^*$ and a blind $k$-counter automaton $\A$ with $n$
states. By \cref{dc:blind}, we have an exponential bound upper bound $m$ on
$\IdealUpper{\langof{\A}}$.  According to \cref{idealwitness}, we have
$I\subseteq\Dclosure{\langof{\A}}$ if and only if $w:=\Word{Y_0}^m x_1
\Word{Y_1}^m \cdots x_\ell \Word{Y_\ell}^m\in \Dclosure{L}$. Now the word $w$
may be exponentially long, but since $m$ is at most exponential, we can compute
$m$ in binary representation. 

Given the polynomial-sized ideal and the binary representation of $m$, we can
construct a polynomial-sized straight-line program $\G$ for $w$: A
\emph{straight-line program} (SLP) is a context-free grammar that generates
exactly one word (see \cite{Lohrey2012} for details and a survey). $\G$ is
obtained from an SLP for the polynomial-length word $z_0 x_1 z_1 \cdots x_\ell
z_\ell$ and SLPs for the words $w_{Y_i}^m$, which in turn result from an SLP
for $\{a^m\}$. The latter is easily constructed from the binary representation
of $m$.

Therefore, it remains to be shown that the  \emph{compressed membership
problem} for blind counter automata is decidable in $\NP$. The latter asks,
given an SLP $\G$ and a blind counter automaton $\A$, whether the word
generated by $\G$ is accepted by $\A$. This can be decided by constructing
an automaton that has access to a pushdown and blind (or reversal-bounded)
counters that accepts $\langof{\G}\cap\langof{\A}$. For such automata,
the emptiness problem is in $\NP$, as shown by \citewithauthor{HagueLin2011}.
\end{proof}

\begin{proof}[Proof of \cref{idealboundrbc}]
We show that for every $u\in \langof{\A}$, there exists an ideal $I$ of length
at most $(5n)^{7(k+1)^2}$ with $u\in I\subseteq \Dclosure{L}$.

So let $w\in \Delta^*$ be an accepting walk of $\A$. We can write $w=u_0 v_1
u_1 \cdots v_\ell u_\ell$ such that $u_0,\ldots,u_\ell\in\Delta$, $|u_0\cdots u_\ell|\le n$,
and every $v_i$ is a cycle. We factorize each $v_i=v_{i,1}\cdots v_{i,k_i}$
into prime cycles $v_{i,1},\ldots,v_{i,k_i}$ and let \cref{trees} provide an
insertion tree $t_{i,j}$ of $v_{i,j}$ of height at most $n$.

Let $e_1,\ldots,e_m\in\Z^k$ be the effects of cycles occurring in any of these
trees. Note that $\|e_i\|_\infty\le n$ for $i\in[1,m]$, so that $m\le (3n)^k$.
For $i\in[1,m]$, let $x=(x_1,\ldots,x_m)\in \N^m$ be the vector such
that $x_i$ is the number of times a cycle with effect $e_i$ occurs. Moreover,
let $e\in\Z^k$ be the effect of $u_0\cdots u_\ell$. Then we have $\|e\|_\infty\le n$.
Let $A\in\Z^{k\times m}$ be
the matrix with columns $e_1,\ldots,e_m$.  Since $w$ is accepting, we have
$Ax=-e$. Note that
\[ \| (A|e) \|_{1,\infty} \le \|e\|_\infty + \sum_{i=1}^m \|e_i\|_{\infty}\le (m+1)n\le ((3n)^k + 1)n\le (4n)^{k+1}\]
and that the rank  of $(A|e)$ is at most $k$. According to \cref{pottier},
there is a $y\in\N^m$, $y\le x$, such that $Ay=-e$ and $\|y\|_1\le (1+(4n)^{k+1})^{k+1}\le (5n)^{(k+1)^2}$.

From our insertion trees, we now select for each $i\in[1,m]$, $y_i$-many
vertices whose cycles have effect $e_i$. This is possible since $y\le x$.  Let
$F$ be the set of these vertices.  Then we have
$|F|\le \|y\|_1\le (5n)^{(k+1)^2}$. For each $i\in[0,\ell]$, let $a_i\in X\cup \{\eword\}$ be the input read by $u_i$.
We claim that the language
\[ K=a_0 \PumpLang{t_{1,1}\cdots t_{1,k_i}}{F} a_1 \cdots \PumpLang{t_{\ell,1}\cdots t_{\ell,k_\ell}}{F} a_\ell \]
is contained in $\Dclosure{\langof{\A}}$. Let $z=(z_1,\ldots,z_m)\in\N^m$ be
the vector with $z=x-y$. Then $Az=0$ and every pumpable vertex has an effect
$e_i$ where $z_i\ge 1$.

Now suppose we obtain a walk $w'$ of $\A$ by performing some pumping to obtain a
word $u'\in K$, either by duplicating a single vertex or by duplicating a whole
pumpable subtree.  Note that it might happen that $w'$ does not leave the
counters at zero in the end. But we will show that we can pump even more to get
such a walk.  For each $i\in[1,m]$, let $z'_i\in\N$ be the number of times we
add an occurrence of a cycle with effect $e_i$.  Let $z'=(z'_1,\ldots,z'_m)$.
Since $z'_i\ge 1$ implies $z_i\ge 1$, we can find an $N\in\N$ with $N\cdot z\ge
z'$. Now for every $i\in[1,m]$ with $z'_i\ge 1$, we can find a pumpable vertex $v_i$ whose cycle
has effect $e_i$. We can pump $v_i$ an additional $Nz_i-z'_i$ times. This
results in a walk $w''$ of $\A$ with effect $e+Ax+ANz=e+Ax=0$, meaning that it is accepting.  Moreover, if
$u''$ is the input word read by $w''$, then we have $u'\subword
u''\in\langof{\A}$. This proves $K\subseteq\Dclosure{\langof{\A}}$, which was
our claim.

This means that the language
\[ I=\Dclosure{K}=\{ a_0, \eword\} \Dclosure{\PumpLang{t_{1,1}\cdots t_{1,k_1}}{F}} \{ a_1,\eword\} \cdots \Dclosure{\PumpLang{t_{\ell,1}\cdots t_{\ell,k_\ell}}{F}} \{ a_\ell,\eword\} \]
is contained in $\Dclosure{\langof{\A}}$. By \cref{pumpideal}, it is an ideal and satisfies
\begin{align*}
\Elen{I}&\le \ell+1+\sum_{i=1}^\ell \Elen{\Dclosure{\PumpLang{t_{i,1}\cdots t_{i,k_i}}{F}}}\le n+n^2\cdot |F|\cdot (2n+|F|)^2 \\
&\le \left(2n^2\cdot (5n)^{(k+1)^2}\right) \cdot \left(2n+(5n)^{(k+1)^2}\right)^2 \\
&\le \left((5n)^{3(k+1)^2}\right) \cdot (5n)^{4(k+1)^2}\le (5n)^{7(k+1)^2},
\end{align*}
which completes our proof.
\end{proof}

\section{Context-Free Grammars}

The following lemma remains to be shown.
\begin{lem}\label{cfg:correctness}
We have $\infsymb{1}\cdots\infsymb{n}\in \langof{\G^\omega}$ if and only if $a_1^*\cdots a_n^*\subseteq \Dclosure{\langof{\G}}$.
\end{lem}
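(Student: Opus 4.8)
The plan is to prove both implications; the forward one is an easy induction and the converse is a pumping argument.

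For ``$\infsymb{1}\cdots\infsymb{n}\in\langof{\G^\omega}\ \Rightarrow\ a_1^*\cdots a_n^*\subseteq\Dclosure{\langof{\G}}$'' I would first prove, by induction on the length of a (complete) $\G^\omega$-derivation, the stronger statement: whenever $A\grammarsteps[\G^\omega] w$ with $A$ a nonterminal and $w$ a word over $\{\infsymb{1},\ldots,\infsymb{n}\}$, then for every word $w'$ obtained from $w$ by replacing each occurrence of $\infsymb{i}$ by an arbitrary power $a_i^{m}$ ($m\ge 0$, independently chosen per occurrence) we have $A\grammarsteps[\G'] w'$. The one-step base uses that $\G^\omega$ keeps the productions $A\to\eword$ of $\G'$; for a first step $A\to BC$ one splits $w'$ and applies induction to the two subderivations; for a first step $A\to\infsymb{i}A$ (so $A\in L_i$) one uses $A\grammarsteps[\G'] a_iA$ iterated $m$ times followed by the inductive derivation of the remainder, and symmetrically for $A\to A\infsymb{i}$. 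Instantiating this with $A=S$, $w=\infsymb{1}\cdots\infsymb{n}$ and $w'=a_1^{m_1}\cdots a_n^{m_n}$ yields $a_1^{m_1}\cdots a_n^{m_n}\in\langof{\G'}=\Dclosure{\langof{\G}}$ for all $m_1,\ldots,m_n\ge 0$, which is exactly the claim.

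For the converse, first observe $\langof{\G'}=\Dclosure{\langof{\G}}\subseteq a_1^*\cdots a_n^*$, so together with the hypothesis we get $\langof{\G'}=a_1^*\cdots a_n^*$. The basic auxiliary fact is that for every nonterminal $A$ the set $\{w\mid A\grammarsteps[\G'] w\}$ of derivable terminal words is downward closed (any leaf production $C\to a_j$ may be replaced by $C\to\eword$, which $\G'$ contains). Consequently, if $A\grammarsteps[\G'] a_i^{c_1}Aa_i^{c_2}$ with $c_1+c_2\ge 1$, then deleting all but one of the emitted $a_i$'s on the nonempty side gives $A\grammarsteps[\G'] a_iA$ or $A\grammarsteps[\G'] Aa_i$, i.e.\ $A\in L_i$ or $A\in R_i$; moreover, using $\langof{\G'}\subseteq a_1^*\cdots a_n^*$, a nonterminal from which all of $a_i^m\cdots a_j^m$ (with $j>i$) are derivable cannot lie in $R_i$, so such pumps are forced onto the correct side, which is what makes the markers come out in the order $\infsymb{1},\ldots,\infsymb{n}$.

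I would then prove, by induction on $j-i$, the strengthened statement: if $a_i^m a_{i+1}^m\cdots a_j^m$ is derivable from $A$ in $\G'$ for every $m\ge 0$, then $A\grammarsteps[\G^\omega]\infsymb{i}\cdots\infsymb{j}$ (the empty range giving $\eword$, which is fine since a $\G'$-derivation of $\eword$ uses only productions retained by $\G^\omega$). In the inductive step, take $M$ huge, a $\G'$-derivation of $a_i^M\cdots a_j^M$ from $A$, and let $B$ be the least common ancestor of all leaves carrying a letter $a_\ell$ with $\ell>i$; its subtree yields $a_i^{q}a_{i+1}^M\cdots a_j^M$. Since $\G'$ has finitely many nonterminals and we have such a derivation for every $M$, some fixed $B$ occurs as this ancestor for infinitely many $M$, and downward-closedness then forces all of $a_{i+1}^m\cdots a_j^m$ to be derivable from $B$, so the induction hypothesis gives $B\grammarsteps[\G^\omega]\infsymb{i+1}\cdots\infsymb{j}$. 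It remains to reach $B$ from $A$ in $\G^\omega$ having emitted exactly $\infsymb{i}$ to its left: everything to the left of $B$'s subtree is a block of $a_i$'s of unbounded length as $M$ grows, and a pumping argument produces a nonterminal $B'\in L_i$ with $A\grammarsteps[\G'] a_i^{*}B'$ and $B'\grammarsteps[\G'] a_i^{*}B$; replacing all terminal productions by the corresponding $\eword$-productions then yields $A\grammarsteps[\G^\omega] B'\grammarsteps[\G^\omega]\infsymb{i}B'\grammarsteps[\G^\omega]\infsymb{i}B\grammarsteps[\G^\omega]\infsymb{i}\infsymb{i+1}\cdots\infsymb{j}$. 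Taking $A=S$, $i=1$, $j=n$ finishes. I expect the main obstacle to be precisely this last step: making the pumping case analysis exhaustive — in particular handling the case where the $a_i$'s in front of $B$'s subtree accumulate inside a single subtree hanging off the $A$-to-$B$ path rather than along the path itself (and the related case where $q$ stays bounded, so that one must argue the process of passing to $B$ terminates), while keeping track of which of $L_i$/$R_i$ each extracted cycle lands in.
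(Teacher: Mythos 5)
Your forward direction is fine and matches what the paper treats as immediate. For the converse, however, your induction on $j-i$ has a real hole exactly where you suspect it, and I do not think it closes without reorganizing the argument. After fixing the label $B$ of the least common ancestor, you need $A\grammarsteps[\G^\omega]\infsymb{i}B$, and for that you need (a) a nonterminal $D\in L_i\cup R_i$ witnessed in the part of the tree \emph{outside} $B$'s subtree, and (b) that inserting $D\to\infsymb{i}D$ or $D\to D\infsymb{i}$ there places the marker to the \emph{left} of $B$. Point (a) already fails in the subcase where $M-q$ stays bounded: then all but boundedly many $a_i$'s sit \emph{inside} $B$'s subtree, there is nothing to pump outside it, and your proposed remedy --- descend into $B$ and repeat --- is circular, since $B$ then satisfies exactly the same hypothesis as $A$ for the same range $[i,j]$, and the least common ancestor need not even be a proper descendant of the root (in Chomsky normal form the $a_{>i}$-leaves may already split at the top production). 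Point (b) is likewise not automatic in the other subcase: the repeated nonterminal extracted from the block of $a_i$'s may be an ancestor of the $B$-leaf, and one has to check in each configuration that the available production ($L_i$ versus $R_i$) puts $\infsymb{i}$ on the correct side of $B$. None of this is carried out, and it is precisely the load-bearing step.

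The paper's proof avoids all of this by not controlling the order during the construction at all. It takes a single tree $t_m$ with $|\yield{t_m}|_{a_i}>2^{|N|}$ for every $i$, defines $u\ll_i v$ iff $v$ is a descendant of $u$ with strictly smaller $a_i$-count, shows that chains of length $\le\ell$ force $a_i$-count $\le 2^{\ell}$, and thus finds for each $i$ a repeated nonterminal $A_i\in L_i\cup R_i$; it then inserts one marker production per $i$ and replaces every terminal leaf by $\eword$. This produces \emph{some} permutation of $\infsymb{1}\cdots\infsymb{n}$ in $\langof{\G^\omega}$. The order then comes for free from a global observation: under the morphism $\alpha$ with $\alpha(\infsymb{i})=a_i$ one has $\alpha(\langof{\G^\omega})\subseteq\langof{\G'}\subseteq a_1^*\cdots a_n^*$, and the only permutation of $a_1\cdots a_n$ lying in $a_1^*\cdots a_n^*$ is $a_1\cdots a_n$ itself. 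I recommend adopting that trick: it lets you keep your pumping extraction (essentially the paper's chain argument) while discarding the entire left/right and termination bookkeeping that is blocking your induction.
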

\begin{proof}
Suppose $a_1^*\cdots a_n^*\subseteq \Dclosure{\langof{\G}}=\langof{\G'}$. Then
there are derivation trees $t_1,t_2,\ldots$ of $\G'$ with
$|\yield{t_j}|_{a_i}\ge j$ for every $i\in[1,n]$ and $j\ge 1$.

On the vertices of $t_j$, we define partial orders $\ll_i$ as follows. We have
$u\ll_i v$ if $v$ is a descendant of $u$ and
$|\yield{u}|_{a_i}>|\yield{v}|_{a_i}$. By induction on $\ell$, it is easy to
check that if all $\ll_i$-chains in $t_j$ have length $\le\ell$, then
$|\yield{t_j}|_{a_i}\le 2^\ell$. Hence, if $m>2^{|N|}$, then
$|\yield{t_m}|_{a_i}> 2^{|N|}$, so that $t_m$ must have a $\ll_i$-chain of
length $>|N|$.  On this chain, some $A_i\in N$ has to repeat, meaning $A_i\in
L_i\cup R_i$.  We can therefore expand $t_m$ by applying for each $i\in[1,n]$
the production $A_i\to \infsymb{i}A_i$ or $A_i\to A_i\infsymb{i}$. Then, we
replace every $a_i$-leaf by $\eword$. By construction, the resulting tree  $t$
is a derivation tree of $\G^\omega$ and every $\infsymb{i}$ appears exactly
once. Hence, $\yield{t}$ is a permutation of $\infsymb{1}\cdots\infsymb{n}$.
It remains to be shown that $\yield{t}=\infsymb{1}\cdots\infsymb{n}$.

Consider the morphism $\alpha\colon
\{\infsymb{1},\ldots,\infsymb{n}\}^*\to\{a_1,\ldots,a_n\}^*$ such that for
every $i\in[1,n]$, we have $\alpha(\infsymb{i})=a_i$. Recall that for every
production $A\to\infsymb{i}A$ or $A\to A\infsymb{i}$ in $\G^\omega$, we have
$A\grammarsteps[\G'] a_iA$ or $A\grammarsteps[\G'] Aa_i$, respectively.  This
tells us that $\alpha(\langof{\G^\omega})\subseteq \langof{\G'}\subseteq
a_1^*\cdots a_n^*$. Therefore, $\alpha(\yield{t})=a_1\cdots a_n$ and hence
$\yield{t}=\infsymb{1}\cdots \infsymb{n}$.
\end{proof}

\section{Hardness}

\begin{proof}[Proof of \cref{generichardness}]
We actually prove a stronger statement, namely that the following problem is hard for $\coNTIME{t}$:
\begin{description}
\item[Given:] A description in $\mM$ of the language $X^{\le m}$ and a
description in $\mN$ of a language $L\subseteq X^{\le m}$, where $X$ is an
alphabet and $m\in\N$.
\item[Question:] Does $\Dclosure{X^{\le m}}\subseteq \Dclosure{L}$ hold?
\end{description}
This is clearly an instance of both $\IncProb{\mM}{\mN}$ and of
$\EqProb{\mM}{\mN}$.  If we show that already this special case is hard, then
so is the case of binary alphabets: Suppose $X=\{a_1,\ldots,a_k\}$ and let
$\gamma\colon X^*\to \{a,b\}$ be the morphism with $\gamma(a_i)=a^ib^{k-i}$ for
$i\in[1,k]$. Then clearly $\Dclosure{\gamma(X^{\le m})}\subseteq
\Dclosure{\gamma(L)}$ if and only if $\Dclosure{X^{\le m}}\subseteq
\Dclosure{L}$. Hence, we only show hardness for the problem above.

Let $K\subseteq Y^*$ belong to $\coNTIME{t}$. Then there is a
$t(n^c)$-time-bounded ($c\ge 1$) Turing machine $M$ with one tape, tape
alphabet $Z\supseteq Y$ (which includes the blank symbol), and state set $Q$
that accepts the complement of $K$.

Our goal is to construct the language $L\subseteq X^{\le m}$ in such a way that
the words in $L$ of length $m$ are precisely those words that do not encode an
accepting computation of $M$. Here, $m$ will be chosen so that if $M$ has an
accepting computation, it is encoded by a word of length $m$.  Then, we will
have $\Dclosure{X^{\le m}}\subseteq \Dclosure{L}$ if and only if $M$ does not
accept the given input word. Our first task is to find a suitable $m$.

Observe that a monotone function $h\colon \N\to\N$ is amplifying if and only if $h(n)\ge
n$ for $n\ge 0$ and there is a $d\ge 1$ such that $h(n^d)\ge h(n)^2$ for large
enough $n$.  Let $g\colon\N\to\N$ be defined as $g(n)=t(n^c)$. Since $t$ is
amplifying, $g$ is as well: for some constant $d$, we have $g(n^d)=t(n^{cd})\ge
t(n^c)^2=g(n)^2$ for large enough $n\in\N$. Since $g$ is amplifying, there is a
constant $e\ge 1$ such that $g(n^e)\ge g(n)\cdot (g(n)+2)$ for all
$n\ge n_0$. We define $f(n)=g(n^e)$.

With these choices, we have: $M$ is time bounded by $g$, the models $\mM$ and
$\mN$ are $\LowerT{g}$ and $\LowerT{f}$, and $f(n)\ge g(n)\cdot(g(n)+2)$ for
$n\ge n_0$.

Now fix $w\in Y^*$ and let $n=|w|$. For the reduction, it means no loss of
generality to assume $n\ge n_0$. We choose $m=f(n)+g(n)+3$. We encode a configuration
of $M$ by a word $uqv$, where $u,v\in Z^*$, $q\in Q$, and $|uv|=g(n)$ (recall
that $M$ is $g$-time-bounded and hence $g$-space-bounded). It means that $M$ is
in state $q$ and its head is at the first position of $v$.  A computation is
then encoded as a word $\#u_1\#\cdots u_k\#u_{k+1}$, where $u_1,\ldots,u_k$
encode the configurations of the computation (in this order) and $u_{k+1}$ is
any suffix in $Z^*$. Since $m=f(n)+g(n)+3\ge g(n)\cdot (g(n)+3)$ and $M$ is
$g$-time-bounded, all computations have encodings where $|\#u_1\cdots
\#u_{k+1}|=m$.

Since $\mM$ and $\mN$ are $\LowerT{g}$ and $\LowerT{f}$, we can construct for
each model finite languages whose longest word has length $g(n)$ or $f(n)$,
respectively. By applying a homomorphism and taking the downward closure, we
can thus construct descriptions of $\Dclosure{\{a^{g(n)}\}}$ and of
$\Dclosure{\{a^{f(n)}\}}$ in each of the models, in polynomial time.
Let $X=Z\cup Q\cup \{\#\}$.
Using rational transductions and simple substitutions, we get
$X^{\le m}=\Dclosure{X^{f(n)+g(n)+3}}$ in $\mM$ and
\[ L_1=\Dclosure{\{a^{i}\#a^{g(n)+1}\#a^{f(n)-i}\mid i\in[0,f(n)]\}} \]
 in $\mN$. Note that $L_1\subseteq \{a,\#\}^{\le m}$.

In the rest of the proof, we construct a (polynomial-sized) rational
transduction $T$ such that $TL_1\subseteq X^{\le m}$ and $(TL_1)\cap X^m$
contains precisely those words that do not encode a computation of $M$ that
accepts $w$. Then, we have clearly shown that the problem described at the
beginning of the proof is hard for $\coNTIME{t}$.

A word $u\in X^*$ of length $m$ can fail to be an accepting computation for $w$
for the following reasons. We decompose $u=u_0\#u_1\#\cdots\#u_{k+1}$.
\begin{enumerate}
\item\label{wrong:firsthash} It does not begin with $\#$, i.e. $u_0\ne\eword$.
\item\label{wrong:tooclose} Two $\#$'s are less than $g(n)+1$ positions apart.
\item\label{wrong:toofar} Two $\#$'s are more than $g(n)+1$ positions apart (without a $\#$ in between).
\item\label{wrong:noconf} Some $u_i$ is not contained in $Z^*QZ^*$.
\item\label{wrong:noinitial} The first configuration $u_1$ is not an initial configuration with input $w$.
\item\label{wrong:notaccepting} The last configuration, i.e. $u_k$, is not accepting.
\item\label{wrong:following} For some $i\in[1,k-1]$, the configuration $u_i$ cannot reach $u_{i+1}$ in one step.
\end{enumerate}
For each of the cases
\labelcref{wrong:firsthash,wrong:tooclose,wrong:toofar,wrong:noconf,wrong:noinitial,wrong:notaccepting,wrong:following},
we shall explain how to obtain a transduction that generates those words from
$L_1$.  If we then have rational transductions $T_1,\ldots,T_7$, we take the
rational transduction $T=T_1\cup\cdots\cup T_7$, which is clearly as desired
above.

Note that the cases
\labelcref{wrong:firsthash,wrong:noconf,wrong:noinitial,wrong:notaccepting} are trivial, so
we consider cases \labelcref{wrong:tooclose,wrong:toofar,wrong:following}. For
\labelcref{wrong:tooclose}, notice that with a constant-sized rational transduction $R_<$, one can obtain
\[ P_< = \Dclosure{\{ a^i \# a^\ell \# a^{g(n)+1-\ell}a^{f(n)-i} \mid i\in[0,f(n)],~ \ell\in[0,g(n)+1] \}}. \]
as $R_<L_1$. Indeed, $R_<$ reads a word from $L_1$ and outputs every letter as read, up to the first $\#$.
Then, before it sees the second $\#$ in the input, it nondeterministically chooses a time to output $\#$ early.
Then, it reads the rest of the input and outputs $a$ for each input letter, be it $a$ or $\#$. Using a similar
strategy, one can obtain
\[ P_> = \Dclosure{\{ a^i \# a^{g(n)+1+\ell} \# a^{f(n)-i-\ell} \mid i\in[0,f(n)],~ \ell\in[0,f(n)-i] \}} \]
using a constant-sized rational transduction $R_>$. Now from $P_<$ and $P_>$, it
is easy to obtain all words of case \labelcref{wrong:tooclose} and
\labelcref{wrong:toofar}, respectively.

The case \labelcref{wrong:following} is also not hard to realize with $L_1$ as
input. We only have to make sure that either the immediate surrounding of the
head is not updated properly or the rest of the tape is not copied correctly.
For words of length $m$ (and those are the only ones where we must produce an
incorrect encoding), the input language $L_1$ gives us, with the two $\#$'s,
two pointers that are precisely $g(n)+1$ positions apart. We can therefore
guarantee that at least one of these errors is present. The details are
very straightforward.
\end{proof}

\begin{proof}[Proof of \cref{expmodels}]
According to \cref{generichardness}, it suffices to show that each
$\mM\in\{\fCFG,\fRBC\}$ is $\LowerT{2^{n}}$.

For $\fCFG$, we can take the well-known grammar with nonterminals
$A_0,\ldots,A_{n}$, start symbol $A_{n}$, and productions $A_i\to
A_{i-1}A_{i-1}$ for $i\in [1,n]$, and $A_0\to a$. It clearly generates
$\{a^{2^n}\}$.

For $\fRBC$, we use a blind $(n+1)$-counter automaton.  We increment
the first counter once and then, for each $i=1,\ldots,n$, we count down counter $i$ and
simultaneously count up counter $i+1$ at twice the speed.  After these $n$
phases, counter $n+1$ contains the value $2^n$. Then, we count down counter
$n+1$ and each time read an $a$. Hence, we accept $\{a^{2^n}\}$.
\end{proof}

\begin{proof}[Proof of \cref{phpitwohardness}]
The \emph{generalized subset sum problem} is the following:
\begin{description}
\item[Given:] Two vectors $u,v\in\N^n$ and $t\in\N$, encoded in binary.
\item[Question:] Is it true that for every $x\in\{0,1\}^n$, there exists a $y\in\{0,1\}^n$ that satisfies
$\langle u,x\rangle + \langle v,y\rangle = t$?
\end{description}
Here, $\langle w,z\rangle$ denotes the scalar product of $w,z\in\Z^n$. This
problem is known to be $\PHPitwo$-complete~\cite{BeKaLaPlRy1997}.

We identify vectors over $\{0,1\}$ of length $n$ with words over $\{0,1\}$ of
length $n$.  Let $u,v\in\N^n$ and $t\in\N$ be an instance of the generalized
subset sum problem and suppose each entry of $u$ and $v$ is encoded with $k$
bits.  Like in \cref{expmodels}, we can easily construct an RBCA $\A$ with $3k$
counters that accepts $\{x\in\{0,1\}^n \mid \exists y\in\{0,1\}^n\colon \langle
u,x\rangle+\langle v,y\rangle=t\}$: As it reads $x$, it uses counters $1,\ldots,k$ to
build up $\langle u,x\rangle$ in counter $k$.  Then, it guesses $y$ bit-by-bit
while using counters $k+1,\ldots,2k$ to build up $\langle v,y\rangle$ in
counter $2k$.  Afterwards, it accumulates $t$ in counter $3k$ using counters
$2k+1,\ldots,3k$. Finally, it counts down counter $3k$ one-by-one and in each
step, decrements counter $k$ or $2k$. In the end, all counters are zero if and
only if $\langle u,x\rangle+\langle v,y\rangle=t$.

Let $\B$ be the obvious $(n+1)$-state NFA that accepts $\{0,1\}^n$. Then we clearly
have $\Dclosure{\langof{\B}}\subseteq\Dclosure{\langof{\A}}$ if and only if our
instance of the generalized subset sum problem is positive.
\end{proof}

\end{document}